\newcommand{\symdiff}{\Delta}
\newcommand{\naturals}{\mathbb{N}}
\newcommand{\rationals}{\mathbb{Q}}
\newcommand{\reals}{\mathbb{R}}
\newcommand{\symcirc}{\mathrm{s}}
\newcommand{\symlp}{\mathrm{lp}}
\newcommand{\vars}{\mathrm{w}}
\newcommand{\sizevars}{\mathrm{sw}}
\DeclareMathOperator{\id}{id}
\DeclareMathOperator{\ext}{ext}
\DeclareMathOperator{\ar}{ar}
\DeclareMathOperator{\disjointunion}{{\stackrel{.}{\cup}}}
\newcommand{\Astruct}{\mathbb{A}}
\newcommand{\avector}{\mathbf{a}}
\newcommand{\bvector}{\mathbf{b}}
\newcommand{\xvector}{\mathbf{x}}
\newcommand{\yvector}{\mathbf{y}}
\newcommand{\zvector}{\mathbf{z}}
\newcommand{\svector}{\mathbf{s}}
\newcommand{\transpose}{\mathrm{T}}
\newcommand{\Amatrix}{\mathbf{A}}
\newcommand{\FPC}{\mathrm{FPC}}
\newcommand{\CLogic}{\mathrm{C}}
\newcommand{\FOC}{\mathrm{FOC}}
\newcommand{\FO}{\mathrm{FO}}
\newcommand{\Sym}{\mathrm{Sym}}
\newcommand{\Alt}{\mathrm{Alt}}
\newcommand{\Stab}{\ensuremath{\mathrm{St}}}
\newtheorem{lemma}{Lemma}
\newtheorem{corollary}{Corollary}
\newtheorem{theorem}{Theorem}
\begin{document}

\title{{\bf On the Power of Symmetric Linear Programs}}

\author{Albert Atserias$^1$ \;\;\; Anuj Dawar$^2$ \;\;\; Joanna Ochremiak$^2$ \\
\;\\
Universitat Polit\`ecnica de Catalunya$^1$ \\
University of Cambridge$^2$ \\
\;\\
}

\maketitle

\begin{abstract}

We consider families of symmetric linear programs (LPs) that decide a
property of graphs (or other relational structures) in the sense that,
for each size of graph, there is an LP defining a polyhedral lift that
separates the integer points corresponding to graphs with the property
from those corresponding to graphs without the property.  We show that
this is equivalent, with at most polynomial blow-up in size, to families of
symmetric Boolean circuits with threshold gates.  In particular, when
we consider polynomial-size LPs, the model is equivalent to
definability in a non-uniform version of fixed-point logic with
counting (FPC).  Known upper and lower bounds for FPC apply to the
non-uniform version.  In particular, this implies that the class of
graphs with perfect matchings has polynomial-size symmetric LPs while
we obtain an exponential lower bound for symmetric LPs for the class of
Hamiltonian graphs.  We compare and contrast this with previous
results (Yannakakis 1991) showing that any symmetric LPs for the
matching and TSP polytopes have exponential size.  As an application,
we establish that for random, uniformly distributed graphs,
polynomial-size symmetric LPs are as powerful as general Boolean
circuits.  We illustrate the effect of this on the well-studied planted-clique problem.

% We compare the power of two symmetric models of computation for
% recognizing isomorphism-invariant graph properties: symmetric Boolean
% circuits with threshold gates, and symmetric linear programs (LPs) or
% polyhedral lifts. Our first finding is a uniform translation from
% symmetric Boolean circuits into equivalent symmetric~LPs of
% polynomially related size. The non-trivial part of the work is the
% simulation of the threshold gates by symmetric polyhedral
% lifts. Combined with previous work this shows that the problem of
% solving (symmetric) linear programs is complete for the class of graph
% properties that are expressible in fixed-point logic with
% counting~(FPC). This includes all polynomial-time properties of
% minor-closed classes of graphs, matchings in general graphs, and many
% others. Our second finding is that a partial converse is also true:
% families of symmetric~LPs translate into equivalent FPC formulas with
% polynomially long symmetric advice or, equivalently, into non-uniform
% families of polynomial-size symmetric Boolean circuits with threshold
% gates.  Since the known inexpressibility results for FPC hold also for
% FPC with symmetric advice, this gives exponential size lower bounds
% for symmetric linear programs for the standard hard problems of~NP
% (and some even in~P).

\end{abstract}

\newpage

\section{Introduction} \label{sec:introduction}
The theory of linear programming is a powerful and widely-used tool for
studying combinatorial optimization problems.  By the same token, the
limitations of such methods are an important object of study in
complexity theory.  A major step in this line of work was the seminal
paper of Yannakakis~\cite{Yannakakis1991} that initiated the study of \emph{symmetric}
linear programs for combinatorial problems.

A polytope in $\reals^n$ is the convex hull of a finite set of
  points in $\reals^n$. Dually, it is the intersection of the finite
  number of half-spaces that define its facets.
Consider a language $S \subseteq \{0,1\}^*$ and let $S_n \subseteq
\{0,1\}^n$ be the collection of strings in $S$ of length $n$.  We can
associate with $S_n$ the polytope $P_n \subseteq \reals^n$ that is the
convex hull of the points $\xvector \in \reals^n$ with $0$-$1$
coordinates that correspond to the strings in $S_n$.  If this polytope
has a succinct representation as a system of linear inequalities, we
can use linear programming methods to optimize linear functions over
$S_n$.  In general, a succinct representation might mean that its size
grows polynomially with $n$.  Thus, the size of the polytope $P_n$, say
measured by the number of its facets, is an important measure of the
complexity of $S$.

In general, even when $P_n$ has a large number of facets, it may admit
a succinct representation as the projection onto $\reals^n$ of a
polytope $Q \subseteq \reals^{n+m}$ of higher dimension.  In this
situation, we call $Q$ a \emph{lift} of $P_n$ and $P_n$ a \emph{shadow} of
$Q$.  This is the
basis for so-called \emph{extended formulations} of combinatorial
optimization problems.  It allows us to optimize over
$S_n$ using linear programs with auxiliary variables.  A classic
example is the convex hull of all strings in $\{0,1\}^n$ of odd
Hamming weight, known as the \emph{parity polytope} which has exponentially many facets but has an
extended formulation using only polynomially many inequalities.  An
interesting feature of many such examples of small extended
formulations is that they are strongly symmetric, i.e., any basic
automorphism of the shadow polytope extends to an automorphism of its
lift. 

Yannakakis~\cite{Yannakakis1991} established lower bounds on the size
of symmetric lifts for the perfect matching polytope and the
travelling salesman polytope.  The \emph{perfect matching polytope} on
$2n$ vertices is the convex hull of points in $\{0,1\}^E$ where $E =
{{[2n]} \choose 2}$ which represent the edge sets of a perfect matching
on $2n$ vertices.  Yannakakis shows that any symmetric lift $Q$ of this polytope
necessarily has a number of facets that is exponential in $n$.
Here ``symmetric'' means that any permutation of the $n$ vertices extends
to an automorphism of $Q$.  This lower bound is then used to show a similar lower
bound for symmetric lifts of the Hamilton cycle polytope (also known
as the travelling salesman polytope).  This is the convex hull of
points in  $\{0,1\}^E$ where $E = {{[n]} \choose 2}$ which are the edge
sets of Hamilton cycles of length $n$.  The conclusion is that any
attempt to solve the travelling salesman problem by representing it as
a linear program in a natural way (i.e.\ respecting the symmetries of
the graph) is doomed to be exponential.  These results launched a long
study of extended formulations of combinatorial problems.  Relatively
recently, exponential lower bounds have been established even without
the assumption of symmetry~\cite{Rothvoss}.

There is another way of representing a language $S \subseteq
\{0,1\}^*$ by a family of polytopes that is also considered by
Yannakakis.  Say that $S_n$ is \emph{recognized} by a polytope $P_n$ if
$S_n \subseteq P_n$ and $\{0,1\}^n \setminus S_n$ is disjoint from $P_n$.  In particular, the convex hull of $S_n$
recognizes $S_n$, but it may well be that there are more succinct
polytopes that also do.  Indeed, Yannakakis shows that for any
language $S$ decidable in polynomial time, there is a family of polynomial-size polytopes
whose shadows recognize $S_n$.  Thus, we cannot expect to prove
exponential lower bounds on such polytopes without separating P from
NP. Note that the assumption of symmetry has been dropped here.  What
can we say about symmetric lifts of polytopes recognizing $S_n$?
Yannakakis does not consider this question and it does not appear to
have been studied in the literature.  This
is the question that we take up in this paper.

We consider families of symmetric polytopes for recognizing classes of
graphs (or other relational structures).  This gives an interesting
contrast with the results of Yannakakis.  Our results show that there
\emph{is} a polynomial-size family of symmetric polytopes whose
shadows recognize the class of graphs that contain a perfect
matching.  On the other hand, there is no  family of symmetric
polytopes of sub-exponential size whose shadows recognize the class of
graphs with a Hamiltonian cycle.

We obtain these specific upper and lower bounds by relating the power of symmetric
linear programs to two other natural models of symmetric computation,
based respectively on logic and circuits.  To be precise, we show that
families of symmetric polytope lifts for recognizing a class of
structures are equivalent to families of \emph{symmetric} Boolean
circuits with threshold gates, in the sense that there are
translations between them with at most a polynomial blow-up in size in
either direction.  This places symmetric linear programs squarely in
the context of a fairly robust notion of symmetric computation that
has recently emerged.  It was shown in~\cite{AndersonDawar2017} that
P-uniform families of symmetric circuits with threshold gates are
equivalent to fixed-point logic with counting ($\FPC$), a well-studied
logic in descriptive complexity theory (see~\cite{DawarSigLog}).

Our translation from circuits to linear programs is based on that
given by Yannakakis, but we need to preserve symmetry and, for
threshold gates, this poses a
significant challenge.  To construct symmetric linear
programs that enforce the values of threshold gates we need a sweeping
generalization of the construction of symmetric lifts of the parity
polytope.  In the other direction, we make a detour through logic.
That is, we show how a family of symmetric polytopes can be translated
into a family of formulas of first-order logic with counting, with the
number of variables and the size being tightly bounded based on the
size of the polytopes.  The translation is based on a support theorem,
which allows us to interpret in
the logic, given a linear program $P$ as advice, a version of $P$ for a particular input
structure.  This then allows us to use the result
of~\cite{Anderson:2015} to the effect that solvability of linear
programs is definable in $\FPC$.  

It is interesting to compare our results with the equivalence between
$\FPC$ and P-uniform symmetric threshold circuits established
in~\cite{AndersonDawar2017}.  Our results are stated for the
non-uniform model and it is not clear that they can be made uniform.
In particular, our translation from linear programs to formulas of
counting logic, while it preserves size, is not necessarily computable
in polynomial time.  It involves symmetry checks that are as hard as
the graph isomorphism problem.  On the other hand, the results
in~\cite{AndersonDawar2017} were stated for polynomial-size families
of circuits and we are able to extend them to sizes up to weakly exponential.
The
translation from circuits to formulas given
in~\cite{AndersonDawar2017} was based on a support theorem proved
there which only worked for circuit sizes bounded by
$O(2^{n^{1/3}})$.  We use a stronger support theorem (proved in
Section~\ref{sec:supports}) which enables us to prove the translation
from families of symmetric linear programs to formulas of counting
logic for sizes up to $O(2^{n^{1-\epsilon}})$ for arbitrarily small $\epsilon$.

The upper and lower bounds for symmetric linear programs that we
obtain (such as for the perfect matching and the Hamilton cycle
problem, respectively) are direct consequences of the equivalence with
non-uniform counting logic.  For instance, it is
known~\cite{Anderson:2015} that perfect matching is definable in
$\FPC$ and it follows that it is recognized by a polynomial-size
family of symmetric polytope lifts. Inexpressibility results for
$\FPC$ are usually established by showing lower bounds on the number
of variables required to express a property in counting logic, and
they yield lower bounds even in the non-uniform setting.  In
particular, we tighten known lower bounds on Hamiltonicity to show
that it cannot be expressed with a sub-linear number of variables
hence with weakly exponential size symmetric polytope lifts. Indeed,
if we use the strongest form of our translation from symmetric
LPs to logic formulas, then the lower bound we get for Hamiltonicity is even
exponential, i.e., of type $2^{\Omega(n)}$ where $n$ is the number of
vertices of the graph.  Similar exponential lower bounds for other
NP-complete problems (such as graph 3-colourability and Boolean
satisfiability) follow from known bounds in counting logic.  Indeed,
exponential lower bounds for some problems in P (such as solving
systems of linear equations over finite fields) also follow.  It
should be noted that this establishes exponential lower bounds also on
symmetric threshold circuits, a problem left open
in~\cite{AndersonDawar2017}, where superpolynomial lower bounds were
established.

Another consequence can be derived from the connection with
$\FPC$.  We know that $\FPC$ can express all polynomial-time
properties of \emph{almost all structures} under a uniform
distribution (see~\cite{HellaKolaitisLuosto1996}).  This can be used to show that $\FPC$ can solve the
planted clique problem if, and only if, the problem is solvable in
polynomial time.  The
planted clique problem is that of distinguishing a random graph from
one in which a clique has been planted.  It is a widely studied
problem in the context of lower bounds on linear programming
methods (see e.g.~\cite{AlonKrivelevichSudakov1998soda,FeigeKrauthgamer2003,Baraketal2016,Hopkinsetal2018}).
It is a consequence of our
results that if this problem can be solved in polynomial time, then it is solvable by polynomial
sized symmetric linear programs.  This is significant because a number
of lower bounds have been established for the planted clique problem
for a variety of models of linear and semidefinite programming,
notably the well-studied Lov\'asz-Schrijver, Sherali-Adams and Lasserre
hierarchies. It is noteworthy that all of these hierarchies yield
symmetric linear or semidefinite programs.  Our
results show that these lower bounds cannot be extended to general
symmetric linear programs without separating P from NP.

In Section~\ref{sec:preliminaries} we establish some preliminary
definitions and notation.  Section~\ref{sec:circuitstoLPs} gives the
translation of circuits to linear programs.  This translation is
carried out for a very general notion of symmetry.  For the reverse
translation, from linear programs to logic given in
Section~\ref{sec:fromLPstocircuits}, we restrict to the natural
symmetries on graphs and relational structures.  The main result and
its consequences, including upper and lower bounds are presented in
Section~\ref{sec:resultsandapps}.

\section{Preliminaries} \label{sec:preliminaries}

For a natural number $n \in \naturals$, we write $[n]$ for
$\{1,\ldots,n\}$ with the understanding that~$[0] = \emptyset$.  For
any set $X$, by $\Sym_X$ we denote the \emph{symmetric group on} $X$,
that is, the group of all permutations of the set $X$, and by $\Alt_X$
we denote the \emph{alternating group on} $X$, that is, the group of
all even permutations of the set $X$.  In the special case of $X =
[n]$ we write $\Sym_n$ and $\Alt_n$, respectively. Logarithms are base
$2$ with the convention that $\log(0)=0$.

\subsection{Group actions}

By $H \leq G$ we denote the fact that $H$ is a subgroup of $G$.
If $H \leq G$ then, for any $g \in G$, the subset
$gH = \{ gh : h \in H\}$ of $G$ is called a \emph{coset}
of $H$ in $G$. The number of such cosets is
called the \emph{index}
of $H$ in $G$ and is denoted by $[G : H]$.

Recall that for any group $G$, a $G$-\emph{set} is a set $X$
with an action of the group $G$,
where by an \emph{action}
we mean a mapping $\cdot : G \times X \rightarrow X$
such that for any $\pi,\sigma \in G$ and any $x \in X$ we have
$\pi \cdot ( \sigma \cdot x ) = \pi\sigma \cdot x$ and $\id \cdot x  = x$.
Equivalently, $X$ is a $G$-set if it comes with a homomorphism from
the group $G$ to the symmetric group $\Sym_X$ on $X$.
A \emph{homomorphism} from a $G$-set $X$ to a
$G$-set $Y$ is a function $f$
from $X$ to $Y$ such that for any $\pi \in G$ and $x \in X$,
it holds that $\pi \cdot f(x) = f (\pi \cdot x)$.
For a $G$-set $X$, a \emph{stabilizer} of an element $x \in X$
in $G$ consists of all $\pi \in G$ such that $\pi \cdot x = x$. 
It is easy to see that a stabilizer is a subgroup of $G$. We
sometimes denote it by $G_x$.

For any set $X$, by $|X|$ we denote the number of elements in $X$,
by $\mathcal{P}(X)$ we denote the power set
of $X$ and by $X^n$ we denote the set of $n$-tuples of elements of $X$.
Moreover, if
$n \leq |X|$, by $X^{(n)}$
we denote the set of all $n$-tuples of \emph{distinct} elements of $X$. 
In particular, for $n = 0$, both $X^n$ and $X^{(n)}$
are one-element sets consisting of the empty tuple.
If $X$ is a $G$-set, then the action of $G$
on $X$ induces an action of $G$ on
each of the sets $\mathcal{P}(X)$, $X^n$ and $X^{(n)}$
in the natural way:
for any $\pi \in G$, we have
$\pi \cdot T = \{ \pi \cdot x : x \in T \}$, where $T \subseteq X$,
and $\pi \cdot \svector = (\pi \cdot s_1, \ldots, \pi \cdot s_n)$,
where $\svector = (s_1, \ldots, s_n)$ is a tuple from $X^n$ or
$X^{(n)}$. We refer to the latter group action as the
\emph{componentwise} action of the group $G$.

An action of a group $G$ on a set of indices $U$
defines an action of $G$ on
the set of indexed variables $\{x_u\}_{u \in U}$ in the natural way:
$\pi \cdot x_u = x_{\pi \cdot u}$, for
any $\pi \in G$ and $u \in U$.
This extends to vectors of indexed variables,
as discussed in the paragraph above.
For instance, if the set of indices $[n]^2$
comes with the
componentwise action of the group $\Sym_n$, then
for any $\pi \in \Sym_n$ and
$\xvector = (x_{ij})_{i,j \in [n]}$, we have
$\pi \cdot \xvector = (x_{\pi(i)\pi(j)})_{i,j \in [n]}$.
From now on, in the case of vectors of variables we use the notation
$\xvector^\pi$ instead of $\pi \cdot \xvector$.

For any $G$-set $U$, we define
an action of $G$ on the real vector space $\reals^U$
in the following way.
First, the action of $G$ on the standard basis $\{ e_u \}_{u \in U}$,
where $e_u$ is the vector whose $u$-th coordinate is $1$ and
all other coordinates are $0$, is given by
$\pi \cdot e_u = e_{\pi \cdot u}$, for any $\pi \in G$ and $u \in U$.
This way each $\pi \in G$ defines a mapping from $\{ e_u \}_{u \in U}$
to $\{ e_u \}_{u \in U}$. The action of $G$ on the vector space 
spanned by $\{ e_u \}_{u \in U}$ can be seen as the linear extension
of those mappings:
for any $\pi \in G$ and any real vector $\avector = \sum_{u \in U} a_u e_u$,
we have $\pi \cdot \avector = \sum_{u \in U} a_u (\pi \cdot e_u) =
\sum_{u \in U} a_u e_{\pi \cdot u}$.
For instance, if the set of indices $[n]$
comes with the
natural action of the group $\Sym_n$, then
for any $\pi \in \Sym_n$ and for any vector
$\avector = (a_1, \ldots, a_n)$, we have
$\pi \cdot \avector = \sum_{i \in [n]} a_i e_{\pi(i)} =
\sum_{i \in [n]} a_{\pi^{-1}(i)} e_{i} =
(a_{\pi^{-1}(i)}, \ldots , a_{\pi^{-1}(n)})$.
Here again we use the notation
$\avector^\pi$ instead of $\pi \cdot \avector$. This
notational convention
extends to subsets of real vector spaces: for $P \subseteq \reals^U$
we write $P^\pi$ instead of $\pi \cdot P$.

If a group $G$ acts on a set $U$ and a group
$H$ acts on a set $W$, then the product group $G \times H$
acts on the disjoint union $U \disjointunion W$:
given $\pi \in G$ and $\sigma \in H$,
we have $(\pi,\sigma) \cdot u = \pi \cdot u$,
for $u \in U$, and $(\pi,\sigma) \cdot w = \sigma \cdot w$,
for $w \in W$.
Given such an action of the product group $G \times H$,
of particular interest to us is 
%the action of the
%group $\Sym_n \times \Sym_m$ on 
%$[n] \disjointunion [m]$, and its 
its induced action
on $\reals^U \times \reals^W$ and on sets
of variables indexed by $U \disjointunion W$. 
%Similarly, we will be interested in the action of the group
%$\Sym_n \times \Sym_m$ on 
%$[n]^2 \disjointunion [m]$, $\reals^{n^2} \times \reals^m$,
%and sets of variables indexed by $[n]^2 \disjointunion [m]$.

\subsection{Logic and structures}\label{sec:logic}

A (many-sorted relational) vocabulary consists of a finite set of 
sort symbols and a finite set of relation symbols.
Each relation symbol $R$ comes
with an associated natural number $\ar(R)$ called its \emph{arity}
and with an associated \emph{type} which is a product of $\ar(R)$-many
sort symbols $U_{i_1} \times \ldots \times U_{i_{\ar(R)}}$. The
vocabulary $L_G$ of (directed) graphs is single-sorted
and has one relation symbol $E$
of arity two.  If $L$ is a vocabulary, then an $L$-structure
$\Astruct$ is given by disjoint sets $U_1, \ldots, U_s$,
called \emph{domains}, one for each sort symbol in $L$, and a relation
$R^{\Astruct} \subseteq U_{i_1} \times \ldots \times U_{i_{r}}$
for each $R \in L$ of arity $r$ and type
$U_{i_1} \times \ldots \times U_{i_{r}}$. The relation~$R^{\Astruct}$ is 
called the
interpretation of $R$ in $\Astruct$. Whenever this does not lead to
confusion we use $U$ to denote the domain
associated to the sort symbol $U$.
Moreover, when $\Astruct$ is clear from the context, we omit the
superscript in $R^\Astruct$.  All our structures are finite: their
domains are finite sets. A directed graph is an $L_G$-structure; the
graph is undirected if its interpretation of~$E$ is symmetric and
irreflexive. 

In a logic for a many-sorted vocabulary $L$ the variables are typed,
that is, each different sort has its own set of individual variables.
When an $L$-formula is interpreted on an $L$-structure, the variables
range over the domain of their sort. The atomic $L$-formulas
are equalities between variables of the same type,
and formulas of the form
$R(x_1, \ldots, x_r)$, where $R$ is a relation symbol of arity $r$
in $L$, and $x_1, \ldots, x_r$ are variables of appropriate types.

The class of formulas of first-order logic (FO) is the smallest class
of formulas that contains all atomic formulas
%including equalities of variables, 
and is closed under negation, conjunction, and existential
quantification.  We consider an extension of first-order logic with
\emph{counting quantifiers}.  For each natural number $q$, we have a
quantifier $\exists^{\geq q}$ where $\Astruct \models \exists^{\geq q}
x \,\phi$ if, and only if, there are at least $q$ distinct elements $a
\in A$ such that $\Astruct \models \phi[a/x]$. While the extension of
first-order logic with counting quantifiers is no more expressive than
$\FO$ itself, the presence of these quantifiers does affect the number
of variables that are necessary to express a query.  Let $\CLogic^k$
denote the $k$-variable fragment of this logic, i.e.\ those formulas
in which no more than $k$ variables appear, free or bound.  $\FPC$ is the
extension of first-order logic with fixed-points and counting.  We do
not give a full definition here as it can be found in standard texts
such as~\cite{OttoHabilitationBook}.  We note that formulas of
$\FPC$ have two sorts of variables, ranging respectively over the
elements of the domain of interpretation and over natural numbers
(restricted to the size of the domain), and allow for terms of the
form $\# x \phi$ which denotes the number of elements that satisfy
$\phi(x)$.  We write $\FOC$ to denote the fragment of $\FPC$ without
fixed-point operators, but where we do allow arithmietic operations
($+$ and $\times$) on the number sort.  The \emph{size} of a formula
is defined as the number of its subformulas. For each formula $\phi$
of $\FPC$ (and, per force, $\FOC$), if the formula uses $k$ variables
then, for every $n$, there is a formula $\theta_n$ of $\CLogic^{2k}$ such
that $\phi$ is equivalent to $\theta_n$ on all structures of size at
most $n$. Moreover, $\theta_n$ has size that is polynomial in the size
of $\phi$, in $k$, and in $n^k$. For more on $\FPC$ and its relation
to the bounded variable fragments $\CLogic^k$ we refer
to~\cite{OttoHabilitationBook}.

Rational numbers $q \in \rationals$ are represented by structures of a
single-sorted vocabulary
$L_{\rationals}$ with three monadic relation symbols and
one binary relation symbol $\leq$. If $q = (-1)^b n/d$, where~$n,d \in
\naturals$ and~$b~\in~\{0,1\}$, then the domain of an
$L_{\rationals}$-structure that represents $q$ is $\{0,\ldots,N\}$
where~$N \in \naturals$ is large enough to represent both the
numerator and denominator with $N$ bits. The binary relation $\leq$ is
interpreted by the natural linear order on $\{0,\ldots,N\}$. The first
of the monadic relation symbols of $L_{\rationals}$ is used to
represent the sign $b$ of $q$ by having it empty if, and only if, $b =
0$. The other two monadic relation symbols of $L_{\rationals}$ are
used to represent the bit positions on which the numerator $n$ and the
denominator $d$ have a one. We use zero denominator to represent $\pm
\infty$.

If $I_1,\ldots,I_d$ denote index sets, tensors $u \in \rationals^{I_1
  \times \cdots \times I_d}$ are represented by many-sorted
  structures, with one sort $\bar{I}$ for each index set $I$ on the list
  $I_1, \ldots, I_d$, and one sort $\bar{B}$ for a
domain $\{0,\ldots,N\}$ of bit positions.
The vocabulary $L_{\mathrm{vec},d}$ of these structures has
a binary relation symbol
$\leq$ for the natural linear order on $\{0,\ldots,N\}$ and three
$d+1$-ary relation symbols $P_s$, $P_n$ and $P_d$ for encoding the
signs and the bits of the numerators and the denominators of the
entries of the tensor. Matrices $\Amatrix \in \rationals^{I \times J}$ and
vectors $\avector \in \rationals^I$ are special cases of
these. Indexed sets of vectors $\{ \avector_i : i \in K \} \subseteq
\rationals^I$ and index sets of rationals $\{ b_i : i \in K \}
\subseteq \rationals$ too.

\subsection{Polytopes, lifts, and shadows}

A polytope is a set of the form $P = \{ \xvector \in \reals^U :
\Amatrix \xvector \leq \bvector \}$, where $U$ and $V$ are abstract
non-empty index sets, $\Amatrix \in \reals^{V \times U}$ is a
constraint matrix, and $\bvector \in \reals^V$ is an offset vector. If
we think of $\xvector = ~(x_u)_{u \in U}$ as a sequence of variables,
then the defining system of inequality constraints $\Amatrix \xvector
\leq \bvector$ is called a \emph{linear program}~(LP) for~$P$. Note
that the defining LPs for polytopes are by no means
unique. Typically~$\Amatrix$ and~$\bvector$ can be chosen to have
rational entries, in which case $P$ is represented by a sequence of
linear constraints $(\gamma_v)_{v \in V}$ of one of its defining LPs
with rational entries; i.e., each~$\gamma_v$ is of the form
$\avector_v^\transpose \xvector \leq b_v$, with $\avector_v \in
\rationals^U$ and $b_v \in \rationals$.  The \emph{size} of such an LP
is $(|U|+1)|V|b$, where $b$ is the maximum number of bits it takes to
write all the numerators and all the denominators of the entries of
the $\avector_v$ and $b_v$ in binary.
%The \emph{size} of such an
%LP is $(n+1)mb$, where $n$ is the number of variables, $m$ is the number
%of constraints, and $b$ is the maximum number of bits it takes to
%write all the numerators and all the denominators of the entries of
%the $\avector_i$ and $b_i$ in binary.

If $\xvector \in \reals^U$ and $\yvector_1,\ldots,\yvector_m \in
\reals^U$, and $\alpha_1,\ldots,\alpha_m \in \reals$ are such that
$\xvector=\sum_{i=1}^m \alpha_i \yvector_i$, with~$\alpha_i \geq 0$
and~$\sum_{i=1}^m \alpha_i = 1$, then we say that $\xvector$ is a
convex combination of $\yvector_1,\ldots,\yvector_m$. When $0 <
\alpha_i < 1$ for some $i \in [m]$, the convex combination is called
non-trivial.  The convex hull
$\mathrm{conv}(\yvector_1,\ldots,\yvector_m)$ is the set of all convex
combinations of $\yvector_1,\ldots,\yvector_m$.  A point~$\xvector$ of
a polytope~$P$ is called a \emph{vertex} if it cannot be expressed as a
non-trivial convex combination of any two other points of~$P$.  If~$P$
is a polytope and $P \subseteq \{ \xvector \in \reals^U :
\avector^\transpose \xvector \leq b \}$, then the set~$\{ \xvector \in
P : \avector^\transpose \xvector = b \}$ is called \emph{face} of
$P$. The faces of dimension $0$ are the vertices
of~$P$; the faces of
dimension~$1$ are called \emph{edges}; the faces of dimension
$\mathrm{dim}(P)-1$ are called \emph{facets}, where $\mathrm{dim}(P)$
is the dimension of $P$. Each polytope has finitely many faces of each
dimension; in particular finitely many vertices (see
\cite{Schrijver1986BookReprinted1999}).  A polytope is bounded if and
only if it is the convex hull of its vertices.

%Unless stated differently, we always implicitly assume
%sets of linear constraints
%$\{\gamma_v\}_{v \in V}$ defining a polytope $P$ to be minimal with
%respect to inclusion. Under this assumption, for non-empty polytopes,
%there is a natural correspondence between facets of the polytope and
%constraints in $\{\gamma_v\}_{v \in V}$: every facet lies in a
%hyperplane given by $ \avector_v^\transpose\xvector = b_v$, for some
%constraint $ \avector_v^\transpose\xvector \leq b_v$ in
%$\{\gamma_v\}_{v \in V}$. This correspondence is bijective.

If $P \subseteq \reals^U \times \reals^W$ is a polytope, its
projection into $\reals^U$ is the set of points $\xvector \in
\reals^U$ for which there exists a point $\yvector \in \reals^W$ with
$(\xvector,\yvector) \in P$. The projection of a polytope is again a
polytope. If $Q$ is the projection of $P$ into $\reals^U$, then we say
that $Q$ is a \emph{shadow} of~$P$, and that $P$ is a \emph{lift} of
$Q$. If $A,B \subseteq \{0,1\}^U$ are disjoint, then we say that $Q$
is a polytope that separates $A$ from $B$ if $A \subseteq Q$ and $B
\subseteq \reals^U \setminus Q$. We also say that $P$ is a polytope
lift that separates $A$ from $B$. If $Q$ separates $A$ from its
complement $\overline{A} = \{0,1\}^U \setminus A$, then we say
that~$Q$ is a polytope that recognizes $A$, and that $P$ is a polytope
lift that recognizes~$A$. Since no point in $\{0,1\}^U$ is in the
convex hull of any set of points in $\{0,1\}^U$ that does not contain
it, the convex hull of $A \subseteq \{0,1\}^U$ always recognizes
$A$. The converse is not true: a polytope could recognize~$A$ and not
be the convex hull of $A$.

Let $P \subseteq \reals^U$ be given by a sequence of linear
constraints $(\gamma_v)_{v \in V}$. If a group $G$ acts on the
set $U$, then for any $\gamma_v$ of the form
$\avector_v \xvector \leq b_v$, we write $\gamma_v^\pi$ for the
linear constraint
$\avector_v \xvector^\pi \leq b_v$. Note that the sequence
$(\gamma_v^\pi)_{v \in V}$ defines $P^\pi \subseteq \reals^U$,
which is again a
polytope. 
As long as this does not lead to confusion, we identify polytopes with
sequences of linear constraints that represent them.
In particular, if we assume the polytope $P \subseteq \reals^U$
to be represented by the sequence of constraints
$(\gamma_v)_{v \in V}$, then by $P^\pi$ we mean
both the permuted polytope itself, and its representation by
the sequence of constraints $(\gamma^{\pi}_v)_{v \in V}$.
%Moreover, from now on we will no longer explicitly specify
%the form of the linear constraints. Unless stated differently,
%linear constraints defining polytopes over $\reals^n$ 
%are of the form
%$ \avector^\transpose\xvector  + c \geq 0$,
%with $\xvector = (x_i)_{1\leq i \leq n}$,
%and linear constraints defining polytopes over $\reals^n \times \reals^m$ 
%are of the form
%$ \avector^\transpose\xvector + \bvector^\transpose\yvector  + c \geq 0$,
%with $\xvector = (x_i)_{1\leq i \leq n}$ and  $\yvector = (y_i)_{1\leq i \leq m}$.

Let $U$ be a $G$-set.
A polytope $P \subseteq \reals^U \times \reals^W$
is said to be $G$-\emph{symmetric} if for every
$\pi \in G$ there exists a permutation $\sigma \in \Sym_W$
such that $ P^{(\pi,\sigma)} = P$.
If additionally we are given an action of the group $G$ on $W$
%(or equivalently,
%on the set of extension variables $\{y_i\}_{1\leq i \leq m}$)
such that $P^{(\pi,\pi)} = P$,
then we say that the polytope $P$ is $G$-symmetric
\emph{with respect to this~action}.
A pair of
permutations $(\pi,\sigma) \in \Sym_U \times \Sym_W$ such
that $P^{(\pi, \sigma)} = P$ is called an \emph{automorphism} of $P$.
Hence, if $G \leq \Sym_U$, the fact that the polytope $P$
is $G$-symmetric is equivalent to the possibility of extending every permutation $\pi \in G$ to an automorphism of $P$.

For any $n \in \naturals$, if the set $[n]^2$
comes with the natural action of the symmetric group $\Sym_n$,
then any $\Sym_n$-symmetric polytope
$P \subseteq \reals^{[n]^2} \times \reals^W$ is said to be
\emph{graph-symmetric}. It is not difficult to see that any set
$A \subseteq \{0,1\}^{[n]^2}$ recognised by a graph-symmetric polytope
lift $P \subseteq \reals^{[n]^2} \times \reals^W$ is invariant with respect
to the action of the group $\Sym_n$, i.e., for any $\avector \in A$ and
any $\pi \in \Sym_n$, we have $\avector^\pi \in A$. The polytope lift
$P$ can be therefore seen as recognising a class of graphs with
$n$-vertices. If we take a graph $G$ with the set of vertices $V$
of size $n$, fix
a bijection $f$ from $[n]$ to $V$, and define
a vector $\avector = (a_{ij})_{i,j \in [n]} \in \{0,1\}^{[n]^2}$ by: 
$a_{ij} = 1$ if and only if there is an edge from $f(i)$ to $f(j)$ in $G$,
then $G$ belongs to the class recognised by $P$ if and only if
$\avector \in A$. Since $A$ is a $\Sym_n$-set, this
definition does not depend on the choice of~$f$.

More generally, we consider $\Sym_n$-symmetric
polytope lifts recognising properties of arbitrary $L$-structures.
For any $n \in \naturals$ and any
single-sorted vocabulary $L$, 
let $L(n)$ be the disjoint union of $[n]^{\ar(R)}$
over all relation symbols $R$ in $L$. Since $L(n)$ comes with the natural
action of the group $\Sym(n)$, we can talk about
$\Sym_n$-symmetric polytopes over $\reals^{L(n)} \times \reals^W$.
Any such polytope is called $L$-\emph{symmetric}. As a 
straightforward generalisation of the discussion in the previous paragraph,
$L$-symmetric polytope lifts over 
$\reals^{L(n)} \times \reals^W$
are defined to recognise classes of $L$-structures
with $n$-element domains.

\subsection{Boolean circuits}

A circuit with inputs $(x_u)_{u \in U}$ is a directed acyclic graph
$G$ whose vertices of zero in-degree are labelled by some input $x_u$,
and every other vertex is labelled by a function from some fixed basis
of symmetric Boolean functions, with the constraint that the function
takes the same number of inputs as the in-degree of the vertex. The
vertices of zero out-degree are called outputs. If $(y_v)_{v \in V}$
is a fixed naming of the outputs, then a circuit computes a Boolean
function $f : \{0,1\}^U \rightarrow \{0,1\}^V$ in the obvious
way. When $m = 1$ we say that it recognizes $f^{-1}(1)$. The vertices
of a circuit are also called \emph{gates}. The size of the circuit is
its number of gates. A \emph{Boolean threshold circuit} is one whose
gates are labelled by NOTs, unbounded degree ANDs, unbounded degree
ORs, or unbounded degree thresholds $\mathrm{TH}_{n,k}$, where
$\mathrm{TH}_{n,k}(z_1,\ldots,z_n)$ outputs $1$ if, and only if, the
number of $1$'s in the input $z_1,\ldots,z_n$ is at least $k$.

If $U$ is a $G$-set and $W$ denotes the set of gates of $C$, we say
that $C$ is \emph{$G$-symmetric} if for every $\pi \in G$ there exists
$\sigma \in \Sym_W$ such that $C^{(\pi,\sigma)}=C$,
where by $C^{(\pi,\sigma)}$ we mean that
the gates of the circuit are permuted according to $\sigma$,
the labels from
$\{x_u\}_{u \in U}$ are permuted according to $\pi$,
and none of the other labels is moved.
A circuit with $U
= L(n)$ is called $L$-symmetric if it is $\Sym_n$-symmetric, with the
natural action of $\Sym_n$ on $L(n)$. As for polytopes, we consider
$L$-symmetric circuits as recognizing classes of $L$-structures on
abstract sets $V$ of vertices through bijections $f : [n] \rightarrow
V$. In the case of graphs, for example, in which $L(n) = [n]^2$,
we say that such
a circuit accepts a graph $G$ with the
set of vertices $V$ of size $n$ if for some, and hence every,
bijection $f : [n] \rightarrow V$ it holds that $C(\avector)=1$,
where~$\avector$ is the vector that describes the image of $G$ under
$f^{-1}$, as in the previous section.

% Let $L_{\mathrm{arb}}$ be a vocabulary that contains a binary relation
% symbol $\leq$. An \emph{advice function}~$\Psi$ for~$L_{\mathrm{arb}}$
% is a function that associates an $L_{\mathrm{arb}}$-structure with
% domain $\{0,\ldots,n\}$ with $\leq$ interpreted by the natural linear
% order to each $n \in \naturals$.  TBC.

\section{From Circuits to LPs} \label{sec:circuitstoLPs}

In this section we prove the half of the equivalence that takes
symmetric circuits with threshold gates into symmetric LPs. That is:

\begin{lemma} \label{lem:circuitstolps}
If $\mathscr{C}$ is a class of $L$-structures that is recognized by a
family of $L$-symmetric Boolean threshold circuits of size $s(n)$,
then $\mathscr{C}$ is recognized by a family of $L$-symmetric LP lifts
of size $s(n)^{O(1)}$. In addition, if the Boolean circuits do not
have threshold gates, then the size of the LP lifts is $O(s(n))$.
\end{lemma}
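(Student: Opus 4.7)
The plan is to implement Yannakakis's gate-by-gate translation, introducing one variable $y_g$ per gate $g$ of the circuit $C$ (identifying $y_g$ with $x_u$ whenever $g$ is the input gate labelled $x_u$), pinning the output by $y_{g_{\mathrm{out}}} \geq 1$, and keeping $0 \leq y_g \leq 1$ throughout. For NOT, AND and OR gates the textbook linearisations will be used: $y_g = 1 - y_{g'}$ for NOT; for a $k$-ary AND the constraints $y_g \leq y_{g_i}$ for each input $g_i$ together with $y_g \geq \sum_i y_{g_i} - (k-1)$; and dually for OR. A standard induction on the gates shows that whenever the $x$-coordinates are $0/1$-valued the gate constraints force $y_g$ to be the Boolean value $g$ computes on $x$, so the LP is feasible on $x$ iff $C(x)=1$ and the shadow onto $\reals^{L(n)}$ recognises $\mathscr{C}$. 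Each of the AND, OR and NOT blocks is symmetric in the variables $y_{g_1},\dots,y_{g_k}$ of its inputs, hence once $\pi \in \Sym_n$ is extended through the circuit automorphism $\sigma \in \Sym_W$ and made to act on the $y_g$-variables by $y_g \mapsto y_{\sigma(g)}$, the block attached to $g$ is sent to the block attached to $\sigma(g)$. Without threshold gates this already produces an $L$-symmetric LP whose size matches the second clause of the lemma.

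The essential step is a $\Sym_m$-symmetric polynomial-size extended formulation of the threshold polytope $T_{m,k} := \mathrm{conv}\{(z,y) \in \{0,1\}^{m+1} : y=1 \iff \sum_i z_i \geq k\}$ attached to each $\mathrm{TH}_{m,k}$-gate. I would apply Balas's disjunctive formulation to the two faces $F_1 = \{(z,1) : 0 \leq z_i \leq 1,\ \sum_i z_i \geq k\}$ and $F_0 = \{(z,0) : 0 \leq z_i \leq 1,\ \sum_i z_i \leq k-1\}$: introduce a copy $(z^{(j)}, y^{(j)})$ for $j \in \{0,1\}$ and a selector $\lambda_j \geq 0$ with $\lambda_0+\lambda_1=1$; add the scaled face constraints $0 \leq z^{(j)}_i \leq \lambda_j$, $y^{(1)} = \lambda_1$, $y^{(0)} = 0$, $\sum_i z^{(1)}_i \geq k\lambda_1$, $\sum_i z^{(0)}_i \leq (k-1)\lambda_0$, together with the coupling $z = z^{(0)}+z^{(1)}$ and $y = y^{(0)}+y^{(1)}$. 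This gadget has $O(m)$ variables and inequalities, and because each face is defined by $\Sym_m$-invariant constraints on its $z$-coordinates, the gadget is $\Sym_m$-symmetric on the input wires of the gate. Indexing the auxiliary variables by pairs (input wire of this gate, face label) ensures that the circuit automorphism $\sigma$ extends canonically to them, and summing over all gates gives an LP of size polynomial in $s(n)$.

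The main obstacle is precisely this threshold step. A naive disjunction over the $\binom{m}{k}$ satisfying assignments would already be exponential in $m$, so some compact $\Sym_m$-symmetric lift is mandatory; Balas's two-way disjunction achieves this with only $O(m)$ blow-up, and plays the role of the generalisation of the parity-polytope trick that the paper alludes to. Once it is in place, $L$-symmetry of the full LP follows by assembling the per-gate automorphisms: for each $\pi \in \Sym_n$, the gate-permutation $\sigma \in \Sym_W$ coming from $L$-symmetry of $C$ extends to the threshold auxiliaries by acting on each gadget exactly as it permutes the input wires of that gate (and fixing the face labels), which is an automorphism of the gadget by its $\Sym_m$-symmetry. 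Correctness of the combined LP then reduces to the gate-by-gate induction already sketched in the first paragraph.
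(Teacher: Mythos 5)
Your proposal is correct, and it handles the crucial threshold step by a genuinely different and arguably cleaner route than the paper. The paper first replaces each threshold gate $\mathrm{TH}_{m,k}$ by the disjunction $\bigvee_{t\ge k}\mathrm{EX}_{m,t}$ of exact-counting gates, then lifts each $\mathrm{EX}_{m,t}$ by computing the bits of $\sum_i z_i$ through a chain of \emph{truncated parity polytopes} $\mathrm{PP}(n,q)$ (each of which is in turn lifted through the slice polytopes $\mathrm{EX}(n,t)$), followed by a bit-comparison AND. You instead lift the graph of the threshold relation directly, as a two-way Balas disjunction of the faces $F_1$ (output $1$, $\sum z_i\ge k$) and $F_0$ (output $0$, $\sum z_i\le k-1$). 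What makes this sound, and what you correctly gesture at, is that both $F_0$ and $F_1$ are integral (a standard argument: at a vertex, at most one constraint of the form $\sum z_i = k$ or $\sum z_i = k-1$ can be tight alongside box constraints, and integrality of $k$ forces the remaining coordinate to be $0$ or $1$), so $\mathrm{conv}(F_0\cup F_1)$ is exactly the threshold polytope $T_{m,k}$; moreover $T_{m,k}$ has the unique-extension property on Boolean inputs (a Boolean $z$ is extreme in $[0,1]^m$, so any convex combination representing $(z,y)$ is concentrated on vertices with that same $z$, of which there is exactly one), which is what the gate-by-gate induction needs; and all the defining constraints of the Balas lift are $\Sym_m$-invariant in the $z$-coordinates with the face label fixed, so the gadget is symmetric. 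Your gadget uses only $O(m)$ auxiliary variables and constraints per threshold gate, which is leaner than the paper's construction. The symmetry-assembly step (re-index the gadget auxiliaries by input \emph{wire} rather than by fan-in position, so the circuit automorphism $\sigma\in\Sym_W$ extends canonically, with the usual matching permutation between a gate's ordered inputs and those of its image) is exactly the bookkeeping of Lemma~\ref{lem:abstract}, and you state it correctly if briefly. In short: the same Yannakakis-style gate-by-gate skeleton and the same symmetry argument, but the threshold lift via Balas disjunctive programming replaces the paper's bit-extraction via truncated parity polytopes.
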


The main step in the construction is the simulation of the threshold
gates. The na\"{\i}ve approach by which each threshold gate is replaced by
an equivalent AND-OR-NOT circuit will not work: it is known that any
symmetric such circuit that computes the majority function must have
superpolynomial size. This follows from Theorem~2
in~\cite{AndersonDawar2017} and a standard Ehrenfeucht-Fraiss\'e
argument.  We need an alternative approach. As a step towards our goal,
first we need to generalize the so-called Parity Polytope from
Yannakakis~\cite{Yannakakis1991}.

\subsection{The parity polytope explained}

Yannakakis gives a polynomial-size symmetric polytope lift of the
parity polytope
\begin{equation}
\mathrm{PP}(n) := \mathrm{conv} \{
(x_1,\ldots,x_n) \in \{0,1\}^n : \textstyle{\sum_{k=1}^n x_k \equiv 1
\;(\text{mod}\; 2)} \}.
\end{equation} 
Note that this polytope has the following
interesting feature: 
% \begin{quote}
\begin{equation}
\text{\parbox{.85\textwidth}{
if $x_1,\ldots,x_{n-1}$ are in $\{0,1\}$, then there exists a unique
$x_n$ in $\reals$ such that $(x_1,\ldots,x_n)$ is in $\mathrm{PP}(n)$,
and moreover this $x_n$ is the unique bit in $\{0,1\}$ that makes
the total sum $\textstyle{\sum_{k=1}^n x_k}$ odd.
}}
\end{equation}
% \end{quote}
For the existence just take $x_n \in \{0,1\}$ so that $\sum_{k=1}^n
x_k$ is odd. The uniqueness will follow once we show that any $x_n$
for which the extension vector $(x_1,\ldots,x_n)$ belongs to
$\mathrm{PP}(n)$ is in $\{0,1\}$. In turn, this follows from the fact
that all extreme points of $\mathrm{PP}(n)$ are in $\{0,1\}^n$, all have
the same parity, and a single bit-flip flips
their parity. Indeed, if $(x_1,\ldots,x_n)$ is in $\mathrm{PP}(n)$ but
is not an extreme point, then it must be a non-trivial convex
combination of at least two extreme points and, whenever
$x_1,\ldots,x_{n-1}$ are all in $\{0,1\}$, only two candidate extreme
points remain: $(x_1,\ldots,x_{n-1},0)$ and
$(x_1,\ldots,x_{n-1},1)$. Otherwise some $x_i$ with $1 \leq i \leq
n-1$ would be strictly between $0$ and $1$. However, among these two
candidate extreme points, at least one does not have the right parity,
and hence is not even in $\mathrm{PP}(n)$.

The construction of the lift of $\mathrm{PP}(n)$ relies on the fact
that the convex hulls of the Hamming-weight slices of the
$n$-dimensional Boolean hypercube are definable by a small linear
program.  Precisely, for each $t \in [n]$ let
\begin{equation}
\mathrm{EX}(n,t) := \mathrm{conv} \{ (x_1,\ldots,x_n) \in \{0,1\}^n :
\textstyle{\sum_{k=1}^n x_k = t} \}.
\end{equation}
Then it holds that the direct relaxation is tight:
\begin{equation}
\mathrm{EX}(n,t) = \{ (x_1,\ldots,x_n) \in \reals^n:
\textstyle{\sum_{k=1}^n x_k = t}, \; 0 \leq x_k \leq 1\; \text{ for
  every } k = 1,\ldots,n \}. \label{eq:eq}
\end{equation}
We provide an $\pm \epsilon$-proof since it is instructive.   Let $P$ be the polytope defined by the
linear program in the right-hand side of~\eqref{eq:eq}. The inclusion
$\mathrm{EX}(n,t) \subseteq P$ is obvious. For the inclusion $P
\subseteq \mathrm{EX}(n,t)$ it suffices to show that every vertex of
$P$ has integral components; i.e., that every point in $P$ that has
some non-integral component fails to be a vertex of $P$ because it is
a non-trivial convex combination of two other points in $P$. Let
$\xvector = (x_1,\ldots,x_n)$ be a point in $P$, let $I$ be the set of
indices $k \in [n]$ such that $0 < x_k < 1$, and assume that~$I \not=
\emptyset$. Now define
\begin{equation}
\epsilon :=
\min \{ x_k : k \in I \} \cup \{ 1-x_k : k \in I \}.
\end{equation}
Thus $\epsilon > 0$. Since $\sum_{k=1}^n x_k = t$ and $t$ is an
integer, necessarily $|I| \geq 2$. Fix two different indices $i$ and
$j$ in $I$ and, for $b \in \{0,1\}$, let $\xvector_b =
(x_{b,1},\ldots,x_{b,n})$ be the point defined by
\begin{equation*}
\begin{array}{lllll}
x_{b,k} & := & x_k + (-1)^b \epsilon & \;\;\; & \text{for } k = i \\
x_{b,k} & := & x_k - (-1)^b \epsilon & & \text{for } k = j \\
x_{b,k} & := & x_k & & \text{for } k \in [n]\setminus\{i,j\}.
\end{array}
\end{equation*}
By the choice of $\epsilon$ we have $0 \leq x_{b,k} \leq 1$ for
every $k \in [n]$ and every $b \in \{0,1\}$. Also
\begin{equation}
\sum_{k=1}^n x_{b,k} = \sum_{k=1}^n x_k + \epsilon - \epsilon =
\sum_{k=1}^n x_k = t
\end{equation} 
for both $b \in \{ 0,1 \}$, and 
\begin{equation}
\frac{1}{2} (x_{0,k} + x_{1,k}) = \frac{1}{2} (x_k + x_k) = x_k
\end{equation}
for every $k \in [n]$.  Since $\epsilon > 0$, it follows that
$\xvector_0$ and $\xvector_1$ are distinct points in $P$ such that
$\xvector = \frac{1}{2}(\xvector_0 + \xvector_1)$. Thus $\xvector$ is
not a vertex of $P$.

With the polytope $\mathrm{EX}(n,t)$ in hand we are ready to describe
the lift of $\mathrm{PP}(n)$.  First note that a real vector $\xvector
\in \reals^n$ is in $\mathrm{PP}(n)$ if, and only if, $\xvector =
\sum_{t} w_t \yvector_t$ where each vector $\yvector_t$ is in
$\mathrm{EX}(n,2t+1)$, and the $w_t$'s are non-negative coefficients
that add up to one, with $t$ ranging over
$\{0,\ldots,\lfloor{n/2}\rfloor\}$. In order to express this as a linear
program we introduce variables $w_t$ and $z_{t,i}$ for each $t \in 
\{0,\ldots,\lfloor{n/2}\rfloor\}$
and $i \in \{1,\ldots,n\}$ with the intention that $z_{t,i} = w_t
y_{t,i}$ for appropriate values $y_{t,i}$ that we do not care to actually
get. Writing $T$ for $\{0,\ldots,\lfloor{n/2}\rfloor\}$ and $N$ for
$\{1,\ldots,n\}$, the linear program that achieves this is the
following:
\begin{equation}
\begin{array}{lll}
\sum_{t \in T} w_t = 1 & & \\
0 \leq w_t \leq 1 & & \text{for each } t \in T \\
\sum_{t \in T} z_{t,i} = x_i & \;\;\; & \text{for each } i \in N \\
\sum_{i \in N} z_{t,i} = (2t+1) w_t & & \text{for each } t \in T \\
0 \leq z_{t,i} \leq w_t & & \text{for each } t \in T \text{ and } i \in N \\
\end{array}
\end{equation}

\noindent The symmetry of this linear program with respect to the
$x$-variables is obvious: given a permutation $\pi \in \Sym_n$, let
$\sigma$ be the permutation that maps $z_{t,i}$ to $z_{t,\pi(i)}$ and
leaves each $w_t$ in place.

\subsection{The truncated parity polytope}

For each integer $n \geq 1$, let $|n|$ be the number of bits it takes
to write $n$ in binary.  We want lifts of the following
\emph{truncated parity polytopes} defined for each pair of integers~$n
\geq 1$ and~$q \in \{0,\ldots,|n|-1\}$:
\begin{equation}
\mathrm{PP}(n,q) := \mathrm{conv} \{ (x_1,\ldots,x_n) \in \{0,1\}^n :
\textstyle{\lfloor{2^{-q} \sum_{k=1}^n x_k}\rfloor \equiv 1
\;(\text{mod}\; 2)} \}.
% \mathrm{PP}(n,k) := \mathrm{conv}\{
% \xvector \in \{0,1\}^n : 
% \textstyle{\sum_{i=1}^n x_i = 2^k t + r}
% \text{ for some \textbf{odd} } t \in [0,n]
% \text{ and } r \in [0,2^k-1] \}.
\end{equation}
This time a vector $\xvector \in \reals^n$ is in $\mathrm{PP}(n,q)$ if,
and only if, $\xvector = \sum_{t,r} w_{t,r} \yvector_{t,r}$ where each
vector~$\yvector_{t,r}$ is in $\mathrm{EX}(n,2^q (2t+1) + r)$, and the
$w_{t,r}$ are non-negative coefficients that add up to one, with
$(t,r)$ ranging over the set of pairs of integers with $t \in
\{0,\ldots,\lfloor{n/2^{q+1}}\rfloor\}$ and $r \in
\{0,\ldots,2^q-1\}$. We introduce variables $w_{t,r}$ and $z_{t,r,i}$
for each $(t,r) \in T$ and each $i \in N$, where $T =
\{0,\ldots,\lfloor{n/2^{q+1}}\rfloor \} \times \{0,\ldots,2^q-1\}$ and
$N = \{1,\ldots,n\}$, with the intention that $z_{t,r,i} = w_{t,r}
y_{t,r,i}$ for appropriate values $y_{t,r,i}$ that we do not care to
actually get. The linear program that achieves this is the following:
\begin{equation}
\begin{array}{lll}
\sum_{(t,r) \in T} w_{t,r} = 1 & & \\
0 \leq w_{t,r} \leq 1 & & \text{for each } (t,r) \in T \\
\sum_{(t,r) \in T} z_{t,r,i} = x_i & \;\;\; & \text{for each } i \in N \\
\sum_{i \in N} z_{t,r,i} = (2^q(2t+1) + r) w_{t,r} & & \text{for each } (t,r) \in T \\
0 \leq z_{t,r,i} \leq w_{t,r} & & \text{for each } t \in T \text{ and } i \in N \\
\end{array}
\end{equation}
The symmetry of this linear program with respect to the $x$-variables
is again obvious: given $\pi \in \Sym_n$, let $\sigma$ map $z_{t,r,i}$
to $z_{t,r,\pi(i)}$ and leave every other variable in place.  The
polytope~$\mathrm{PP}(n,q)$ has the following interesting feature that
is analogous to the one we argued for~$\mathrm{PP}(n)$:
\begin{equation} 
\text{\parbox{.85\textwidth}{
if $x_1,\ldots,x_{n-1}$ are in $\{0,1\}$ and $\textstyle{\sum_{k=1}^{n-1} x_k
\equiv -1\;(\text{mod}\; 2^q)}$, then there exists a unique $x_n$ in
$\reals$ such that $(x_1,\ldots,x_n)$ is in $\mathrm{PP}(n,q)$, and
moreover $x_n$ is the unique bit in $\{0,1\}$
that makes the truncation $\textstyle{\lfloor{2^{-q} \sum_{k=1}^n x_k}\rfloor}$
odd.
}}
\end{equation}
For the existence just take $x_n \in \{0,1\}$ so that the truncation
$\lfloor{2^{-q} \sum_{k=1}^n x_k}\rfloor$ is odd, which must exist by
the assumption that $\sum_{k=1}^{n-1} x_k \equiv -1\; (\text{mod}\;
2^q)$. The uniqueness follows once we show that every $x_n$ for
which the extension vector $(x_1,\ldots,x_n)$ belongs to
$\mathrm{PP}(n,q)$ is in~$\{0,1\}$, and again the assumption that
$\sum_{k=1}^{n-1} x_k \equiv -1\;(\text{mod}\; 2^q)$.  For a proof
that such an $x_n$ is in~$\{0,1\}$ it suffices to show that if
$(x_1,\ldots,x_n) \in \mathrm{PP}(n,q)$ satisfies the conditions, then
it is an extreme point. If it were not an extreme point then
it would be a non-trivial combination of at least two extreme points
and, whenever $x_1,\ldots,x_{n-1}$ are all in $\{0,1\}$, only two
candidates remain: $(x_1,\ldots,x_{n-1},0)$ and
$(x_1,\ldots,x_{n-1},1)$; otherwise some $x_i$ with $1 \leq i \leq
n-1$ would be strictly between $0$ and $1$. However, assuming that
$\sum_{k=1}^{n-1} x_k \equiv -1\;(\text{mod}\; 2^q)$, at least one of
these extreme points must have even truncation $\lfloor{2^{-q}
  \sum_{k=1}^n x_k}\rfloor$, and hence not even belong to
$\mathrm{PP}(n,q)$; a contradiction.

\subsection{Counting gates}

The goal in this subsection is to write a polynomial-size symmetric
linear program that can be used to simulate exact counting gates
$\mathrm{EX}_{n,t}(x_1,\ldots,x_n)$, which outputs $1$ if
the sum of the~$n$ input bits is exactly $t$, and $0$ otherwise. In
order to do this we use the truncated parity polytopes to compute the
bits of the binary representation of $\sum_{k=1}^n x_i$, and then
compare the result with the bits of the binary representation of $t$.

First consider the following sequence of linear programs which
depend only on $n$ and not~on~$t$:
\begin{equation}
\begin{array}{ll}
& (x_1,\ldots,x_n,1,1-z_1) \in \mathrm{PP}(n+2,0) \\ 
& (x_1,\ldots,x_n,1,1,z_1,1-z_2) \in \mathrm{PP}(n+4,1) \\ 
& (x_1,\ldots,x_n,1,1,1,1,z_1,z_2,z_2,1-z_3) \in \mathrm{PP}(n+8,2) \\ 
& \vdots \\ 
& (x_1,\ldots,x_n,1^{(2^q)},z_1^{(1)},z_2^{(2)},\ldots,z_q^{(2^{q-1})},
  1-z_{q+1}) \in \mathrm{PP}(n+2^{q+1},q) \\ 
& \vdots \\ 
& (x_1,\ldots,x_n,1^{(2^{|n|-1})},z_1^{(1)},z_2^{(2)},\ldots,z_{|n|-1}^{(2^{|n|-2})},
  1-z_{|n|}) \in \mathrm{PP}(n+2^{|n|},|n|-1),
\end{array}
\label{eqn:firstpart}
\end{equation}
where, for $\ell \geq 1$, the notation $a^{(\ell)}$ denotes the string
$a,a,\ldots,a$ of length $\ell$.  We claim the following property:
\begin{equation}
\text{\parbox{.85\textwidth}{
  if $x_1,\ldots,x_n$ are in $\{0,1\}$, then there is a unique vector
  $(z_1,\ldots,z_{|n|}) \in \reals^{|n|}$ which together with
  $x_1,\ldots,x_n$ is a solution, and in
  this solution $z_k \in \{0,1\}$ for all $k$, and 
  $\textstyle{\sum_{k=1}^n x_k = \sum_{k=1}^{|n|} (1-z_k) 2^{k-1}}$;
  in other words,
  $z_1,\ldots,z_{|n|}$ are the flips of the bits of the binary
  representation of $\textstyle{\sum_{k=1}^n x_k}$, listed from least to most
  significant bit.
}} \label{prop:truncatedpp}
\end{equation}
From now on in this proof, let $X = \sum_{k=1}^n x_k$.  The first part
of the claim follows from the corresponding property of
$\mathrm{PP}(n,q)$'s, and induction on $q = 0,1,\ldots,|n|-1$.  The
second part is proved by showing that $z_1,\ldots,z_{q+1}$ are all in $\{0,1\}$
and
\begin{equation}
X \equiv \sum_{k=1}^{q+1} (1-z_k) 2^{k-1}\; (\text{mod}\; 2^{q+1})
\label{eq:ih}
\end{equation}
also by induction on $q = 0,1,\ldots,|n|-1$.  For $q = 0$ the claim is
true since, by the main property of $\mathrm{PP}(n+2,0) =
\mathrm{PP}(n+2)$, there is a unique $z_1$ for which
$(x_1,\ldots,x_n,1,1-z_1)$ is in $\mathrm{PP}(n+2)$, and this is the
unique bit that makes $X + 1 + 1 - z_1$ odd, hence $X - z_1$ odd.
Assume now that~$z_1,\ldots,z_{q+1}$ are all in $\{0,1\}$ and
that~\eqref{eq:ih} holds for $q \in \{0,\ldots,|n|-2\}$ and we prove
it for $q+1$. First observe that
\begin{equation}
X + 2^{q+1} + \sum_{k=1}^{q+1} z_k 2^{k-1}
\equiv X + \sum_{k=1}^{q+1} z_k 2^{k-1} \equiv 2^{q+1} - 1 \equiv -1 \;(\text{mod}\; 2^{q+1}),
\end{equation}
where in the second equivalence we used the induction hypothesis on $q$.
Moreover $z_1,\ldots,z_{q+1}$ are all in $\{0,1\}$, also by induction hypothesis on $q$.
Thus, the vector
\begin{equation}
  (x_1,\ldots,x_n,1^{(2^{q+1})},z_1^{(1)},z_2^{(2)},\ldots,z_{q+1}^{(2^{{q+1}-1})},1-z_{q+2})
\label{eqn:longgg}
\end{equation}
satisfies the hypothesis of property~\eqref{prop:truncatedpp} for
$\mathrm{PP}(n+2^{q+2},q+1)$. It follows that there is a
unique~$z_{q+2}$ that puts~\eqref{eqn:longgg}
in~$\mathrm{PP}(n+2^{q+2},q+1)$, and this is the unique bit such that
the quantity
\begin{equation}
\lfloor{{2^{-(q+1)} (X + 2^{q+1} + \sum_{k=1}^{q+1} z_k 2^{k-1} + 1 - z_{q+2})}}
\rfloor \label{eq:thirteen}
\end{equation}
is odd. Now,~\eqref{eq:ih} says that
\begin{equation}
X = 2^{q+1}\lfloor{2^{-(q+1)}X}\rfloor + 
\sum_{k=1}^{q+1} (1-z_k) 2^{k-1}, 
\end{equation}
which means that the unique bit that makes the quantity in~\eqref{eq:thirteen}
odd is the flip of the $(q+1)$-th least significant bit of $X$.

Now, exact-$t$ counting gates can be expressed
using an additional linear program that simulates an AND gate to
compare the bits $z_1,\ldots,z_{|n|}$ with (the flips of) the bits of
the binary representation of $t$. For both $b \in
\{0,1\}$, let $K_b \subseteq [|n|]$ be the subset of bit-positions at
which the $|n|$-long binary representation of $t$ is $b$. Then, the
relation $y = \mathrm{EX}_{n,t}(x_1,\ldots,x_n)$ is expressed by the
union of~\eqref{eqn:firstpart} and the following:
\begin{equation}
\begin{array}{lll}
y \geq \sum_{k \in K_0} z_k + \sum_{k \in K_1} (1-z_k) - |n| + 1 & \;\;\; & \\
y \leq z_k & \;\;\; & \text{for every } k \in K_0 \\
y \leq 1-z_k & & \text{for every } k \in K_1 \\
0 \leq y \leq 1. 
\end{array}
\label{eqn:secondpart}
\end{equation}
We write $\mathrm{EX}_{n,t}(x_1,\ldots,x_n, y)$ to denote
the LP that has all the constraints in~\eqref{eqn:firstpart} and all
the constraints in~\eqref{eqn:secondpart}. We summarize its main
properties in the following:

\begin{lemma} \label{lem:exgates}
The linear program $\mathrm{EX}_{n,t}(x_1,\ldots,x_n, y)$ has size
polynomial in $n$, is symmetric with respect to the group of
permutations of $x_1,\ldots,x_n,y$ that fix $y$, and has the following
property: If $x_1,\ldots,x_n$ are all in $\{0,1\}$, then there is a
unique $y \in \reals$ such that $(x_1,\ldots,x_n, y)$ can be extended
to a feasible solution, and this $y$ is the unique output bit of the
corresponding gate evaluated on inputs $x_1,\ldots,x_n$.
\end{lemma}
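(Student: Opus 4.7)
The plan is to verify each of the three properties separately, leveraging the machinery already set up in the preceding subsections, namely the LP description of $\mathrm{PP}(n',q)$, the symmetry of that LP in the first $n'$ variables, and the key unique-extension property~\eqref{prop:truncatedpp}.

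For the size bound, I would observe that $\mathrm{EX}_{n,t}(x_1,\ldots,x_n,y)$ is the union of~\eqref{eqn:firstpart} and~\eqref{eqn:secondpart}. Part~\eqref{eqn:firstpart} consists of $|n|$ many truncated parity LPs, the $q$-th of which describes $\mathrm{PP}(n+2^{q+1},q)$ in ambient dimension $n'=n+2^{q+1} \leq 2n+2$, and whose explicit formulation introduces $|T|=\lfloor n'/2^{q+1}\rfloor\cdot 2^q$ many $w$-variables together with $|T|\cdot n'$ many $z$-variables, which is $O(n^2)$ regardless of $q$. Summing over $q\in\{0,\ldots,|n|-1\}$ gives $O(n^2\log n)$. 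Part~\eqref{eqn:secondpart} contributes only $O(|n|)=O(\log n)$ constraints, so the total LP has polynomial size.

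For symmetry, let $\pi$ be a permutation of the tuple $(x_1,\ldots,x_n,y)$ that fixes $y$; equivalently, let $\pi\in\Sym_n$. I need to exhibit a permutation $\sigma$ of the auxiliary variables so that $(\pi,\sigma)$ is an automorphism of $\mathrm{EX}_{n,t}$. In each sub-LP in~\eqref{eqn:firstpart}, the variable vector handed to $\mathrm{PP}(n+2^{q+1},q)$ is obtained by concatenating $x_1,\ldots,x_n$ with padding constants and occurrences of the $z_k$'s and $1-z_{q+1}$. The permutation $\pi$ moves only the first block; the rest is fixed. Using the known symmetry of the truncated parity LP from the preceding subsection, I can extend this to an action $\sigma_q$ on the internal $w_{t,r}$ and $z_{t,r,i}$ variables of that sub-LP: keep each $w_{t,r}$ in place, and map $z_{t,r,i}\mapsto z_{t,r,\pi(i)}$ for $i\in[n]$ while fixing $z_{t,r,i}$ for $i>n$. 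Taking $\sigma$ to be the product of these $\sigma_q$ (and the identity on $z_1,\ldots,z_{|n|},y$) gives the required automorphism: part~\eqref{eqn:firstpart} is preserved sub-LP by sub-LP, and part~\eqref{eqn:secondpart} depends only on variables fixed by $\sigma$.

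For the uniqueness and correctness property, fix $x_1,\ldots,x_n\in\{0,1\}$. By property~\eqref{prop:truncatedpp} applied to the combined system~\eqref{eqn:firstpart}, there is a unique $(z_1,\ldots,z_{|n|})\in\reals^{|n|}$ completing the $x_k$'s to a feasible solution, and this tuple lies in $\{0,1\}^{|n|}$ with $1-z_k$ equal to the $k$-th least significant bit of $X=\sum_{k=1}^n x_k$. It remains to argue that~\eqref{eqn:secondpart} then pins down $y$ uniquely to the correct output $[X=t]$. Writing $K_0,K_1$ for the bit-positions where $t$ has a $0$ and a $1$ respectively, the upper-bound constraints $y\leq z_k$ for $k\in K_0$ and $y\leq 1-z_k$ for $k\in K_1$ force $y\leq 0$ unless $z_k=1$ for every $k\in K_0$ and $z_k=0$ for every $k\in K_1$, i.e., unless $X=t$. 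Conversely, the lower-bound constraint $y\geq \sum_{k\in K_0}z_k+\sum_{k\in K_1}(1-z_k)-|n|+1$ forces $y\geq 1$ when $X=t$. Combined with $0\leq y\leq 1$, we conclude that $y=1$ iff $X=t$ and $y=0$ otherwise, which matches $\mathrm{EX}_{n,t}(x_1,\ldots,x_n)$.

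The only step requiring any care is the symmetry verification, because the $\sigma_q$ must simultaneously restrict correctly on the padded portion of each sub-LP; however, since $\pi$ never touches the padding, fixing those coordinates componentwise works, and no genuine obstacle arises.
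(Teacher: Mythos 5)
Your proposal is correct and follows essentially the same approach as the paper's proof: size by inspection, symmetry by invoking the symmetry of the truncated parity polytopes and fixing the $z_k$ and $y$ variables, and uniqueness/correctness by first applying property~\eqref{prop:truncatedpp} to pin down the $z_k$'s and then arguing that~\eqref{eqn:secondpart} is a standard AND-gadget forcing $y=1$ iff $\sum_k x_k = t$. You simply fill in the details the paper leaves to the reader, including the explicit $\sigma_q$ acting on the internal $w_{t,r}$ and $z_{t,r,i}$ variables of each sub-LP.
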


\begin{proof}
The bound on the size follows by inspection. The symmetry with respect
to the group of permutations of $x_1,\ldots,x_n,y$ that fix $y$
follows from the symmetry claims for the truncated parity polytopes,
together with the extension that keeps each $z_i$-variable in
place. For proving the main property, assume that $x_1,\ldots,x_n$ are
all in $\{0,1\}$.  The first part~\eqref{eqn:firstpart}
of~$\mathrm{EX}_{n,t}(x_1,\ldots,x_n, y)$ has the feature expressed in
property~\eqref{prop:truncatedpp}. Let then~$z_1,\ldots,z_{|n|}$
satisfy the conclusion in that property. Thus,~$z_1,\ldots,z_{|n|}$
are the flips of the bits of the binary representation of
$\sum_{i=1}^n x_i$.  In particular they are all in $\{0,1\}$. The
second part~\eqref{eqn:secondpart}
of~$\mathrm{EX}_{n,t}(x_1,\ldots,x_n, y)$ has the feature that if all
$z_1,\ldots,z_{|n|}$ are in~$\{0,1\}$, then there is a unique $y$ in
$\reals$ that makes a solution, and this $y$ is precisely the bit that
is $1$ if and only if all~$z_k$ with $k \in K_0$ are one, and all
$z_k$ with $k \in K_1$ are zero. In other words, $y = 1$ if and only
if $z_1,\ldots,z_{|n|}$ are the flips of the bits of the binary
representation of $t$, and hence if and only if $\sum_{k = 1}^n x_k =
t$.
\end{proof}

\subsection{The construction}

Let us recall how AND, OR and NOT gates are represented by LPs. The
LPs for these three types of gates do not require auxiliary variables,
and their size is linear in the fan-in. Define:
\begin{equation*}
\begin{array}{lllll}
\underline{\mathrm{AND}(x_1,\ldots,x_n,y)} & & \underline{\mathrm{OR}(x_1,\ldots,x_n,y)} & & 
\underline{\mathrm{NOT}(x,y)} \\
y \geq \sum_{i=1}^n x_i - n + 1 & & 1-y \geq \sum_{i=1}^n (1-x_i) - n + 1 & & y = 1-x \\
y \leq x_i & & 1-y \leq 1-x_i & & 0 \leq x \leq 1 \\
0 \leq x_i \leq 1 & & 0 \leq x_i \leq 1 & & 0 \leq y \leq 1. \\
0 \leq y \leq 1. & & 0 \leq y \leq 1. & &
\end{array}
\end{equation*}

% For a NOT gate with input $x$ and output~$y$, let
% $\mathrm{NOT}(x, y)$ denote the following LP:
% \begin{equation*}
% \begin{array}{lll}
% & y = 1-x \\
% & 0 \leq x \leq 1 \\
% & 0 \leq y \leq 1.
% \end{array}
% \end{equation*}
% For an AND gate with inputs $x_1,\ldots,x_n$ and output $y$, let
% $\mathrm{AND}(x_1,\ldots,x_n, y)$ denote the following LP:
% \begin{equation*}
% \begin{array}{lll}
% y \geq \sum_{i=1}^n x_i - n + 1 & & \\
% y \leq x_i & & \text{ for each } i \in [n] \\
% 0 \leq x_i \leq 1 & & \text{ for each } i \in [n] \\
% 0 \leq y \leq 1.
% \end{array}
% \end{equation*}
% For an OR gate with inputs $x_1,\ldots,x_n$ and output $y$, let
% $\mathrm{OR}(x_1,\ldots,x_n,y)$ denote the constraints that state that
% $y$ is the output of
% $\mathrm{NOT}(\mathrm{AND}(\mathrm{NOT}(x_1),\ldots,\mathrm{NOT}(x_n)))$:
% \begin{equation*}
% \begin{array}{lll}
% 1-y \geq \sum_{i=1}^n (1-x_i) - n + 1 & & \\
% 1-y \leq 1-x_i & & \text{ for each } i \in [n] \\
% 0 \leq x_i \leq 1 & & \text{ for each } i \in [n] \\
% 0 \leq y \leq 1.
% \end{array}
% \end{equation*}

\noindent The main properties of these LPs are summarized in the following:

\begin{lemma} \label{lem:andornotgates}
The linear programs $\mathrm{AND}(x_1,\ldots,x_n,y)$,
$\mathrm{OR}(x_1,\ldots,x_n,y)$, and $\mathrm{NOT}(x_1,y)$ have size
linear in $n$, are symmetric with respect to the group of
permutations of its variables that fix $y$, and have the following
property: If $x_1,\ldots,x_n$ are all in $\{0,1\}$, with $n = 1$
for~$\mathrm{NOT}$, then there is a unique $y \in \reals$ that makes
$(x_1,\ldots,x_n,y)$ feasible, and this $y$ is the unique output bit
of the corresponding gate evaluated on inputs $x_1,\ldots,x_n$.
\end{lemma}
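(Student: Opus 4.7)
The plan is to verify the three claims (size, symmetry, and correctness) separately for each of the three LPs, essentially by direct inspection. There is no significant obstacle; the lemma is a routine sanity check of the definitions, strictly simpler than Lemma~\ref{lem:exgates} whose structure we follow.

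\textbf{Size.} By inspection, $\mathrm{AND}(x_1,\ldots,x_n,y)$ has one ``aggregated'' inequality, $n$ inequalities of the form $y \leq x_i$, $2n$ box constraints on the $x_i$, and two box constraints on $y$; total $3n+3$ constraints, each of bit-complexity $O(\log n)$. The same counting works for $\mathrm{OR}$. For $\mathrm{NOT}(x,y)$ the count is a constant. Hence all three have size linear in $n$ (in fact $O(n)$ with small constants).

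\textbf{Symmetry.} First I would note that the $y$-variable is distinguished in each LP and never appears on the same footing as the $x_i$. In $\mathrm{AND}$, the block $\{y \leq x_i : i \in [n]\}$ and the block $\{0 \leq x_i \leq 1 : i \in [n]\}$ are each permuted among themselves by any permutation of $x_1,\ldots,x_n$ that fixes $y$, and the single inequality $y \geq \sum_{i=1}^n x_i - n + 1$ is invariant because the coefficients on the $x_i$ are all equal. The argument for $\mathrm{OR}$ is identical after replacing $x_i$ by $1-x_i$ and $y$ by $1-y$. For $\mathrm{NOT}$ the only non-trivial permutation in the stabilizer of $y$ is the identity, so symmetry is vacuous.

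\textbf{Main property.} I would split into cases on the value of the intended output. For $\mathrm{AND}$, assume $x_1,\ldots,x_n \in \{0,1\}$. If some $x_i = 0$, then the constraint $y \leq x_i$ forces $y \leq 0$, while $0 \leq y$ forces $y \geq 0$, so $y = 0$; and the aggregated inequality $y \geq \sum x_i - n + 1$ holds because the right-hand side is at most $0$. If instead every $x_i = 1$, then the aggregated inequality gives $y \geq 1$ while each $y \leq x_i = 1$ and $y \leq 1$ give $y \leq 1$, hence $y = 1$. Either way, $y$ exists, is unique, and equals $\bigwedge_i x_i$. The argument for $\mathrm{OR}$ is dual: use the substitution $x_i \mapsto 1-x_i$, $y \mapsto 1-y$ which turns the $\mathrm{OR}$ LP into the $\mathrm{AND}$ LP, and conclude that $y = \bigvee_i x_i$. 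For $\mathrm{NOT}$, if $x \in \{0,1\}$ then the equality $y = 1-x$ pins $y$ uniquely, and the value is the correct output. In all three cases the unique $y$ is itself in $\{0,1\}$, so the extension to a feasible solution already provides the witness and there are no auxiliary variables to worry about.
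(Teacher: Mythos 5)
Your proposal is correct and follows essentially the same approach as the paper, which simply declares the NOT case obvious, the AND case easy to check, and derives OR from the properties of AND and NOT; you carry out the same plan, with the OR-to-AND reduction made explicit via the substitution $x_i \mapsto 1-x_i$, $y \mapsto 1-y$.
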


\begin{proof} For NOT this is totally obvious. 
For AND it is easy to check, and for OR it follows from the corresponding properties 
of AND and NOT.
\end{proof}

We define the conversion from circuits to LPs. Let $U$ be a set and
let $C$ be a circuit with inputs~$(x_u)_{u \in U}$. Let $C'$ be the
circuit that results from replacing each $k$-threshold gate with
inputs $y_1,\ldots,y_m$ by $\bigvee_{t=k}^{m}
\mathrm{EX}_{m,t}(y_1,\ldots,y_m)$, where
$\mathrm{EX}_{m,t}(y_1,\ldots,y_m)$ denotes an exact counting gate
with inputs $y_1,\ldots,y_m$ which outputs $1$ if and only if the
exact number of $1$'s in the input is $t$.  The resulting circuit is
equivalent to $C$ and its number of gates is polynomial in that of
$C$. We define the LP; we call it $\mathrm{LP}(C)$.

For each gate $i$ in $C'$,
let $y_i$ be a variable constrained by the inequalities
\begin{equation}
0 \leq y_i \leq 1.
\end{equation} 
For each gate $o$ in $C'$ add the constraints and the auxiliary
variables, when necessary, that express their computation:
\begin{equation}
\begin{array}{lll}
y_o = x_u & & \text{ if $o$ is an input gate labelled by $x_u$}, \\ 
\mathrm{NOT}(y_i, y_o) & & \text{ if $o$ is a NOT gate with input $i$}, \\
\mathrm{AND}(y_{i_1},\ldots,y_{i_m}, y_o) & &
\text{ if $o$ is an AND gate with inputs $i_1,\ldots,i_m$},
\\ 
\mathrm{OR}(y_{i_1},\ldots,y_{i_m}, y_o) & &
\text{ if $o$ is an OR gate with inputs $i_1,\ldots,i_m$},
\\ 
\mathrm{EX}_{m,t}(y_{i_1},\ldots,y_{i_m}, y_o) & & 
\text{ if $o$ is an EX$_{m,t}$ gate with inputs $i_1,\ldots,i_m$}, \\
y_o = 1 & & \text{ if $o$ is the output gate of the circuit}.
\end{array}
\end{equation}
By Lemmas~\ref{lem:exgates} and~\ref{lem:andornotgates}, all six cases
have size polynomial in the number of inputs, hence the total size is
polynomial in the size of $C'$. In case $C$ does not have TH gates,
the step for replacing them by EX gates is not done, and all gates are
AND, OR, NOT, so the size of the LP stays linear in the size of
$C$. 

\begin{lemma} \label{lem:abstract}
If $U$ is a $G$-set and $C$ is $G$-symmetric, then $\mathrm{LP}(C)$ is
$G$-symmetric and recognizes the same subset of $\{0,1\}^U$ as $C$.
\end{lemma}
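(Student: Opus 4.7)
\textbf{Proof plan for Lemma~\ref{lem:abstract}.} The plan is to handle the two claims separately: correctness (same recognized set) by a bottom-up induction along $C'$, and symmetry by promoting the gate-permutation witnessing the symmetry of $C$ to a permutation of all the LP variables.

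For correctness, fix $\xvector \in \{0,1\}^U$. I would prove by induction on the depth of a gate $o$ in $C'$ that, in any feasible solution to $\mathrm{LP}(C)$ with the $x$-inputs set to $\xvector$, the value of $y_o$ is uniquely determined and equals the Boolean value computed by $o$ at $\xvector$. The base case is input gates, handled by the constraint $y_o = x_u$. For internal gates of type AND, OR, NOT the induction step is exactly Lemma~\ref{lem:andornotgates}: once the input values $y_{i_1},\ldots,y_{i_m}$ are in $\{0,1\}$ with the right Boolean values, the gadget admits a unique $y_o \in \reals$, which is the correct output bit. The EX gadgets are handled by Lemma~\ref{lem:exgates} in the same way. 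Applying the induction to the output gate and combining with the constraint $y_o = 1$, $\mathrm{LP}(C)$ is feasible on $\xvector$ iff $C'(\xvector)=1$, and since each threshold gate $\mathrm{TH}_{m,k}$ was replaced by $\bigvee_{t=k}^m \mathrm{EX}_{m,t}$, we have $C'(\xvector) = C(\xvector)$. Moreover, if $\xvector \notin \{0,1\}^U$ the polytope does not need to recognize anything, and if $\xvector \in \{0,1\}^U$ feasibility is equivalent to acceptance, so $\mathrm{LP}(C)$ recognizes the same subset of $\{0,1\}^U$ as $C$.

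For symmetry, fix $\pi \in G$ and let $\sigma' \in \Sym_W$ be a permutation of the gates of $C$ witnessing $C^{(\pi,\sigma')} = C$. The first step is to extend $\sigma'$ to a permutation $\tau$ of the gates of $C'$: for each threshold gate $o$ of $C$, the replacement block attached to $o$ consists of canonical EX gates and one OR gate whose structure is determined by the fan-in and threshold of $o$; since $\sigma'$ carries threshold gates to threshold gates of the same fan-in and threshold, we can map the block at $o$ gate-by-gate onto the block at $\sigma'(o)$. This $\tau$ is a symmetry of $C'$ with respect to $\pi$. The second step is to define a permutation $\sigma$ of all variables of $\mathrm{LP}(C)$: send each $y_o$ to $y_{\tau(o)}$, and for the auxiliary variables $(w_{t,r}, z_{t,r,i}, z_k)$ inside each EX, AND, OR, or NOT gadget, transport them to the corresponding variables of the gadget at $\tau(o)$, following the input bijection the gadget at $o$ induces onto the gadget at $\tau(o)$.

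The last step is to check that every constraint of $\mathrm{LP}(C)$ is mapped to another constraint of $\mathrm{LP}(C)$ by $(\pi,\sigma)$. A constraint either comes from an input gate $o$ with $y_o = x_u$, in which case it is mapped to $y_{\tau(o)} = x_{\pi \cdot u}$ (a constraint present because $\tau(o)$ is the input gate for $x_{\pi\cdot u}$ by the symmetry of $C$), or from the output equation $y_o = 1$ (mapped to the corresponding equation at $\tau(o)$, which is again the output gate), or from a gadget for a non-input gate $o$. In the last case the image is precisely the gadget at $\tau(o)$, which is legal thanks to the internal symmetry of each gadget with respect to permutations of its $x$-variables that fix $y$, guaranteed by Lemmas~\ref{lem:exgates} and~\ref{lem:andornotgates}. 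I expect the main obstacle to be bookkeeping: keeping the correspondence between gadget-internal auxiliary variables and the image gadget clean enough that the sentence ``the image of each constraint is a constraint'' is visibly true — this is where having each gadget be symmetric with respect to \emph{all} permutations of its inputs that fix the output (and not merely the identity) is essential, because the map $\tau$ need not preserve the listing order of the inputs of a given gate.
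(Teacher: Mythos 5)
Your proof is correct and follows essentially the same route as the paper: both use the gadget-level symmetry claims in Lemmas~\ref{lem:exgates} and~\ref{lem:andornotgates} to extend a gate permutation of $C'$ to an automorphism of $\mathrm{LP}(C)$, and both correctly isolate the crucial point that the gadgets must be symmetric under arbitrary permutations of their inputs (fixing the output), since the gate permutation need not preserve the listing order of inputs. Your only departure is a welcome one: you explicitly construct the $G$-symmetry of the intermediate circuit $C'$ from that of $C$ (carrying the canonical EX-and-OR replacement block for each threshold gate $o$ to the block for $\sigma'(o)$), whereas the paper simply asserts ``the intermediate circuit $C'$ is also $G$-symmetric'' without spelling this out.
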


\begin{proof}
The claim that $\mathrm{LP}(C)$ recognizes the same subset of
$\{0,1\}^U$ as $C$ follows from Lemmas~\ref{lem:exgates}
and~\ref{lem:andornotgates}. We prove the symmetry. First note that the
intermediate circuit $C'$ is also $G$-symmetric. Now we show that the
LP is also $G$-symmetric. Fix some $\pi \in G$. Let~$\sigma$
be a permutation of the gates of $C'$ so that the pair $(\pi,\sigma)$
leaves $C'$ in place. In particular, for each gate $o$ of $C'$ with
inputs $i_1,\ldots,i_m$, if $p = \sigma(o)$, then $p$ is the same type
of gate as $o$, has the same fan-in $m$, and if $o$ is an input gate
fed by $x_u$, then $p$ is an input gate fed by~$x_{\pi(u)}$. Moreover,
if~$j_1,\ldots,j_{m}$ are the inputs of gate $p$, then there is a
permutation~$\tau_o \in \Sym_m$ so that $\sigma(i_k) = j_{\tau_o(k)}$
for every $k \in [m]$. If we think of $\sigma$ as mapping the output
variable $y_o$ of the linear program $P_o =
P(y_{i_1},\ldots,y_{i_m},y_o)$ for gate $o$ to the output variable
$y_p$ of the linear program $P_p = P(y_{j_1},\ldots,y_{j_m},y_p)$ for
gate $p$, then, by the above, this map takes the input variables of
$P_o$ to the input variables of $P_p$. We want to show that
this~$\sigma$ can be extended to also map the auxiliary variables of
$P_o$ to the auxiliary variables of $P_p$ in such a way that the
resulting extension of $\pi$ is an automorphism of the linear program
$P$ for~$C'$. We define this extension automorphism gate by gate.

We start with the internal gates of $C'$. By the symmetry claims in
Lemmas~\ref{lem:exgates} and~\ref{lem:andornotgates}, the permutation
that agrees with $\tau_o$ on the input variables~$y_{j_1},\ldots,y_{j_m}$
of $P_p$ and that fixes the output variable $y_{p}$, extends to an
automorphism~$\rho_{o}$ of $P_p$. With the automorphisms $\rho_o$ in
hand, we are ready to define the automorphism of $P$ that extends
$\pi$: for each gate $o$, map the variable $y_o$ to $y_{\sigma(o)}$,
and map each auxiliary variable of~$P_o$ to the image under $\rho_o$
of the corresponding auxiliary variable of $P_p$, where $p =
\sigma(o)$. For all internal gates, this has the required properties
by construction. For the gates $o$ that are fed by a variable~$x_u$,
the gate~$\sigma(o)$ must be fed by the variable $x_{\pi(u)}$, and
therefore $y_o = x_u$ gets mapped to~$y_{\sigma(o)} = x_{\pi(u)}$, as
required. For the output gate $o$ of $C'$ we have $\sigma(o) = o$, and
the constraint $y_o = 1$ gets mapped to itself.
\end{proof}

\begin{proof}[Proof of Lemma~\ref{lem:circuitstolps}]
This is a consequence of Lemma~\ref{lem:abstract}: for the $n$-th
LP we let $U = L(n)$, 
%let $G$ be the natural group action that
%$\Sym_n$ induces on $L(n)$, 
let $G = \Sym_n$ with the natural action on $L(n)$, 
and use $\mathrm{LP}(C_n)$, where $C_n$ is
the~$n$-th circuit.
\end{proof}

\section{From LPs to Logic}\label{sec:fromLPstocircuits}

We say that a
function $s(n)$ is at most weakly exponential if there exists a
positive real $\epsilon$ such that $s(n) \leq 2^{n^{1-\epsilon}}$ for
every sufficiently large $n$.
In this section we prove the second half of the equivalence
which takes families of symmetric linear programs to families of
formulas of counting logic. That is:

\begin{lemma}\label{lem:secondhalf}
If $\mathscr{C}$ is a class of $L$-structures that is recognized by a family of
$L$-symmetric polytope lifts of size $s(n)$, then $\mathscr{C}$
is recognized
by a family of $\mathrm{C}^{k(n)}$ formulas, where $k(n) =
O(\log(s(n))/(\log(n)-\log\log(s(n))))$.
Moreover, if $s(n)$ is at most weakly exponential,
then the formulas have size $s(n)^{O(1)}$.
\end{lemma}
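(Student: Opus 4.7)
The plan is to show that any family of $L$-symmetric polytope lifts can be interpreted in the input structure $\Astruct$ by a formula of counting logic using $O(k(n))$ variables, and then to apply the known fact from \cite{Anderson:2015} that LP feasibility over the rationals is definable in $\FPC$.

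The first step is to invoke the support theorem of Section~\ref{sec:supports}: for the $n$-th symmetric LP $P_n$ of size at most $s(n)$, every constraint $\gamma_v$ and every auxiliary variable $y_w$ admits a support in $[n]$ of size at most $k(n)=O(\log(s(n))/(\log(n)-\log\log(s(n))))$, where a support of $e$ is a set whose pointwise stabilizer inside $\Sym_n$ fixes $e$. Consequently, each $\gamma_v$ and each $y_w$ is determined, up to the $\Sym_n$-action, by its orbit together with a $k(n)$-tuple of distinct elements of $[n]$. Since $P_n$ has at most $s(n)$ constraints and variables, there are at most $s(n)$ orbits of each kind; we fix once and for all a canonical representative and a canonical support tuple for each orbit.

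The second step is the interpretation. I would define, in counting logic with $O(k(n))$ variables, a rational tensor structure $\mathbb{M}(\Astruct)$ over the vocabulary $L_{\mathrm{vec},d}$ that encodes the LP $\Amatrix\xvector+\Bmatrixb\yvector \leq \bvector$ after substituting for $\xvector$ the Boolean encoding of $\Astruct$. Constraint and variable indices are named by tagged $k(n)$-tuples of elements of $\Astruct$, with one tag per orbit. The entries of $\Bmatrixb$ and of the constant part of $\bvector$ depend only on the pair of orbits and on the equality/ordering pattern within the concatenated support tuples, so they are constants precomputed from $P_n$; encoding them requires only finitely many cases plus a bounded-size representation on the bit-position sort. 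The contribution of $\xvector$ to the offset is a $0/1$ coefficient times an atomic formula of $\Astruct$ and is thus directly definable.

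Finally, compose this interpretation with the $\FPC$ formula $\psi$ from \cite{Anderson:2015} that tests feasibility of a rational LP presented as an $L_{\mathrm{vec},d}$-structure: $\psi$ uses a constant number of variables, so its pullback along the interpretation yields an $\FPC$ sentence over $L$-structures using $O(k(n))$ variables. Unrolling to $\CLogic^{O(k(n))}$ on structures of size $n$ costs a factor polynomial in $n^{k(n)}$, and when $s(n) \leq 2^{n^{1-\epsilon}}$ one has $\log n - \log\log s(n) \geq \epsilon\log n$, so $k(n)\log n = O((\log s(n))/\epsilon)$ and $n^{k(n)} = s(n)^{O(1)}$, giving the claimed size bound. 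The main obstacle is the interpretation step: a single orbit element admits many distinct support tuples, one per coset of the setwise stabilizer of the canonical representative, and the formula must return the correct coefficient irrespective of which support tuple is chosen; this consistency is precisely what the improved support theorem of Section~\ref{sec:supports} is tailored to deliver, and it is what determines the constant hidden in $k(n)$.
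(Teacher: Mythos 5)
Your overall architecture matches the paper's (supports, tuple encoding, an $\FOC$-interpretation producing the LP, composition with the $\FPC$ feasibility sentence from~\cite{Anderson:2015}, unrolling to $\CLogic^{O(k)}$), and your final size calculation is correct. But there is a genuine gap at the start: the support theorem (Lemma~\ref{lem:supports}) is stated, and must be stated, for \emph{rigid} $L$-symmetric polytopes, meaning those for which each $\pi \in \Sym_n$ has a \emph{unique} extension $\sigma_\pi \in \Sym_W$ with $P^{(\pi,\sigma_\pi)} = P$. Rigidity is exactly what makes $\pi \mapsto \sigma_\pi$ a homomorphism and hence yields a genuine $\Sym_n$-action on the auxiliary variables and constraints; the notions of support, of orbit, and of the encoding ``orbit tag plus support tuple $\mapsto$ element'' all live downstream of this action. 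You pass directly from ``$L$-symmetric of size $s$'' to ``every constraint and variable has a $k$-bounded support'' without this reduction, and for a non-rigid $P$ the notion is not well-posed: $\ext(\id)$ can be a non-trivial subgroup of $\Sym_W$, so extensions are not unique, and two distinct auxiliary variables can lie in the same orbit and share the same would-be support tuple, so the intended encoding fails to be a function and Lemma~\ref{lem:propertymanageable}-style coefficient determination breaks. The paper handles this with a separate reduction (Section~\ref{sec:rigid}, Lemma~\ref{lem:rigidity}): iteratively identify auxiliary variables lying in a common $\ext(\id)$-orbit until the polytope becomes rigid, verifying that the shadow is preserved (by an averaging argument over $\ext(\id)$) and that coefficient bit-sizes grow by at most a $\log(s)$ factor. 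This step cannot be elided.

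A secondary, smaller misattribution: you say the consistency of coefficients across the many support tuples naming a single element ``is precisely what the improved support theorem \ldots is tailored to deliver.'' In fact this consistency is a consequence of rigidity together with boundedness of supports, and the paper packages it as Lemma~\ref{lem:propertymanageable} (coefficients depend only on orbit tags and the equality type of the concatenated index tuples). What the strengthened support theorem actually contributes --- via the index bound of Lemma~\ref{lem:even-support} and the lift from even supports to genuine supports using Liebeck's theorem (Lemma~\ref{lem:alt}) --- is the improved bound $k = O(\log(s)/(\log n - \log\log s))$, valid for $s$ up to $2^{n^{1-\epsilon}}$ rather than the $2^{n^{1/3}}$ regime of~\cite{AndersonDawar2017}.
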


The key technical tool 
is the notion of a \emph{bounded support}.
The existence of bounded supports implies that
a property of $L$-structures 
recognised by a family of $L$-symmetric LP lifts
is recognised by a family of \emph{manageable} such LP lifts. 
Let us illustrate the notion of a manageable linear program
by an example.

Consider a graph-symmetric polytope $P$
over $\reals^{[2]^2} \times \reals^{2}$ given by the following
system of linear constraints:
\begin{align*}
& x_{11}    \leq 1 \\
& x_{22}  \leq 1 \\
& x_{12} - y_{12} \leq 0 \\
& x_{21} - y_{21} \leq 0 \\
& y_{12} + y_{21} \leq 1
\end{align*}
Two properties of $P$ are central to the notion
of a manageable polytope. 
Firstly, each of the auxiliary variables $y_{12}, y_{21}$
is essentially a tuple
of integers from $[2]$ of a bounded length.
%Secondly, the natural action of the group $\Sym_2$ on
%the set
%$\{ y_1, y_2, y_{12}, y_{21} \}$ witnesses
%the fact that $P$ is symmetric.
Secondly, $P$ is graph-symmetric
with respect to
the natural action of the group $\Sym_2$ on
the set $\{ y_{12}, y_{21} \}$. 
Indeed, for any permutation $\pi \in \Sym_2$ the
system of constraints:
\begin{align*}
& x_{\pi(1)\pi(1)}    \leq 1 \\
& x_{\pi(2)\pi(2)}   \leq 1 \\
& x_{\pi(1)\pi(2)} - y_{\pi(1)\pi(2)} \leq 0 \\
& x_{\pi(2)\pi(1)} - y_{\pi(2)\pi(1)} \leq 0 \\
& y_{\pi(1)\pi(2)} + y_{\pi(2)\pi(1)} \leq 1
\end{align*}
defines the same polytope.
A formal definition of a manageable polytope is
deferred to Subsection~\ref{sec:manageable}.

Given an $L$-structure over an $n$-element domain,
a manageable polytope lift
over $\reals^{L(n)} \times \reals^W$
can be turned into a linear program whose variables
are indexed by tuples of elements of the structure of bounded
length. Moreover, the symmetry condition guarantees
that this can be done
without referring to any concrete bijection between
$[n]$ and the domain of the structure.
Thus, as we show, it can be performed
by means of logical interpretations.

In what follows, from a family $(P_n)_{n \in \naturals}$
of $L$-symmetric LP lifts
we obtain a family $(\bar{P}_n)_{n \in \naturals}$ of
manageable LP lifts recognising the same family of structures.
Lemma~\ref{lem:rigidity} below implies
that from each $P_n$
one can construct a polytope lift $\widehat{P}_n$
which recognises the same subset of $\{0,1\}^{L(n)}$
but comes with an action of the group $\Sym_n$ witnessing its
symmetry. Further, in Subsection~\ref{sec:supports}
we show that the action of $\Sym_n$ on each of the
constraints and
auxiliary variables of $\widehat{P}_n$ depends on a subset of $[n]$ of
bounded size called its support.
In the second part of Subsection~\ref{sec:supports}
we analyse properties
of sets whose elements have bounded supports in order to show
that they are essentially sets of tuples of integers from $[n]$.
This implies, in Subsection~\ref{sec:manageable},
that each $\widehat{P}_n$ after a small modification becomes
a manageable LP lift $\bar{P}_n$. 
Finally, in Subsection~\ref{sec:translation}
based on $\bar{P}_n$ we construct a $\FOC$-interpretation
that given an $L$-structure $\mathbb{A}$
over an $n$-element domain 
%and a relational encoding
%of $\bar{P}_n$ 
outputs a linear program
which has a solution if and only if $\mathbb{A}$ belongs to the class of interest (for a definition of an interpretation
see e.g.~\cite{hodges_1993}).
%The advice function will surely
%encode the family $(\bar{P}_n)_{n \in \naturals}$.
Since solving linear programs is expressible in FPC~\cite{Anderson:2015},
we are able to conclude the proof.

%Throughout the section
%we will make sure that each step of the construction is computable
%by a polynomial-time algorithm which makes calls to a graph-isomorphism
%oracle.

\subsection{Rigid polytopes}\label{sec:rigid}

In this subsection we consider
general $G$-symmetric LPs,
i.e., not necessarily $L$-symmetric.

Let $U$ be a $G$-set and let
$P \subseteq \reals^U \times \reals^W$ be a 
$G$-symmetric polytope given
by a sequence of linear 
constraints $(\gamma_v)_{v \in V}$
where each 
$\gamma_v$ is of the form
$ \avector^\transpose\xvector + \bvector^\transpose\yvector  \leq c$,
with $\xvector = (x_u)_{u \in U}$ and $\yvector
= (y_w)_{w \in W}$. 
We say that the polytope $P$ is \emph{rigid} if for every $\pi \in G$ 
there exists a unique element of $\Sym_W$,
let us denote it by $\sigma_{\pi}$,
such that
$P^{(\pi, \sigma_{\pi})} = P$.

Assume that $P$ is rigid. It is easy see that
the mapping from
$G$ to $\Sym_W$ given by $\pi \mapsto \sigma_\pi$
is a group homomorphism. 
Hence, there is
a natural action of the
group $G$ on the set of auxiliary variables $\{y_w\}_{w \in W}$
such that for any $\pi \in G$ and $w \in W$
applying $\pi$ to $y_w$ gives $y_{\sigma_\pi(w)}$,
and the polytope $P$ is $G$-symmetric
with respect to this action.
Moreover, this induces
an action of the group $G$ on the set of
linear constraints
$\{\gamma_v\}_{v \in V}$ in the obvious way:
for any $\pi \in G$ and any $v \in V$
applying $\pi$
to $\gamma_v$
of the form $\avector^\transpose\xvector + \bvector^\transpose\yvector \leq c$ gives
$\avector^\transpose\xvector^{\pi} +
\bvector^\transpose\yvector^{\sigma_\pi}  \leq c$, and the symmetry of
$P$ guarantees that this is also a constraint.
For rigid $G$-symmetric polytopes, we write
$\yvector^{\pi}$ to mean $\yvector^{\sigma_\pi}$, we 
use $\gamma_v^{\pi}$ to denote
$\avector^\transpose\xvector^{\pi} + \bvector^\transpose\yvector^{\sigma_\pi}  \leq c$, and $P^\pi$ to denote $P^{(\pi,\sigma_\pi)}$.

Suppose that a subset $A$ of $\{0,1 \}^U$
is recognised by a $G$-symmetric polytope lift~$P$.
We show that there exists 
a rigid $G$-symmetric polytope lift $\widehat{P}$
of size polynomial in the size of $P$ recognising $A$.
More precisely, the number of auxiliary variables
and the number of constraints of $\widehat{P}$ are at most, respectively, the number of
auxiliary variables and the number of constraints of $P$,
and the bit-size of the coefficients which appear in
the linear constraints defining $\widehat{P}$ is polynomial in the bit-size of the coefficients which appear in
the linear constraints defining~$P$.
 
%Moreover, $\widehat{P}$ can be computed from $P$ by a polynomial-time
%algorithm that uses calls to a graph-isomorphism oracle.

The construction of $\widehat{P}$ goes as follows.
For the subgroup of $\Sym_W$
consisting of all permutations $\sigma$ such that $P^{(\id, \sigma)} = P$,
consider the orbits of the set of auxiliary variables
$\{y_w\}_{w \in W}$ under the action of this subgroup.
By identifying the variables in each of those orbits we obtain
a new $G$-symmetric polytope lift recognising $A$ with potentially
smaller number of auxiliary variables.
This procedure
needs to be iterated
until the obtained polytope is rigid. 
Let us provide more details.

We need a few pieces of notation.
For every $\pi \in G$, by $\ext(\pi)$ we denote the set of all
$\sigma \in \Sym_W$ satisfying $P^{(\pi,\sigma)} = P$ and 
by~$H$ we denote the union of
$\ext(\pi)$ over all the elements~$\pi$ of~$G$, i.e.,
$H = \{ \sigma \in \Sym_W  :   \text{ there exists } \pi \in G \text{ such that } P^{(\pi, \sigma)} = P \}$.
It is easy to see that both $\ext(\id)$ and $H$ are subgroups of $\Sym_W$.
Moreover, for any $\pi \in G$ the set $\ext(\pi)$
is a coset of $\ext(\id)$ in $H$.
Now let $K = \sum_{\pi \in G} \{\pi \} \times \ext(\pi) $.
%or in other words, let $K$ be the automorphism group of $P$.
Observe that $K$ is a subgroup of the direct product $G \times H$ and
consider the projection homomorphism $f_{G} : K \rightarrow G$
given by $f_G(\pi, \sigma) = \pi$. The kernel of this homomorphism is
the group $\{ \id \} \times \ext(\id)$. Let us denote
it by $J$.
Since the homomorphism $f_G$ is surjective, 
the quotient $K / J$ is isomorphic to $G$.

Notice that the polytope $P$ is rigid if, and only if, 
the group $\ext(\id)$ is
trivial. Assume therefore that this is
not the case and let $\mathcal{O}$ denote the set
of orbits of the set of auxiliary variables $\{y_w\}_{w \in W}$
under the action of $\ext(\id)$.
Recall that $\mathcal{O}$ forms a partition of $\{y_w\}_{w \in W}$.
For every orbit $O \in \mathcal{O}$, 
we introduce a new variable $y_{O}$ and
for any polytope $R$,
let $\widehat{R}$ denote the polytope obtained from $R$
by substituting, for each $O \in \mathcal{O}$, every variable in
$O$ (if present) by $y_{O}$.
We aim to show that $\widehat{P}$ is a $G$-symmetric
polytope recognizing $A$.

%We will first show that the group $G$
%acts on the set
%of orbits of the set of variables $\{y_i\}_{1\leq i \leq m}$
%under the action on $\ext(\id)$, which we denote by $\mathcal{O}$.

%If the group $\ext(\id)$ is
%trivial, then for any $\pi \in G$ the set $\ext(\pi)$ has one element,
%that is, for any $\pi \in G$ there exists
%a single permutation $\sigma_\pi \in \Sym_m$ such that
%$P^{(\pi, \sigma_\pi)} = P$. Moreover, the mapping given by
%$\pi \mapsto \sigma_\pi$ is a group homomorphism from $G$
%to $\Sym_m$. Hence, the polytope $P$
%is already $G$-automorphic and there is nothing left for us to do.
%Let us therefore assume that the group $\ext(\id)$ has
%more than one element. We will first show that the group $G$
%acts on the set
%of orbits of the set of variables $\{y_i\}_{1\leq i \leq m}$
%under the action on $\ext(\id)$, which we denote by $\mathcal{O}$.

The projection homomorphism $f_{H} : K \rightarrow H$
given by $f_H(\pi, \sigma) = \sigma$
can be seen as a homomorphism from $K$ to $\Sym_W$ and
hence defines an action of the group $K$ on 
the set of variables $\{y_w\}_{w \in W}$.
Since $J$ is a normal subgroup of $K$,
the quotient group $K / J$ acts on the set of orbits of 
$\{y_w\}_{w \in W}$ under the (induced) action of $J$:
for any $\pi \in G$, $\sigma \in \ext(\pi)$ and $w \in W$,
applying $(\pi,\sigma)J$ to the orbit of $y_w$ maps it
to the orbit of $y_{\sigma(w)}$.  
Since the groups $J$ and $\ext(\id)$ as well as
the groups $K /J$ and $G$ are isomorphic,
this gives us an action of $G$ 
on $\mathcal{O}$ which, as we argue below, witnesses the
fact that the polytope $\widehat{P}$ is $G$-symmetric.
Unfolding the abstract definition of this group action,
there is a homomorphism $h$ from the group $G$ to
the symmetric group $\Sym_{\mathcal{O}}$ such that
for every $\pi \in G$, $h(\pi)$ is the permutation of $\mathcal{O}$,
which for any $w \in W$,  
maps the orbit of the variable $y_w$
to the orbit of the variable $y_{\sigma(w)}$,
where $\sigma$ is any permutation from $\ext(\pi)$.
Finally, let us observe that,
for any $\pi \in G$ and $\sigma \in \ext(\pi)$, it holds that 
$\widehat{P}^{(\pi,h(\pi))} = \widehat{P^{(\pi, \sigma)}} = \widehat{P}$,
which means that indeed $\widehat{P}$ is $G$-symmetric.
It remains to show that $\widehat{P}$ recognises $A$.

Let $\xvector \in A$ and take some $\yvector \in \reals^W$
such that $(\xvector, \yvector) \in P$. 
Note that for every $\sigma \in \ext(\id)$ it holds that
$(\xvector, \yvector^\sigma) \in P$. Hence, we have
$(\xvector, \yvector') \in P$ where
$\yvector' = \sum_{\sigma \in \ext(\id)} \yvector^\sigma / |\ext(\id)|$.
Now for every $O \in \mathcal{O}$ let $p_O$
be the sum of the values taken
by the variables from $O$ in the solution $(\xvector, \yvector)$.
In the solution $(\xvector, \yvector')$ every variable
$y$ in $\mathcal{O}$ takes value 
$p_O |\ext(\id)_{y}| / |\ext(\id)| = p_O / |\mathcal{O}|$, where $\ext(\id)_{y}$
denotes the stabiliser of $y$ in $\ext(\id)$.
Hence, by assigning for every $O \in \mathcal{O}$
the value $p_O$ to
the variable $y_O$
we obtain a point $(\xvector, \widehat{\yvector})$ in $\widehat{P}$. 
This implies that $A$ is contained in the subset of $\{0,1\}^U$ recognised by
$\widehat{P}$.
The other inclusion is clear.

We are now ready to state the main conclusion of this subsection.
\begin{lemma}\label{lem:rigidity}
  For every $G$-symmetric polytope $P$ of size $s$, there is a rigid $G$-symmetric
  polytope $Q$ which recognises the same set and the size of $Q$ is
  not more than $s \log(s)$.
\end{lemma}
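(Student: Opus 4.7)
My plan is to iteratively apply the construction described in the discussion preceding the lemma. Beginning with $P_0 = P$, at step $i$ I compute the group $\ext(\id) \leq \Sym_{W_i}$ of permutations of the auxiliary variables of $P_i$ that fix it formally when combined with the identity on input variables, and let $P_{i+1}$ be the polytope $\widehat{P_i}$ obtained by identifying auxiliary variables within each $\ext(\id)$-orbit. I stop as soon as $\ext(\id)$ is trivial and take $Q$ to be the resulting polytope.

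Termination is immediate: if $\ext(\id)$ is non-trivial at stage $i$, then at least one of its orbits on $W_i$ has size at least two, so $|W_{i+1}| < |W_i|$. Hence the iteration halts after at most $|W_0|$ steps with a polytope that is rigid by construction. Preservation of $G$-symmetry and of the recognised subset of $\{0,1\}^U$ follows by induction on the number of iterations, simply by appealing to the arguments already given in the discussion preceding the lemma at each step.

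The subtle part is the size bound. The number of input variables is fixed throughout, the number of auxiliary variables is strictly decreasing, and the number of distinct constraints can only decrease (identical ones after substitution may be deduplicated). The only growth can occur in the bit-size of coefficients. Here the key observation is that each auxiliary variable of $Q$ corresponds to a subset $S \subseteq W_0$ of the original auxiliary variables, namely the union of the iteratively merged orbits that eventually collapse to it, and its coefficient in any constraint of $Q$ equals $\sum_{w \in S} a_w$, where the $a_w$ are the corresponding original coefficients. Thus each final coefficient is a \emph{single} subset sum of at most $|W_0| \leq s$ original coefficients, so its bit-size grows by only an additive $O(\log|W_0|) = O(\log s)$ term rather than the multiplicative blow-up a naive iteration would suggest. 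Multiplying through the size formula $(|U|+|W|+1)|V|b$ yields the claimed $O(s \log s)$ bound.

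The main obstacle is controlling how the coefficients transform under repeatedly identifying orbits; the subset-sum observation above is what reduces the apparent blow-up from polynomial (iterating $|W_0|$ times, each time summing up to $|W_0|$ entries) down to the mere logarithmic factor asserted in the lemma. The bookkeeping steps for $G$-symmetry and the recognised set are straightforward inductions on top of the single-step argument already provided in the preceding discussion.
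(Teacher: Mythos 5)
Your proposal is correct and follows essentially the same route as the paper's proof: iterate the orbit-merging construction until $\ext(\id)$ is trivial, note that the number of auxiliary variables strictly decreases at each step so the process terminates, and control the coefficient bit-size by observing that each coefficient of the final polytope is a single subset sum of at most $s$ original coefficients, so the bit-size grows by at most $\log(s)$. The paper states this slightly more coarsely (bounding the new bit-size by $b\log(s)$ rather than $b+O(\log s)$), but the key observation — that iteration does not compound into a multiplicative blow-up because the merges compose into one subset sum — is exactly the same.
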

\begin{proof}
  First note that for any $G$-symmetric polytope $R$, 
%  if $R =
%  \widehat{R}$, then $R$ is rigid.  Moreover, 
  if $R$ is not rigid,
  then $\widehat{R}$ has strictly fewer variables than $R$.  Thus,
  starting at $P$, if we iterate the process, we must, in a finite
  number of steps reach a rigid polytope $Q$.  For the size bound,
  note that the number of variables and constraints in $Q$ is at most
  the corresponding number in $P$.  Moreover, each coefficient in $Q$
  is the sum of distinct coefficients in $P$, of which there are at
  most $s$.  Thus, if each
  coefficient in $P$ can be expressed with $b$ bits, each coefficient
  in $Q$ requires at most $b \log(s)$ bits and the bound follows.
\end{proof}

\subsection{Bounded supports}\label{sec:supports}

For a $\Sym_n$-set $Y$,
a subset $S$ of $[n]$ is said to be a \emph{support} of
an element $y \in Y$
if for every $\pi \in \Sym_n$ that fixes $S$ pointwise,
it holds that $\pi \cdot y = y$. Intuitively, this means
that the action of the group $\Sym_n$ on $y$
depends only on the set $S$. We get even supports by replacing the symmetric group $\Sym_n$ by the alternating group $\Alt_n$.
A subset $S$ of $[n]$ is said to be an \emph{even support}
of $y \in Y$ if for every $\pi \in \Alt_n$ that
fixes $S$ pointwise, we have $\pi \cdot y = y$.
Clearly, any support of $y$ is also an even support of $y$.
Note also that every element of a $\Sym_n$-set
is supported by the whole set $[n]$.

For a non-negative integer $k$,
an (even) support $S$ is called $k$-\emph{bounded} if
$|S| \leq k$.
A $\Sym_n$-set $Y$ is called $k$-\emph{supported}
if each element of $Y$ has a $k$-bounded support.
An $L$-symmetric polytope $P$ is called $k$-\emph{supported}
if the set of auxiliary variables
and the set of constraints of $P$ are $k$-supported.
We now show the following:

\begin{lemma} \label{lem:supports}
  There exists a positive integer $n_0$ such that for
  any positive integers $s$ and $n$
  satisfying $s \geq n \geq n_0$, the
  following holds: If $P$ is a rigid $L$-symmetric LP lift of size $s$
  for structures with $n$ elements, then $P$ is $k$-supported,
  where~$k =
  O(\log(s)/(\log(n)-\log\log(s)))$.
  Moreover, if~$s \leq 2^{n/3}$, then the size of
  $P$ is at most $n^k$.
\end{lemma}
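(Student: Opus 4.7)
My plan is to examine how the group $\Sym_n$ acts on the auxiliary variables and linear constraints of $P$. Since $P$ is rigid, the discussion of Subsection~\ref{sec:rigid} provides unique $\Sym_n$-actions on both $\{y_w\}_{w \in W}$ and $\{\gamma_v\}_{v \in V}$. The total number of variables plus constraints is bounded by the size $s$ of $P$, so every $\Sym_n$-orbit has cardinality at most $s$, and hence every stabilizer $\Stab(y) \leq \Sym_n$ has index at most $s$.

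The key tool is then a support theorem for $\Sym_n$: for $n$ sufficiently large, if $H \leq \Sym_n$ has index at most $s$, then either $H$ contains $\Alt_n$, or there is a set $S \subseteq [n]$ with $|S| \leq k = O(\log(s)/(\log(n) - \log\log(s)))$ such that every permutation fixing $S$ pointwise lies in $H$. The choice of $k$ is calibrated so that the pointwise stabilizer of an arbitrary $k$-subset already has index exceeding $s$: since $\binom{n}{k} \geq (n/k)^k$ and $\log k \leq \log\log s$, the inequality $k(\log n - \log k) > \log s$ holds for $k$ of the stated order. I would prove the support theorem by induction on $n$, combining the classification of primitive subgroups of $\Sym_n$ (which forces primitive groups not containing $\Alt_n$ to have far larger index than $s$) with a reduction that, when $H$ is intransitive or imprimitive, isolates a small $H$-orbit or non-trivial block system on which to invoke the inductive hypothesis. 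Obtaining this quantitative bound, which sharpens the $O(s^{1/3})$ bound used in~\cite{AndersonDawar2017}, is the main obstacle of the proof.

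Applying the support theorem to each auxiliary variable and each constraint of $P$ yields the $k$-supportedness, after treating separately the exceptional case when the stabilizer contains $\Alt_n$ (in which the orbit has size at most $2$ and a small support can be exhibited by a direct argument, or the element can be replaced by its symmetrization against its unique mate). For the ``moreover'' size bound, the constants hidden in the big-$O$ for $k$ can be chosen so that $k \log n \geq \log s$, and hence $s \leq n^k$; the hypothesis $s \leq 2^{n/3}$ is exactly what keeps $\log\log s$ appreciably below $\log n$, so that $k$ remains small and $n^k$ is a meaningful rather than vacuous upper bound on the size.
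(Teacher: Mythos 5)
Your core step is a ``support theorem for $\Sym_n$'' asserting that a subgroup $H \leq \Sym_n$ of index at most $s$ either contains $\Alt_n$ or contains the pointwise stabilizer $\Sym_{(S)}$ of some set $S$ with $|S| = O(\log(s)/(\log(n)-\log\log(s)))$. This statement is false. Take $H = \Alt_{(\{1\})}$, the group of even permutations fixing $1$. Its index in $\Sym_n$ is $2n$, and $\Alt_n \not\leq H$, yet $\Sym_{(S)} \leq H$ forces every permutation fixing $S$ pointwise to be even, which requires $|[n]\setminus S| \leq 1$, i.e.\ $|S| \geq n-1$ — nowhere near the $O(1)$ your calibration promises for $s = 2n$. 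What the index bound actually yields (and what Theorem~5.2B of Dixon--Mortimer, which the paper cites directly rather than re-proving by induction, states) is only the weaker conclusion $\Alt_{(S)} \leq H$: an \emph{even support}, not a support. Your plan to handle separately ``the exceptional case when the stabilizer contains $\Alt_n$'' is therefore aimed at the wrong exception; the problematic case is $\Alt_{(S)} \leq H$ with $\Sym_{(S)} \not\leq H$ for small nonempty $S$, and no inductive argument on subgroup structure can rule it out, because such subgroups really do exist at small index.

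The paper bridges this gap with an idea your sketch does not supply: it realizes the automorphism group of the rigid LP as the automorphism group of an auxiliary coloured graph $\mathbb{P}$ with $O(s^2)$ vertices, and observes that if $\Stab_w \cap \Sym_{(S)}$ were exactly $\Alt_{(S)}$ then a recolouring $\mathbb{P}_w$ of this graph would have full automorphism group isomorphic to $\Alt_{n-|S|}$. Liebeck's theorem (any graph whose full automorphism group is $\Alt_m$ has at least $\tfrac{1}{2}\binom{m}{\lfloor m/2\rfloor}$ vertices) then gives a contradiction when $s \leq 2^{n/3}$, since $O(s^2) < 2^{2n/3}$. This graph-theoretic step is what upgrades even supports to honest supports, and it is genuinely needed downstream: the homomorphism construction in Lemma~\ref{lem:representation} requires $\Sym_{(S)} \leq \Stab_w$, not merely $\Alt_{(S)} \leq \Stab_w$. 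The rest of your outline — index bound on stabilizers, the calibration $s < (n/k)^k \leq \binom{n}{k}$, the split at $s \lessgtr 2^{n/3}$, and the ``moreover'' bound $s < \binom{n}{k} \leq n^k$ — matches the paper and is sound; the missing ingredient is the passage from $\Alt_{(S)}$ to $\Sym_{(S)}$.
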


\begin{proof}
For the sake of simplicity we give the proof
for the case when $L$ consists of a single binary symbol, that
is for the case of graphs. The general case is completely
analogous.

%We will show that every auxiliary variable and every constraint in a
%rigid
%symmetric linear program has a small
%support.

Consider a rigid graph-symmetric polytope lift
 $P \subseteq \reals^{[n]^2}\times \reals^W$ of size $s$
which recognises some property of graphs with $n$ vertices,
and let $P$ be given
by a sequence of linear 
constraints $(\gamma_v)_{v \in V}$ where each 
$\gamma_v$ is of the form
$ \avector^\transpose\xvector + \bvector^\transpose\yvector \leq c$,
with $\xvector = (x_{ij})_{i,j \in [n]}$ and $\yvector
= (y_w)_{w \in W}$.
Since $P$ is rigid, it comes with an action of the group
$\Sym_n$ on the set of auxiliary variables $\{y_w\}_{w \in W}$
such that for every $\pi \in \Sym_n$ we have $P^\pi = P$,
and with an induced action of the group $\Sym_n$
on the set of constraints $\{\gamma_v\}_{v \in V}$.
%We will show that for $n$ large enough,
%every auxiliary variable
%and every constraint of $P$ has a $k$-bounded support,
%where~$k =
%O(\log(s)/(\log(n)-\log\log(s)))$.

If the size $s$ of the linear program $P$ is
greater than $2^{n/3}$, we can take $k = n$.
Observe that in case $s > 2^{n/3}$ we have
  \begin{equation}
    \frac{\log(s(n))}{\log(n)-\log\log(s(n))} \geq
    \frac{n/d_0}{\log(n)-\log(n)+\log(3)} = \frac{n}{3 \log(3)}. \label{eqn:stuff1}
  \end{equation}
  It follows that $k = O(\log(s)/(\log(n)-\log\log(s)))$.
  Since any element of a $\Sym_n$-set is supported by $[n]$,
for this choice of $k$,
every auxiliary variable and every constraint of $P$
has a $k$-bounded support.

%As will be explained at the end of this section,
%for linear programs of size greater than $2^{n/4}$,
%we can easily obtain Lemma~\ref{lem:secondhalf}
%without going through bounded supports and manageable polytopes.

In the case $s \leq 2^{n/3}$ the argument is more involved.
First we obtain bounded even supports.
Take $t =
  \log(s)/(\log(n)-\log\log(s))$ and $k = \lceil{t}\rceil$.  Observe
  that the denominator in the definition of $t$ is non-zero, since by
  assumption we have $s \leq 2^{n/3} < 2^n$.
%  For the same reason we
%  have $0 < t \leq k < n/e$, to be used later in the proof. 
Also, we
  have $0 < t \leq k \leq n/3\log(3) < n/4 < n/e$, to be used later in the proof. 
  Let us
  start by noting that
  \begin{equation}\label{eqn:tbound}
  t\log\left({\frac{n}{t}}\right) = \log(s)\frac{\log(n)-\log\log(s)+\log(\log(n)-\log\log(s))}{\log(n)-\log\log(s)} > \log(s). 
  \end{equation}
  The inequality follows from the fact that the big fraction in the
  middle is strictly bigger than~$1$ since $s \leq 2^{n/3}$.

For any $S \subseteq [n]$, let
$\Alt_{(S)}$ denote the group of all even permutations of $[n]$
that fix the set $S$ pointwise. 
We use the following fact which guarantees that subgroups of the
symmetric group with small index 
contain as subgroups large
alternating groups.
%~\cite[Theorem~5.2B]{Dixon-Mortimer}.
%have
%small even supports~\cite[Theorem~5.2B]{Dixon-Mortimer}
\begin{lemma}[see Theorem~5.2B in~\cite{Dixon-Mortimer}]\label{lem:even-support}
If $n > 8$ and $1 \leq k \leq n/4$, and $G$ is a subgroup of $\Sym_n$
with $[\Sym_n : G]  < {n \choose k}$, then there
is a set $S \subseteq [n]$ with $|S| < k$ such that
$\Alt_{(S)} \leq G$.
% where $\Alt_{(S)}$ denotes the group
%$\{ \pi \in \Alt_n : \pi(i) = i \mbox{ for all } i \in S \}$.
\end{lemma}

For $w \in W$, let $\Stab_w$ denote the stabilizer of $y_w$
in $\Sym_n$, i.e., the subgroup of $\Sym_n$ defined by
$\Stab_w = \{ \pi \in \Sym_n : \pi \cdot y_w = y_w \}$.
Since $[\Sym_n : \Stab_w]$ is the size of the orbit of $y_w$ under
the action of $\Sym_n$ and the total number of
auxiliary variables is bounded by the size of $P$, we have
  \begin{equation}
    [\Sym_n : \Stab_i] \leq s < \left({\frac{n}{t}}\right)^t
    \leq \left({\frac{n}{k}}\right)^k \leq {n \choose k}
  \end{equation}
  with the second following from~\eqref{eqn:tbound}, and the third from
  $0 < t \leq k < n/4 < n/e$ and the fact that~$f(x) = (n/x)^x$ is an increasing
  function of $x$ in the interval
  $(0,n/e)$. Lemma~\ref{lem:even-support} implies that,
  if $n$ is large enough, there exist
  $S \subseteq [n]$ with $|S| < k$ and $\Alt_{(S)} \leq
  \Stab_w$. This is a $k$-bounded even support of $y_w$
in the way we defined.
An entirely analogous argument yields a $k$-bounded even support
for each constraint in $\{\gamma_v\}_{v \in V}$.

In order to obtain supports in place of even supports
we need to introduce a way of looking at automorphism
groups of symmetric polytopes as automorphism groups of
graphs. 
In the following, whenever we talk about coloured vertices,
by a colour we
mean a graph gadget that forces the vertices of
a colour to be mapped to vertices of the same colour
by every possible automorphism.

Let $C$ be the set
of all different numerical coefficients which appear
in the linear constraints representing $P$.
A \emph{graph representation} of the polytope $P$
is a graph $\mathbb{P}$ with:
\begin{enumerate}\itemsep=0pt
\item five disjoint sets of vertices: $[n]$,
$\{x_{ij}\}_{i,j \in [n]}$,
$\{y_w\}_{w \in W}$, $\{\gamma_v\}_{v \in V}$ and $C$, 
\item the vertices in $[n]$, $\{x_{ij}\}_{i,j \in [n]}$,
$\{y_w\}_{w \in W}$ and $\{\gamma_v\}_{v \in V}$ 
coloured with four different colours depending on the set they belong to,
\item each vertex in $C$ coloured with a unique colour, which
does not appear anywhere else in the graph,
\item for any $i,j \in [n]$, an edge from $i$ to $x_{ij}$ and from $j$ to $x_{ij}$,
\item for any $v \in V$ and any variable which appears in the
constraint $\gamma_v$, a pair of edges: from the variable to
its coefficient, and from this coefficient to $\gamma_v$,
\item for any $v \in V$,
an edge from $\gamma_v$ to the vertex in $C$ corresponding to 
its constant term.
\end{enumerate}
Observe that the automorphism group of the graph $\mathbb{P}$ is
isomorphic
to the automorphism group of $P$. Also, the number of vertices in
$\mathbb{P}$ can be bounded by $O(s^2)$.  This is because the number
of vertices introduced in item (1) is at most $s$ and each gadget
introduced in (2) and (3) can be chosen to have at most $2s$ vertices
and there are at most $s$ of them.

For any $S \subseteq [n]$, let
$\Sym_{(S)}$ denote the group of all permutations of $[n]$
that fix the set $S$ pointwise. 
Take some $w \in W$ and let $S$ be a
$k$-bounded even support of $y_w$. Since $\Alt_{(S)} \leq \Stab_w$,
we have $\Alt_{(S)} \leq \Stab_w \cap \Sym_{(S)}  \leq \Sym_{(S)}$.
Hence, $\Stab_w \cap \Sym_{(S)} = \Alt _{(S)}$ or $\Stab_w \cap \Sym_{(S)} = \Sym _{(S)}$. We argue it is the latter case that holds using the following
theorem which states that a graph whose automorphism group
is the alternating group on an $n$-element set
must be of size exponential in $n$.

\begin{lemma}[Theorem~A in~\cite{Liebeck}]\label{lem:alt}
If $n > 22$, then the number of vertices of any graph whose full
automorphism group is isomorphic to $\Alt_n$ is at least 
$1/2 {n \choose \lfloor n/2 \rfloor } \sim  { 2^n / \sqrt{2 \pi n} }$ .
\end{lemma}

Assume that $\Stab_w \cap \Sym_{(S)} = \Alt _{(S)}$. We use
Lemma~\ref{lem:alt} to arrive at a contradiction. Consider the graph
representation $\mathbb{P}$ of the polytope $P$ modified in the
following way: the vertices
in $S \subseteq [n]$ and the vertex $y_w$ are coloured
each with a different colour which did not appear in the graph before.
Observe that the automorphism group of the graph $\mathbb{P}_w$
obtained this way is isomorphic to $\Stab_w \cap \Sym_{(S)}$, and
therefore isomorphic to $\Alt _{(S)}$, which in turn is isomorphic to
the alternating group on the set $[n - |S|]$.  And, once again, the
number of vertices of $\mathbb{P}_w$ is $O(s^2)$.  Thus, if $n$ is large
  enough, we have
  \begin{equation}
    s^2 < 2^{2n/3} < \frac{1}{2}{n \choose \lfloor{n/2}\rfloor}.
  \end{equation}
Hence, by Lemma~\ref{lem:alt}, we obtain
the desired contradiction.

From $\Stab_w \cap \Sym_{(S)} = \Sym _{(S)}$ it follows that
$\Sym_{(S)} \leq \Stab_w$, which means that $S$ is a $k$-bounded
support of $y_w$ in the way we defined. An analogous argument
yields a $k$-bounded support for each linear constraint in
$\{\gamma_v\}_{v \in V}$.
Note also that that $s < \binom{n}{k} \leq n^k$.
In particular,
$P$ has at most
  $n^k$ auxiliary variables, at most $n^k$ constraints,
  and all its coefficients and constant terms can be
  written down using at most $n^k$ bits.
\end{proof}

%For a non-negative integer $k$,
%a $\Sym_n$-set $Y$ will be called $k$-\emph{supported}
%if each element of $Y$ has a $k$-bounded support.
%A symmetric polytope $P$ will be called $k$-\emph{supported}
%if the set of auxiliary variables
%and the set of constraints of $P$ are $k$-supported.
%
%Let $(P_n)_{n \in \naturals}$ be a family 
%of polynomial-size rigid symmetric linear programs,
%and let $p$ be such that for every $n \in \naturals$,
%the polytope $P_n$ has at most $n^p$ auxiliary variables
%and at most $n^p$ facets. We have shown that there exists
%a positive integer $N$ such that for every $n > N$,
%the polytope $P_n$ is $p$-supported.
%This implies that for $k = \max\{N,p\}$,
%every polytope in the family $(P_n)_{n \in \naturals}$
%is $k$-supported, has at most $n^k$ auxiliary variables,
%and at most $n^k$ facets.
%
%\begin{theorem}
%Let $(P_n)_{n \in \naturals}$ be a family 
%of polynomial-size rigid symmetric linear programs.
%There exists a constant $k$, such that 
%every polytope in the family $(P_n)_{n \in \naturals}$
%is $k$-supported, has at most $n^k$ auxiliary variables,
%and at most $n^k$ facets.
%\end{theorem}
%
%

We now show that it is possible to (non-uniquely) represent
the auxiliary
variables and constraints of $k$-supported polytopes
by tuples of integers from $[n]$
of length $k$ in a way that is consistent with the
group action. In order for the representation to be uniform
across all $n$,
we extend the definition of the set $[n]^{(k)}$ to the case when
$k > n$.
For positive integers $n$ and $k$ such that $k > n$,
the set $[n]^{(k)}$ consists of 
$k$-tuples of elements of $[n]$ with the first $n$
components pairwise distinct and the last $k-n$
components equal to the $n$-th component.
In particular, if $k > n$, then
every tuple in $[n]^{(k)}$ contains all the elements of $[n]$.

For any positive integer $n$
and any non-negative integer $k$,
we consider the set $[n]^{(k)}$ 
as a $\Sym_n$-set with the natural action of the group $\Sym_n$.
Note that this $\Sym_n$-set has 
one orbit. Indeed, if we take any $k$-tuples
$\svector_1$ and $\svector_2$ from $[n]^{(k)}$,
then there exists a permutation $\pi$ such that
$\pi \cdot \svector_1 = \svector_2$.
This is because the equality types of any
two elements of $[n]^{(k)}$ are the same.

\begin{lemma}\label{lem:representation}
Let $Y$ be a single-orbit $k$-supported $\Sym_n$-set.
There is a surjective homomorphism from
$[n]^{(k)}$ to $Y$. 
\end{lemma}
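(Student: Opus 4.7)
The plan is to construct $f$ by choosing a representative element $y_0 \in Y$, a representative tuple $\svector_0 \in [n]^{(k)}$ whose entries cover a $k$-bounded support of $y_0$, and then extending equivariantly, using transitivity of the $\Sym_n$-action on both sides.

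More concretely, I would first pick any $y_0 \in Y$ and any $k$-bounded support $S_0 \subseteq [n]$ of $y_0$. Since $|S_0| \leq k$, I can choose $\svector_0 \in [n]^{(k)}$ whose set of components contains $S_0$: if $k \leq n$, extend $S_0$ arbitrarily to a $k$-element subset of $[n]$ and enumerate it in some order; if $k > n$, then by the extended definition every tuple in $[n]^{(k)}$ already contains all of $[n]$ among its components. By the single-orbit remark that precedes the lemma, for every $\svector \in [n]^{(k)}$ there exists $\pi \in \Sym_n$ with $\pi \cdot \svector_0 = \svector$, and I would set $f(\svector) := \pi \cdot y_0$.

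The only real thing to verify is well-definedness. Suppose $\pi_1 \cdot \svector_0 = \pi_2 \cdot \svector_0$. Then $\pi := \pi_2^{-1}\pi_1$ stabilises $\svector_0$ under the componentwise action, which means that $\pi$ fixes each of the coordinates of $\svector_0$ individually. Since the coordinates of $\svector_0$ include $S_0$, the permutation $\pi$ fixes $S_0$ pointwise, and since $S_0$ is a support of $y_0$, this gives $\pi \cdot y_0 = y_0$, i.e.\ $\pi_1 \cdot y_0 = \pi_2 \cdot y_0$. Hence $f$ is well-defined as a function from $[n]^{(k)}$ to $Y$.

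Surjectivity is then immediate, because $Y$ is a single orbit: every $y \in Y$ is $\pi \cdot y_0 = f(\pi \cdot \svector_0)$ for some $\pi \in \Sym_n$. Equivariance is forced by the construction, since for any $\sigma \in \Sym_n$ and $\svector = \pi \cdot \svector_0$ one has $f(\sigma \cdot \svector) = f((\sigma\pi) \cdot \svector_0) = (\sigma\pi) \cdot y_0 = \sigma \cdot f(\svector)$. There is no genuine obstacle in this argument; the only point that must be handled with a little care is the coupled choice of $S_0$ and $\svector_0$ to guarantee that the components of $\svector_0$ cover $S_0$, which is exactly what the bound $|S_0| \leq k$ together with the (extended) definition of $[n]^{(k)}$ are designed to permit.
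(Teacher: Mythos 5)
Your proof is correct and follows essentially the same route as the paper's: pick a representative $y_0$ with a $k$-bounded support $S_0$, pick a tuple $\svector_0 \in [n]^{(k)}$ whose components cover $S_0$, define $f$ equivariantly, and check well-definedness by noting that any permutation stabilising $\svector_0$ componentwise fixes $S_0$ pointwise and hence fixes $y_0$. The only cosmetic difference is that the paper first enlarges $S_0$ to have exactly $\min\{k,n\}$ elements and then takes $\svector_0 \in S_0^{(k)}$, whereas you extend the tuple directly; these are interchangeable.
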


\begin{proof}
Take any $y \in Y$ and let $S$ be a $k$-bounded support of $y$.
Since a superset of a support
is a support itself,
without loss of generality we can assume
that $|S| = \min\{k,n\}$. Now, pick a tuple $\svector \in S^{(k)}$.
If $k \leq n$, then $|S| = k$ and every element of $S$ appears exactly
once in the tuple $\svector$, otherwise the tuple
$\svector$ contains all the elements of $[n]$.

We define a homomorphism $f$ from the $\Sym_n$-set
$[n]^{(k)}$ to the $\Sym_n$-set $Y$
which for any $\pi \in \Sym_n$ maps $\pi \cdot \svector$ to $\pi \cdot y$.
The only thing that needs to be verified
is whether the function $f$ is well defined.
To this end, suppose that for some permutations $\pi_1, \pi_2 \in \Sym_n$
it holds that $\pi_1 \cdot \svector = \pi_2 \cdot \svector$. Then 
$\pi_2^{-1}\pi_1 \cdot \svector = \svector$,
that is, the permutation $\pi_2^{-1}\pi_1$ fixes the
support $S$ of~$y$
pointwise. Therefore, $\pi_2^{-1}\pi_1 \cdot y = y$
which implies that $\pi_1 \cdot y = \pi_2 \cdot y$.
Since $Y$ has one orbit,
the homomorphism $f$
is surjective. 
\end{proof}

Once a surjective homomorphism $f$ from a $\Sym_n$-set $[n]^{(k)}$
to a $\Sym_n$-set
$Y$ is fixed, the family
$\{f^{-1}(y)\}_{y \in Y}$ forms a partition of $[n]^{(k)}$.
Hence, for any $y \in Y$, each tuple $(i_{1},\ldots, i_{k})$ from
$f^{-1}(y)$ uniquely identifies $y$, and is called an \emph{identifier} of $y$ (with respect to
the homomorphism $f$).
In most cases each element of $Y$ has several identifiers.

Let us illustrate Lemma~\ref{lem:representation} 
by an example. For $n \geq 2$,
consider the set $Y$ of two-element subsets of
$[n]$ with the natural action of the group $\Sym_n$, i.e.,
for any $\pi \in \Sym_n$ and any distinct $i, j \in [n]$, 
we have $\pi \cdot \{ i, j \} = \{ \pi(i), \pi(j) \}$. 
This single-orbit $\Sym_n$-set is $k$-supported,
for any $k \geq 2$.
Let us take $k = 2$. The homomorphism from Lemma~\ref{lem:representation}
is then unique and given by $(i,j) \mapsto \{i,j\}$. Note that
the inverse image of any $\{i,j\} \in Y$ has two elements,
and hence,
each element $\{i,j\}$ of $Y$ has two identifiers: $(i,j)$ and $(j,i)$.
For $k = 3$, applying Lemma~\ref{lem:representation} yields
several different homomorphisms. If $n \geq 3$, one of them
is given by $(i,j,*) \mapsto \{i,j\}$, where by $*$ we
mean any element of $[n]$ distinct from both $i$ and $j$.
In this case, the inverse image of any $\{i,j\} \in Y$ has $2n-4$
elements. For $n=2$ and $k=3$, the homomorphism
is again unique and given by $(i,j,j) \mapsto \{i,j\}$
yielding two identifiers, for every element of $Y$.

To give one more example, consider a single-orbit set
which consists of a single element $y$ with the trivial action of the
symmetric group $\Sym_n$. This set 
is $k$-supported, for any non-negative integer $k$.
For any $k \geq 0$, the homomorphism
from Lemma~\ref{lem:representation} mapa each
tuple from $[n]^{(k)}$ to $y$.
In particular, for $k = 0$, we have $\epsilon \mapsto y$,
where $\epsilon$ is the only element of $[n]^{(0)}$, that is,
the empty tuple.

To represent
elements of a $k$-supported $\Sym_n$-sets
with potentially more than one orbit, we need to introduce 
several copies of the set $[n]^{(k)}$, one for each
orbit.

\begin{corollary}\label{cor:many_orbits}
Let $Y$ be a $k$-supported $\Sym_n$-set.
There is a surjective homomorphism from
$Q \times [n]^k$
to $Y$, where the size of $Q$ is equal to
the number of orbits of $Y$.
\end{corollary}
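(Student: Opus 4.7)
The plan is to reduce to Lemma~\ref{lem:representation} by decomposing $Y$ into its $\Sym_n$-orbits and applying the single-orbit representation on each. (I will read $[n]^k$ as $[n]^{(k)}$, matching the hypothesis of Lemma~\ref{lem:representation}.)

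First, I would write $Y$ as the disjoint union of its $\Sym_n$-orbits and take $Q$ to be the set indexing these orbits, equipped with the trivial $\Sym_n$-action. For each $q \in Q$, let $Y_q$ denote the corresponding orbit. Since being $k$-supported is a property of individual elements (each $y \in Y$ has a $k$-bounded support), and the $\Sym_n$-action on $Y_q$ is just the restriction of the action on $Y$, each $Y_q$ is a single-orbit $k$-supported $\Sym_n$-set. Lemma~\ref{lem:representation} then yields a surjective homomorphism $f_q : [n]^{(k)} \to Y_q$ for every $q \in Q$.

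Next, I would endow $Q \times [n]^{(k)}$ with the product action, which, because $Q$ carries the trivial action, is $\pi \cdot (q, \svector) = (q, \pi \cdot \svector)$. Define $f : Q \times [n]^{(k)} \to Y$ by $f(q, \svector) := f_q(\svector)$. To verify that $f$ is a homomorphism of $\Sym_n$-sets, it suffices to compute
\[
f(\pi \cdot (q, \svector)) \;=\; f(q, \pi \cdot \svector) \;=\; f_q(\pi \cdot \svector) \;=\; \pi \cdot f_q(\svector) \;=\; \pi \cdot f(q, \svector),
\]
using that each $f_q$ is a $\Sym_n$-set homomorphism. Surjectivity is immediate: every $y \in Y$ lies in some orbit $Y_q$, and then surjectivity of $f_q$ produces $\svector \in [n]^{(k)}$ with $f(q, \svector) = f_q(\svector) = y$.

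There is no genuine obstacle here; the only point worth noting is that the inheritance of the $k$-supported property by each orbit $Y_q$ is what allows Lemma~\ref{lem:representation} to apply orbit-by-orbit, and that taking $Q$ with the trivial action is precisely what makes gluing the $f_q$'s into a single equivariant map consistent with the product action on $Q \times [n]^{(k)}$.
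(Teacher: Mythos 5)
Your proof is correct and is exactly the orbit-by-orbit reduction to Lemma~\ref{lem:representation} that the paper intends (the paper states this as a corollary without proof). Your parenthetical remark about reading $[n]^k$ as $[n]^{(k)}$ is the right call; the paper itself slips between the two notations here and in Lemma~\ref{lem:propertymanageable}, and the latter is what matches the hypothesis of Lemma~\ref{lem:representation}.
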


It is clear how to extend the definition of an identifier to the general
case discussed in the corollary above.
Note that if a tuple $(q, i_{1},\ldots, i_{k})$ is an identifier of $y \in Y$,
then
the tuple $(q, \pi(i_{1}),\ldots, \pi(i_{k}))$ is an identifier of
$\pi \cdot y$.

\subsection{Manageable polytopes}\label{sec:manageable}

For a non-negative integer $k$, a polytope $P$ over
$\reals^{L(n)} \times \reals^W$ is called
$k$-\emph{manageable} if:
\begin{enumerate}\itemsep=0pt
\item there are two sets $Q$ and $T$
% of size at most $n^k$
with a trivial action of the group $\Sym_n$,
\item the set of constraints of $P$ is indexed by
$V = Q \times [n]^k$,
\item the set of auxiliary variables of $P$ is indexed by
$W = T \times [n]^k$,
%i.e., each auxiliary variable is of the form
%$y_{(t,i_1, \ldots, i_k)}$,
%for some $t \in T$ and $(i_1, \ldots, i_k) \in [n]^{(k)}$,
\item $P$ is $L$-symmetric with respect to
the natural action of $\Sym_n$ on
$W$,
and the induced action of $\Sym_n$ on the set of constraints
is exactly the natural action of~$\Sym_n$ on
$V$.
\end{enumerate}
An example of a manageable polytope is given at the very beginning of
this section. There the set of auxiliary variables has only one orbit
$\{ y_{12}, y_{21} \}$ so introducing $T$ is not necessary.

The key property of $k$-manageable polytopes,
which allows us to use them in the translation 
from families of linear programs to logic, is the following.

\begin{lemma}\label{lem:propertymanageable}
If $P$ is a $k$-manageable polytope with
constraints indexed by
$V = Q \times [n]^{(k)}$ and
auxiliary variables indexed by
$W = T \times [n]^{(k)}$, then
for any $R \in L$, $q \in Q$, $t \in T$, $\mathbf{i}, \mathbf{i}',
 \mathbf{j}, \mathbf{j}' \in [n]^{(k)}$, $\mathbf{k}, \mathbf{k}' \in
[n]^{\mathrm{ar}(R)}$:
\begin{enumerate}\itemsep=0pt
\item 
the constant terms of the linear constraints 
$\gamma_{(q,\mathbf{i})}$ and
$\gamma_{(q,\mathbf{i}')}$ are the same,
\item if the equality types of the tuples
$(\mathbf{j},\mathbf{i})$ and
$(\mathbf{j}',\mathbf{i}')$ are the same, then
the coefficient of the variable $y_{(t,\mathbf{j})}$
in the linear constraint $\gamma_{(q,\mathbf{i})}$
is the same as the coefficient of the variable~$y_{(t,\mathbf{j}')}$
in the linear constraint $\gamma_{(q,\mathbf{i}')}$,
\item if the equality types of the tuples
$(\mathbf{k}, \mathbf{i})$ and
$(\mathbf{k}', \mathbf{i}')$ are the same, then
the coefficient of the variable $x_{(R,\mathbf{k})}$
in the linear constraint $\gamma_{(q,\mathbf{i})}$
is the same as the coefficient of the variable 
$x_{(R,\mathbf{k}')}$
in the linear constraint $\gamma_{(q,\mathbf{i}')}$.
\end{enumerate}
\end{lemma}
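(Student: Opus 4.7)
My plan is to derive all three items as immediate consequences of the $L$-symmetry of $P$ together with the assumption that both actions on $V$ and $W$ are the natural ones (trivial on $Q$ and $T$, natural on the $[n]^{(k)}$-component). The engine of the proof is the standard fact that two tuples of distinct elements from $[n]$ have the same equality type if and only if there is some $\pi \in \Sym_n$ that sends one to the other, and similarly for pairs of tuples (applied to concatenations).

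For item (1), I would note that $\mathbf{i}$ and $\mathbf{i}'$ both lie in $[n]^{(k)}$, so they trivially have the same equality type and hence some $\pi \in \Sym_n$ satisfies $\pi \cdot \mathbf{i} = \mathbf{i}'$. Since the action on $V$ is the natural one and trivial on $Q$, this $\pi$ sends the index $(q,\mathbf{i})$ to $(q,\mathbf{i}')$, so under the induced $\Sym_n$-action on constraints we have $\gamma_{(q,\mathbf{i})}^\pi = \gamma_{(q,\mathbf{i}')}$. Because $P^\pi = P$ (by $L$-symmetry with respect to these actions) and because permuting variables does not touch the constant term, the constant terms of $\gamma_{(q,\mathbf{i})}$ and $\gamma_{(q,\mathbf{i}')}$ coincide.

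For items (2) and (3), I would pick a $\pi \in \Sym_n$ realizing the equality type hypothesis: in (2), a $\pi$ with $\pi \cdot (\mathbf{j},\mathbf{i}) = (\mathbf{j}',\mathbf{i}')$, which exists because the two concatenated tuples have the same equality type; in (3), similarly for $(\mathbf{k},\mathbf{i}) \mapsto (\mathbf{k}',\mathbf{i}')$. Applying this $\pi$ to $P$, the naturality of the actions on $W$ and on $L(n)$ means that $y_{(t,\mathbf{j})}$ is sent to $y_{(t,\mathbf{j}')}$ and $x_{(R,\mathbf{k})}$ to $x_{(R,\mathbf{k}')}$, while $\gamma_{(q,\mathbf{i})}$ is sent to $\gamma_{(q,\mathbf{i}')}$. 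Since $P^\pi = P$ as a collection of linear constraints, the coefficient attached to each variable in a constraint must agree with the coefficient of its image in the image constraint. This gives the two desired equalities.

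The proof is essentially bookkeeping; the only place that requires care is item (3), where the tuple $\mathbf{k}$ is in $[n]^{\ar(R)}$ and may have repeated entries, so one cannot invoke a simple transitivity statement about $[n]^{(m)}$. The point that makes this work is the classical fact that $\Sym_n$ acts on $[n]^m$ with orbits indexed by equality types; applied to the concatenated tuples $(\mathbf{k},\mathbf{i})$ and $(\mathbf{k}',\mathbf{i}')$ (whose last $k$ coordinates are already constrained to be distinct, coming from $[n]^{(k)}$), the hypothesis of matching equality types is exactly what is needed to produce a single $\pi$ realizing the required simultaneous substitution. Once this is in hand, the symmetry of $P$ finishes the argument as before.
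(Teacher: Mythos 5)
Your proof is correct and follows essentially the same route as the paper's: for each item, produce a $\pi \in \Sym_n$ realizing the equality-type hypothesis on the (concatenated) index tuples, observe that the manageability assumption forces $\pi$ to map $\gamma_{(q,\mathbf{i})}$ to $\gamma_{(q,\mathbf{i}')}$ and the relevant variable to its primed counterpart, and conclude from $P^\pi = P$. Your closing remark that item (3) needs the general orbit-by-equality-type fact on $[n]^m$ (not just on $[n]^{(m)}$) because $\mathbf{k}$ may repeat entries is a correct and slightly more explicit account of what the paper dismisses as "analogous."
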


\begin{proof}
1. Recall that every tuple in the set $[n]^{(k)}$ 
has the same equality type. Therefore, there exists a permutation
$\pi \in \Sym_n$ which maps the tuple
$\mathbf{i}$ to $\mathbf{i}'$. 
Since $\gamma_{(q,\mathbf{i})}^{\pi} =
\gamma_{(q,\mathbf{i}')}$, the constant terms in the
linear constraints
$\gamma_{(q,\mathbf{i})}$ and
$\gamma_{(q,\mathbf{i}')}$ are the same. 

2. If the equality types of the tuples
$(\mathbf{j},\mathbf{i})$ and
$(\mathbf{j}', \mathbf{i}')$ are the same,
then there exists a permutation $\pi \in \Sym_n$,
which maps $\mathbf{j}$ to $\mathbf{j}'$,
and $\mathbf{i}$ to $\mathbf{i}'$.
Let $a$ be the coefficient of the variable $y_{(t,\mathbf{j})}$
in the linear constraint $\gamma_{(q,\mathbf{i})}$.
By applying the permutation $\pi$ to the constraint
$\gamma_{(q,\mathbf{i})}$, we get the constraint
$\gamma_{(q,\mathbf{i}')}$. Moreover, since 
$\pi \cdot y_{(t,\mathbf{j})} = y_{(t,\mathbf{j}')}$,
the coefficient of the variable $y_{(t,\mathbf{j}')}$ in
$\gamma_{(q,\mathbf{i}')}$ is $a$.
The proof of 3.\ is analogous.
\end{proof}

Now, suppose that a $k$-supported rigid $L$-symmetric LP lift 
$P \subseteq \reals^{L(n)} \times \reals^W$
% with
%at most $n^k$ auxiliary variables and at most $n^k$ facets
recognizes some property of $L$-structures, that is, 
a subset $A$ of $\{0,1 \}^{L(n)}$.
We argue that there exists 
a $k$-manageable polytope lift $\bar{P}$
recognising $A$.
Since the polytope $P$ is $k$-supported,
by applying Lemma~\ref{lem:representation} we obtain 
two sets of identifiers: 
$\bar{V} = Q \times [n]^k$ for
the constraints, and
$\bar{W} = T \times [n]^k$
for the auxiliary
variables. 
%Moreover, $|T| \leq n^k$ and $|Q| \leq n^k$.
Let us introduce a new variable of the form
$y_{(t, \mathbf{j})}$, for any identifier 
$(t, \mathbf{j}) \in \bar{W}$.
We obtain a manageable polytope $\bar{P}$ from
the polytope $P$ by first, 
replacing, for each $w \in W$, the auxiliary variable
$y_w$ in $P$ by the sum of variables $y_{(t,\mathbf{j})}$
over the set of all identifiers $(t, \mathbf{j})$ of $y_w$;
and secondly, replacing, for every $v \in V$, the constraint $\gamma_v$,
by several copies of this constraint, one for every
identifier $(q, \mathbf{i})$ of $\gamma_v$. 
The obtained polytope lift $\bar{P}$ is clearly $k$-manageable
and it is easy to see that it recognizes the same property
of $L$-structures. 
%Note that, in contrast to all polytopes considered so far,
%the set of constraints representing the polytope $\bar{P}$
%might not be minimal with respect to inclusion.

To summarize, the constructions in Subsections~\ref{sec:rigid}, \ref{sec:supports}
and~\ref{sec:manageable} imply that
if a property of $L$-structures is recognised
by a family of $L$-symmetric LP lifts of size $s(n)$,
then the same property is
recognised by a family of $k(n)$-manageable
LP lifts, where 
$k(n) =
O(\log(s(n))/(\log(n)-\log\log(s(n))))$.

\subsection{From manageable polytopes to counting logic}\label{sec:translation}

We now put everything together in the proof
of Lemma~\ref{lem:secondhalf}.
For the sake of simplicity we give the proof
for the case when $L$ consists of a single binary symbol, that
is for the case of graphs. The general case is completely
analogous.

Let $P \subseteq \reals^{[n]^2} \times \reals^W$
be a graph-symmetric LP lift
of size $s$ recognising 
some property of graphs with $n$ vertices,
that is, a subset $A$ of $\{0,1 \}^{[n]^2}$.
We show that the same property of graphs is
definable by a
$\mathrm{C}^{k}$ formula, where $k =
O(\log(s)/(\log(n)-\log\log(s)))$.

Let $\widehat{P}$ be a rigid graph-symmetric LP lift
recognising $A$, as constructed in Subsection~\ref{sec:rigid}.
Recall that its size $s'$ is at most $s \log(s)$ where $s$ is the
size of $P$.  In particular, $s' \leq s^2$.

If $s > 2^{n/6}$, 
by a calculation analogous to~\eqref{eqn:stuff1}
at the beginning of Subsection~\ref{sec:supports}, we have
$n = O(\log(s)/(\log(n)-\log\log(s)))$.
Since every class
of graphs with $n$ vertices is definable in $\mathrm{C}^n$,
we complete the proof of the lemma in this case by taking
$k = n$.

If $s \leq 2^{n/6}$, then $s' \leq s^2 \leq 2^{n/3}$.
Hence, by Lemma~\ref{lem:supports},
for some $k =
  O(\log(s')/(\log(n)-\log\log(s')))$,
$\widehat{P}$ is $k$-supported, has at most $n^k$ auxiliary
variables, at most $n^k$ constraints, and
all its coefficients and constant terms can be
  encoded using at most $n^k$ bits.
  Moreover,
any such $k$ clearly satisfies 
$k =
  O(\log(s)/(\log(n)-\log\log(s)))$.
  
Let $\bar{P}$ be a $k$-manageable polytope lift
recognising $A$ (as described in Subsection~\ref{sec:manageable})
with 
the set of constraints indexed by
$Q \times [n]^k$, and
the set of auxiliary variables indexed by
$T \times [n]^k$.
Note that it follows from the construction of $\bar{P}$
that the number of elements in the sets $T$ and $Q$
is bounded, respectively, by the number of auxiliary variables and
the number of constraints in $\widehat{P}$.
Hence, $|Q|, |T| \leq n^k$.

Suppose now that we are given a graph $G$ with the set of
vertices $V$ of size $n$ and the set of edges $E$.
Intuitively, if we could fix a bijection between $[n]$
and $V$, we could then compute from $\bar{P}$ and $G$
a linear program $\bar{P}_G$ with the set of constraints $I$
and the set of variables $J$ as follows:
\begin{align*}
& I = \{\gamma_{(q,\mathbf{v})} : q \in Q, \mathbf{v} \in V^k\}, \\
& J = \{x_{vw} : v,w \in V\} \cup
\{y_{(t,\mathbf{v})} : t \in T, \mathbf{v} \in V^k\}.
\end{align*}

In order to decide whether $G$ has the property of interest
we would then check if the partial valuation:
$x_{vw} = 1$ if $(v,w) \in E$, and $x_{vw} = 0$ otherwise,
%\begin{equation}
%x_{vw}=
%\begin{cases}
%1, & \text{ if } (v,w) \in E, \\
%0, & \text{ otherwise,}
%\end{cases}
%\end{equation}
can be extended to a full solution.
This in turn can be easily done in logic using the following straightforward
consequence of the results in~\cite{Anderson:2015}.

\begin{lemma}\label{lem:fpc}
There exists an $\FPC$ formula $\phi$ which given a matrix
$\mathbf{A} \in \rationals^{I \times J}$ and a pair of vectors
$\mathbf{b} \in \rationals^I$, and
$\mathbf{a} \in \rationals^{J'}$, where $J' \subseteq J$,
decides if $\mathbf{a}$ can be extended to a solution
of the linear program $\mathbf{A} \mathbf{x} \leq \mathbf{b}$.
\end{lemma}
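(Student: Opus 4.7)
The plan is to reduce the partial-extension problem directly to plain linear program feasibility, and then invoke the result of~\cite{Anderson:2015} which asserts that feasibility of a system $\mathbf{A}'\mathbf{x} \leq \mathbf{b}'$ over the rationals is definable in $\FPC$. The whole argument is then an $\FPC$-interpretation followed by composition with Anderson's formula.

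First, I would observe that ``$\mathbf{a}$ can be extended to a solution of $\mathbf{A}\mathbf{x} \leq \mathbf{b}$'' is itself equivalent to feasibility of an enlarged system. Concretely, let $I' = I \disjointunion (J' \times \{+,-\})$ and define $\mathbf{A}' \in \rationals^{I' \times J}$ and $\mathbf{b}' \in \rationals^{I'}$ as follows: on rows indexed by $I$, the new system copies the original $(\mathbf{A}, \mathbf{b})$; on a row indexed by $(j,+)$, the row has $1$ in column $j$, zeros elsewhere, and right-hand side $a_j$; on a row indexed by $(j,-)$, the row has $-1$ in column $j$, zeros elsewhere, and right-hand side $-a_j$. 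The two rows together express $x_j = a_j$, so $\mathbf{A}'\mathbf{x} \leq \mathbf{b}'$ is feasible if and only if $\mathbf{a}$ extends to a solution of $\mathbf{A}\mathbf{x} \leq \mathbf{b}$.

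Next, I would exhibit an $\FOC$-interpretation that, given an input structure encoding $(\mathbf{A}, \mathbf{b}, \mathbf{a}, J')$ in the many-sorted tensor vocabulary fixed in Section~\ref{sec:logic}, outputs a structure over the same vocabulary encoding $(\mathbf{A}', \mathbf{b}')$. The new index sort is built as a disjoint union of $\bar{I}$ with two tagged copies of $\bar{J}$ restricted to the monadic predicate that names $J'$; this is a standard first-order construction using pairs with a tag. The sign, numerator and denominator bits of the new entries are set by cases: on the $I$-block they are copied from $\mathbf{A}$ and $\mathbf{b}$; on the $(J',+)$-block the matrix entries are $0$ or $1$ according to whether the column index coincides with the tagged row label, and the right-hand side bits are copied from $\mathbf{a}$; on the $(J',-)$-block the same holds but with the sign bit of every entry flipped. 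Each of these operations is clearly definable by a first-order formula with counting over the rational encoding fixed in Section~\ref{sec:logic}.

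Finally, I would compose this interpretation with the $\FPC$ formula from~\cite{Anderson:2015} that decides feasibility of a rational linear program presented in the tensor format. Since $\FPC$ is closed under $\FPC$-interpretations, the pullback of that feasibility formula along the interpretation above yields a single $\FPC$ formula $\phi$ with the required property. The only nontrivial point, and really the only potential obstacle, is matching conventions: ensuring that the tensor encoding expected by Anderson's formula is exactly the one produced by the interpretation, in particular that negation of a rational is implemented by flipping the sign monadic predicate while leaving the numerator and denominator bit patterns unchanged. This is a routine first-order operation on the representation described in Section~\ref{sec:logic}, and once it is in place the lemma follows.
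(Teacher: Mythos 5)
Your proposal is correct and follows essentially the route the paper intends: the paper states Lemma~\ref{lem:fpc} without an explicit proof, calling it a straightforward consequence of the $\FPC$-definability of linear-program feasibility from~\cite{Anderson:2015}, and your reduction (encoding each fixed value $x_j = a_j$ for $j \in J'$ by a pair of inequality rows, building the enlarged system via an $\FOC$-interpretation, and composing with the feasibility formula using closure of $\FPC$ under interpretations) is exactly the natural way to make that consequence explicit. No gaps; the sign-flip handling in the encoding of Section~\ref{sec:logic} is routine, as you note.
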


Our goal is to use Lemma~\ref{lem:propertymanageable}
to show that
the linear program $\bar{P}_G$
can be computed without fixing a bijection between $[n]$ and~$V$.
%We begin by encoding the polytope $\bar{P}$
%as a relational structure $\mathbb{P}$ over the ordered domain~$[n]$.
We define an $\FOC$-interpretation $\Psi$
which takes as input a
graph $G$ with $n$ vertices
and outputs, essentially, a relational encoding of
the linear program $\bar{P}_G$
together with the partial valuation discussed above.
More precisely, $\Psi$ outputs a matrix
$\mathbf{A} \in \rationals^{I \times J}$ and a pair of vectors
$\mathbf{b} \in \rationals^I$, and
$\mathbf{a} \in \rationals^{J'}$, where $J' \subseteq J$,
such that $\mathbf{a}$ can be extended to a solution
of $\mathbf{A} \mathbf{x} \leq \mathbf{b}$
if and only if $G$ has the property of interest.
To encode the fact that $J' \subseteq J$
we introduce an extra binary relation symbol $F$ of type
$\bar{J}' \times \bar{J}$ for an injective function from
the index set $J'$ to the index set $J$. Hence,
$\Psi$ outputs a structure over the vocabulary
$L_{\mathrm{vec},2} \disjointunion L_{\mathrm{vec},1}
\disjointunion L_{\mathrm{vec},1} \disjointunion \{F\}$
modified, for simplicity, in such a way that there is a single
sort symbol $\bar{B}$ for a domain of bit positions.

Given a graph $G$ with $n$ vertices
the $\FOC$-interpretation $\Psi$ has
access to the domain $V$ of the graph,
and the naturally ordered number domain $\{0, \ldots, n\}$.
To represent 
the bit encodings of the numerical coefficients we use
tuples from $[n]^k \subseteq \{0, \ldots, n\}^k$.
Let $o : [n]^k \rightarrow \{0, 1, \ldots, n^k-1\}$
be the order-preserving bijection from 
the set $[n]^k$ ordered lexicographically
to the set $\{0, 1, \ldots, n^k-1\}$ with the
natural order. For any $\svector \in [n]^k$,
by $[\svector]$ we denote the natural number~$o(\svector)$.
Tuples from $[n]^k \subseteq \{0, \ldots, n\}^k$
are also used to represent
elements of
$Q$ and $T$.
In order to do so let us fix an injective function $f$ from $Q$ to $[n]^k$,
and an injective function $g$ from $T$ to $[n]^k$.
The linear program
$\mathbf{A} \mathbf{x} \leq \mathbf{b}$
in the output of $\Psi$ has
the set of constraints indexed by $[n]^k \times V^k$
and the set of variables indexed by $V^2 \cup [n]^k \times V^k$.
Once restricted to the constraints indexed by $f(Q) \times V^k$
and the variables indexed by $V^2 \cup g(T) \times V^k$
it is exactly the linear program $\bar{P}_G$.
All the other coefficients and constant terms
in $\mathbf{A} \mathbf{x} \leq \mathbf{b}$
are set to
$0 = (-1)^0 \ 0/1$.

Consider tuples of the form
$(\zvector_1, \zvector_2, \zvector_3, \rho)$,
where $\zvector_1,\zvector_2, \zvector_3 \in [n]^k$, and
$\rho$ is a quantifier-free formula defining an equality
type of $2k$-tuples.
By $T^y_d$ let us denote the set of all tuples of this form
which satisfy one of the following conditions: 
\begin{itemize}\itemsep=0pt
\item $\zvector_1 \not\in f(Q)$ or
$\zvector_2 \not\in g(T)$, 
and $[\zvector_3] = 0$,
\item $\zvector_1 \in f(Q)$ and
$\zvector_2 \in g(T)$, and if $f^{-1}(\zvector_1) = q$,
$g^{-1}(\zvector_2) = t$,
then for every $\svector_1, \svector_2 \in [n]^k$ such that
the equality type of $(\svector_1,\svector_2)$ is $\rho$,
the position $[\zvector_3]$
in the binary encoding of
the denominator of the coefficient of the variable indexed by
$(t,\svector_2)$ in the constraint indexed by $(q, \svector_1)$
in $\bar{P}$ carries the $1$-bit.
\end{itemize}
It follows from Lemma~\ref{lem:propertymanageable}
that the set $T^y_d$ carries all information
about the denominators of the coefficients of the auxiliary variables
in $\bar{P}$.

Similarly, we define sets $T^{y}_s$, $T^{y}_n$,
$T^{x}_s$, $T^{x}_n$, $T^{x}_d$, and
$C_s$, $C_n$, $C_d$
to carry all the information about
the signs and the bits of the numerators
and the denominators of: the coefficients of the auxiliary variables, 
the coefficients of the
variables in $\{x_{ij}\}_{1 \leq i,j \leq n}$, and
the constant terms, respectively.

Given a graph $G$ with the set of
vertices $V$ of size $n$ and the set of edges $E$
the interpretation $\Psi$ does the following:
\begin{enumerate}\itemsep=0pt
\item defines the domain of $\bar{I}$ as $[n]^k \times V^k$,
the domain of $\bar{J}$ as $V^2 \cup [n]^k \times V^k$,
the domain of $\bar{J}'$ as $V^2$, and the domain of $\bar{B}$
as $[n]^k$,
where $V$ is the domain of $G$, and
$[n] \subseteq \{0, \ldots, n\}$ is a subset of the number domain,
\item defines the relation $\leq$ for the
linear order on $\bar{B}$ as the lexicographic order with respect to
the natural order of the number domain,
\item defines the relation $F$ of type $\bar{J}' \times \bar{J}$
as the equality relation on $V^2$,
\item defines the ternary relation
$P^\mathbf{A}_d$ of type $\bar{I} \times \bar{J} \times \bar{B}$
for encoding
the denominators of the entries of the matrix $\mathbf{A}$
as a union of two relations.
The first is defined as a subset of
$ ([n]^k \times V^k) \times
([n]^k \times V^k) \times [n]^k $ consisting of tuples
$(\mathbf{s}_1,\mathbf{v}_1, \mathbf{s}_2,\mathbf{v}_2, \mathbf{s}_3)$
for which there exists
$(\zvector_1, \zvector_2, \zvector_3, \rho)$
in $T^y_d$ such that
the tuple $(\mathbf{v}_1,\mathbf{v}_2)$ satisfies $\rho$, and
for every $i \in [3]$ it holds $\svector_i = \zvector_i$.
The second is defined as a subset of
$ ([n]^k \times V^k) \times
V^2 \times [n]^k$ consisting of tuples
$(\mathbf{s}_1,\mathbf{v}_1, v,w, \mathbf{s}_2)$ for which
there exists
$(\zvector_1, \zvector_2, \rho)$
in $T^x_d$ such that 
the tuple $(\mathbf{v}_1,v,w)$ satisfies $\rho$ and
$\svector_1 = \zvector_1$, and $\svector_2 = \zvector_2$,
\item defines the relations $P^\mathbf{A}_s$, $P^\mathbf{A}_n$,
$P^\mathbf{b}_s$, $P^\mathbf{b}_n$, $P^\mathbf{b}_d$ in a similar
way as $P^\mathbf{A}_n$,
\item defines the binary relations $P^\mathbf{a}_s$,
$P^\mathbf{a}_n$, $P^\mathbf{a}_d$ of type
$\bar{J}' \times \bar{B}$ for encoding 
the entries of the vector $\mathbf{a}$ 
in the following way:
the entries $((v,w),\svector)$ of $P^\mathbf{a}_s$,
$P^\mathbf{a}_n$, $P^\mathbf{a}_d$ are defined to encode
$1 = (-1)^0 \ 1/1$ or $0 = (-1)^0 \ 0/1$ depending
on whether $(v,w) \in E$ or not.
%
%the relation $P^\mathbf{a}_s$ is defined to be empty,
%the relation $P^\mathbf{a}_d$ consists of all pairs $((v,w),\svector)$,
%where $v,w \in V$ and $\svector$ is the smallest element of
%$[n]^k$ with respect to the lexicographical order $\leq$, and
%the relation $P^\mathbf{a}_n$ consists of all pairs $((v,w),\svector)$,
%where $(v,w) \in E$ and $\svector$ is the smallest element of
%$[n]^k$ with respect to the lexicographical order $\leq$.
\end{enumerate}
Note that by existential quantification over
the sets $T^y_d$ and $T^x_d$ we really mean a disjunction.
And by $\svector_i = \zvector_i$ we mean the $2$-variable
$\FO$-formula of size $O(kn)$ which, for every
$j \in [k]$, says that the
$j$-th component $s_{i,j}$
of $\svector_i$ is the $z_{i,j}$-th component of $[n]$,
using the order on the number domain.
Observe also that $\Psi$, as described, is not rigorously an
$\FOC$-interpretation,
but it is not difficult to see that it can be easily turned into such.

The interpretation $\Psi$
has $O(k)$ variables. Its size is polynomial in $n^k$, in $k$, and in
the number of equality types of $2k$ tuples, that is,
polynomial in $n^k$, $k$, and $(2k)^{2k}$.
Since in our case $k = O(n)$,
the size of $\Psi$ is simply $n^{O(k)}$.

Now by composing $\Psi$ with the
$\FPC$ formula $\phi$ from Lemma~\ref{lem:fpc}
we obtain an $\FPC$ formula $\psi$ which given a
graph $G$ with $n$ vertices
decides if $G$ has the property of interest.
The formula $\psi$ 
has $l = O(k)$ variables and size 
$n^{O(k)}$.
We translate it into a formula $\theta$ of $\CLogic^{2l}$ such
that $\psi$ is equivalent to $\theta$ on all structures of size at
most $n$ and $\theta$ is of size polynomial in the size
of $\psi$, in $l$, and in $n^l$ (cf.\ Subsection~\ref{sec:logic}).
Hence, in terms of $k$ and $n$,
the formula $\theta$ has $O(k)$ variables and size
$n^{O(k)}$.

We have therefore shown that 
a property of graphs with $n$ vertices recognized by a
graph-symmetric polytope lift of size $s$
is defined by a $\mathrm{C}^{k}$ formula, where $k =
O(\log(s)/(\log(n)-\log\log(s)))$. Moreover,
if $s$ is at most weakly exponential, then for some positive
real $\epsilon$ we have
$k = O(\log(s)/(\log(n)-\log\log(s))) = 
O(\log(s)/(\epsilon \log(n))) = O(\log(s)/\log(n))$.
Hence, in this case the size of $\theta$
is $n^{O(k)} = s^{O(1)}$.
This finishes the proof of Lemma~\ref{lem:secondhalf}
and this section.

\section{Results and Applications} \label{sec:resultsandapps}

In this section we develop the main consequences of our results. We
start by establishing the main theorem of the paper, which
characterizes the expressive power of symmetric linear programs.  We
continue with the applications to upper and lower bounds. And end with
the observation that for random graphs over appropriate distributions
symmetric LP lifts are as powerful as general Boolean circuits.

\subsection{Equivalence of Models}

If $\mathscr{C}$ is a class of finite $L$-structures of some single-sorted
vocabulary $L$, and $n$ is a positive integer, we
write~$\mathscr{C}_n$ for the set of all structures in
$\mathscr{C}$ of cardinality $n$. We write $\symcirc_{\mathscr{C}}(n)$
for the size of a smallest $L$-symmetric Boolean circuit that
recognizes $\mathscr{C}_n$, and $\symlp_{\mathscr{C}}(n)$ for the
size of a smallest $L$-symmetric LP lift that recognizes
$\mathscr{C}_n$. Similarly, we write $\vars_{\mathscr{C}}(n)$ for
the \emph{counting-width} of~$\mathscr{C}_n$, i.e., the smallest
number of variables~$k$ of a $\CLogic^k$-formula that
defines~$\mathscr{C}_n$ on~$L$-structures of cardinality $n$, and
$\sizevars_{\mathscr{C}}(n)$ for the \emph{counting size-width} of
$\mathscr{C}_n$, i.e., the smallest $k$ such that there is a
$\CLogic^k$-formula of size at most $n^k$ that defines
$\mathscr{C}_n$ on~$L$-structures of cardinality $n$. 
%We say that a
%function $s(n)$ is at most weakly exponential if there exists a
%positive real $\epsilon$ such that $s(n) \leq 2^{n^{1-\epsilon}}$ for
%every sufficiently large $n$.

\begin{theorem} \label{thm:main}
Let $\mathscr{C}$ be a class of finite $L$-structures of some
vocabulary~$L$. If $\symlp_{\mathscr{C}}(n)$ is at most weakly exponential, then
\begin{enumerate} \itemsep=0pt
  \item
  $\symcirc_{\mathscr{C}}(n)^{\Omega(1)} \leq \symlp_{\mathscr{C}}(n) \leq
\symcirc_{\mathscr{C}}(n)^{O(1)}$,
\item
  $\Omega(\sizevars_{\mathscr{C}}(n)) \leq
  \log(\symlp_{\mathscr{C}}(n))/\log(n) \leq
  O(\sizevars_{\mathscr{C}}(n))$.
\end{enumerate}
\end{theorem}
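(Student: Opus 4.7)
The plan is to reduce each of the four inequalities to a combination of three ingredients: Lemma~\ref{lem:circuitstolps} (symmetric circuits to symmetric LPs), Lemma~\ref{lem:secondhalf} (symmetric LPs to counting logic), and a standard translation from $\CLogic^{k}$-formulas to $L$-symmetric threshold circuits. The three models form a triangle and two of the three translations are already given; I would supply the third and then chase sizes through the compositions.

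The easy half $\symlp_\mathscr{C}(n)\leq\symcirc_\mathscr{C}(n)^{O(1)}$ of Part~1 follows immediately from Lemma~\ref{lem:circuitstolps}. For the upper bound $\log(\symlp_\mathscr{C}(n))/\log(n)\leq O(\sizevars_\mathscr{C}(n))$ in Part~2, I would set $k=\sizevars_\mathscr{C}(n)$, take a $\CLogic^{k}$-formula $\phi$ of size at most $n^{k}$ defining $\mathscr{C}_n$, and turn $\phi$ into an $L$-symmetric threshold circuit by introducing one gate per pair (subformula, $k$-tuple $\mathbf{a}\in[n]^{k}$): Boolean connectives become AND/OR/NOT gates, quantifiers $\exists x$ become fan-in-$n$ OR gates, and counting quantifiers $\exists^{\geq q}x$ become threshold gates $\mathrm{TH}_{n,q}$. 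The componentwise action of $\Sym_n$ on the indexing tuples extends any $\pi\in\Sym_n$ to an automorphism of the circuit, whose size is $O(|\phi|\cdot n^{k}) = n^{O(k)}$. Lemma~\ref{lem:circuitstolps} then yields an $L$-symmetric LP lift of size $n^{O(k)}$, which is the desired bound.

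For the two lower bounds, the plan is to traverse LP $\to$ formula $\to$ circuit. Write $s=\symlp_\mathscr{C}(n)$; the weakly-exponential hypothesis gives $s\leq 2^{n^{1-\epsilon}}$, hence $\log\log(s)\leq(1-\epsilon)\log(n)$, so the quantity $k(n) = O(\log(s)/(\log(n)-\log\log(s)))$ from Lemma~\ref{lem:secondhalf} simplifies to $O(\log(s)/\log(n))$. That lemma supplies a $\CLogic^{k(n)}$-formula of size $s^{O(1)}$ defining $\mathscr{C}_n$; since $n^{k(n)}=s^{\Theta(1)}$, a mild enlargement of the hidden constant in $k(n)$ ensures the formula has size at most $n^{k(n)}$, so $\sizevars_\mathscr{C}(n)\leq k(n) = O(\log(\symlp_\mathscr{C}(n))/\log(n))$, which is Part~2's lower bound. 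Applying the formula-to-circuit construction from the previous paragraph to this same formula produces an $L$-symmetric threshold circuit of size $O(s^{O(1)}\cdot n^{k(n)}) = s^{O(1)}$, proving $\symcirc_\mathscr{C}(n)\leq\symlp_\mathscr{C}(n)^{O(1)}$, which is Part~1's lower bound.

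The main obstacle is verifying that the formula-to-circuit construction really produces an $L$-symmetric circuit and not merely an isomorphism-invariant one, so that Lemma~\ref{lem:circuitstolps} can actually be applied; this amounts to checking that the componentwise action of $\Sym_n$ on $[n]^{k}$ fixes the constructed circuit as a labelled graph. The construction must therefore be carried out uniformly in the index $\mathbf{a}\in[n]^{k}$, with fan-ins independent of $\mathbf{a}$ and with a consistent assignment of gate types at each indexing position, so that the natural permutation of gates commutes with the labelling. Once this is in place, the rest is bookkeeping of sizes through the three translations, aided by the observation that the weakly-exponential assumption is exactly what is needed to collapse $\log(s)/(\log(n)-\log\log(s))$ to $O(\log(s)/\log(n))$.
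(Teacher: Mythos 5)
Your proof is correct and follows essentially the same route as the paper: both directions combine Lemma~\ref{lem:circuitstolps} and Lemma~\ref{lem:secondhalf}, using the weakly-exponential hypothesis to collapse $\log(s)/(\log(n)-\log\log(s))$ to $O(\log(s)/\log(n))$. The only cosmetic difference is that where you sketch the standard translation from $\CLogic^{k}$-formulas to $L$-symmetric threshold circuits of size $n^{O(k)}$, the paper simply cites this as the well-known relationship $\symcirc_{\mathscr{C}}(n)\leq n^{O(\sizevars_{\mathscr{C}}(n))}$.
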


\begin{proof}
The upper bound in \emph{1} is a direct consequence of
Lemma~\ref{lem:circuitstolps}, and this holds without any assumption
on the growth rate of $\symlp_{\mathscr{C}}(n)$. The lower bound in
\emph{2} follows from Lemma~\ref{lem:secondhalf}: Write~$s =
\symlp_{\mathscr{C}}(n)$ and choose $k =
c\log(s)/(\log(n)-\log\log(s))$ for a large $c$ to be specified later.
Since $\symlp_{\mathscr{C}}(n)$ is at most weakly exponential we have
$s \leq 2^{n^{1-\epsilon}}$ for some $\epsilon>0$ and large enough
$n$. Hence $k = O(\log(s)/\log(n))$ with the hidden constant in the
big-oh notation dependent on $\epsilon$ as in $1/\epsilon$. Now, for
the appropriate constant in the big-oh in~$k = O(\log(s)/\log(n))$,
Lemma~\ref{lem:secondhalf} says that there is a $\CLogic^k$-formula
that defines $\mathscr{C}$ and has size polynomial in $s$, since again
$\symlp_{\mathscr{C}}(n)$ is at most weakly exponential. If the
constant in the big-oh in $k = O(\log(s)/\log(n))$ is chosen big
enough, we get that the size polynomial in $s$ is bounded even by
$n^k$, so $\sizevars_{\mathscr{C}}(n) = O(\log(s)/\log(n))$ as was to
be proved.  In turn, these two imply the lower bound in \emph{1} and
the upper bound in~\emph{2} through the well-known
relationship~$\symcirc_{\mathscr{C}}(n) \leq
n^{O(\sizevars_{\mathscr{C}}(n))}$ (see \cite{OttoHabilitationBook}).
\end{proof}

% Part \emph{2} of Theorem~\ref{thm:main} also implies weakly
% exponential lower bounds of the form $2^{\Omega(n^{1-\epsilon})}$ for
% the size of symmetric LP lifts for explicit problems, including graph
% problems such as Hamiltonicity or 3-Coloring, and constraint
% satisfaction problems such as 3-SAT, and many others. In the next
% Section we show that the actual bound of Lemma~\ref{lem:secondhalf}
% gives more: it can be used to get truly exponential lower bounds of
% the form $2^{\Omega(n)}$, where $n$ is the number of vertices in graph
% problems, and $n$ is the number of variables in
% constraint-satisfaction problems.

\subsection{Upper and Lower Bounds} \label{sec:upperandlower}

By Theorem~\ref{thm:main}, any class of graphs of unbounded counting
width cannot be recognized by polynomial-size symmetric LP lifts. More
strongly, in combination with the strongest known lower bounds on
counting width, Theorem~\ref{thm:main} gives weakly exponential lower
bounds of the type $2^{\Omega(n^{1-\epsilon})}$. We show that the
strongest forms of Lemmas~\ref{lem:circuitstolps}
and~\ref{lem:secondhalf} give even larger lower bounds.

\paragraph{Lower bounds on symmetric lifts and circuits}

In the sequel, let 3-XOR refer to the constraint satisfaction problem
of deciding whether a system of 3-variable parity constraints on
$\{0,1\}$-valued variables is satisfiable, and let~3-SAT refer to the
satisfiability problem for~3-CNF formulas.  In both cases, an instance
is presented as a finite structure that encodes the incidence
structure of the constraints: the domain is the disjoint union of the
set of variables and the set of constraints, and the relations carry
one monadic relation for each type of constraint that indicates which
constraints are of that type, and three binary relations that indicate
the three variables that participate in each constraint. Note that the
instances for these problems are not plain graphs but graphs with
coloured vertices and edges.

\begin{theorem} \label{thm:threecoloring}
  Every graph-symmetric LP lift or Boolean threshold circuit that
  recognizes the class of Hamiltonian graphs with $n$ vertices, or the
  class of 3-colourable graphs with $n$ vertices, or the class of
  satisfiable 3-SAT instances with $n$ variables, or the class of
  satisfiable 3-XOR instances with $n$ variables, has size
  $2^{\Omega(n)}$. Moreover, for 3-colouring, 3-SAT, and 3-XOR, the
  lower bound holds even on the class of instances with $O(n)$ edges,
  $O(n)$ clauses, and $O(n)$ constraints, respectively.
\end{theorem}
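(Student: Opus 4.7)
The plan is to derive both the symmetric LP and the symmetric threshold-circuit lower bounds from a single linear counting-width lower bound for each class. By Lemma~\ref{lem:circuitstolps}, any $L$-symmetric threshold circuit of size $s$ yields an $L$-symmetric LP lift of size $s^{O(1)}$, so a $2^{\Omega(n)}$ bound on $\symlp_\mathscr{C}(n)$ immediately entails a $2^{\Omega(n)}$ bound (with a worse constant in the exponent) on $\symcirc_\mathscr{C}(n)$ for threshold circuits; I therefore focus on the LP bound.

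The quantitative bridge from counting width to LP size comes from Lemma~\ref{lem:secondhalf}: any class recognised by symmetric LP lifts of size $s(n)$ is defined by $\CLogic^{k(n)}$-formulas with $k(n) = O(\log s(n)/(\log n - \log\log s(n)))$. I claim that a linear counting-width lower bound $\vars_\mathscr{C}(n) \geq c_0 n$ forces $s(n) = 2^{\Omega(n)}$. Writing $f(n) := \log s(n)$ and plugging $c_0 n \leq k(n)$ into the bound rearranges to $f(n) \geq \Omega(n\,\log(n/f(n)))$. If there were a subsequence along which $f(n)/n \to 0$ then $\log(n/f(n)) \to \infty$ along that subsequence, so the right-hand side divided by $n$ tends to infinity while the left-hand side divided by $n$ tends to zero, a contradiction. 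Hence $f(n) = \Omega(n)$, i.e., $s(n) = 2^{\Omega(n)}$.

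It remains to exhibit the linear counting-width lower bound for each of the four classes. For 3-XOR this is a direct consequence of the Cai--F\"urer--Immerman construction, which produces for each $k$ a pair of non-isomorphic structures of size $O(k)$ encoding a satisfiable and an unsatisfiable system of $\mathbb{F}_2$-equations, respectively, while remaining $\CLogic^k$-equivalent; the resulting instances have linearly many variables and constraints. For 3-SAT and 3-colourability I would use the standard local translations: each parity equation is replaced by its four 3-clauses (3-SAT) or by a constant-size colourability gadget (3-colourability). Both translations can be expressed as first-order interpretations of bounded quantifier rank applied to the CFI pairs, so they preserve counting width up to a constant factor and preserve sparseness, yielding the required $\Omega(n)$ bounds on sparse instances and therefore the $2^{\Omega(n)}$ lower bounds for those three classes.

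The main obstacle is Hamiltonicity, where $n$ denotes the vertex count and the sparsity addendum is absent. The plan is to reduce a problem with an established $\Omega(n)$ counting-width bound (sparse 3-SAT or 3-colourability would be natural choices) to Hamiltonicity by a first-order interpretation that increases the vertex count by only a constant factor, so that the CFI-derived pair of $\CLogic^{\Omega(n)}$-equivalent instances transports to a pair of $\CLogic^{\Omega(n)}$-equivalent $n$-vertex graphs which differ on Hamiltonicity. The delicate point is that the textbook gadgets for Hamiltonicity introduce long auxiliary paths and asymmetric substructures that interact poorly with counting pebble games; tightening the known sub-linear lower bound to a linear one therefore calls for either a CFI-tailored graph gadget designed specifically for Hamiltonicity, or a detour through an intermediate graph-theoretic problem (such as disjoint cycle cover or perfect matching on an auxiliary directed graph) whose reduction to Hamiltonicity is linear in the vertex count while preserving $\CLogic^k$-equivalence. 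This Hamiltonicity lower bound is the only genuinely new ingredient needed; everything else is assembled from Lemma~\ref{lem:secondhalf} and the counting-width literature.
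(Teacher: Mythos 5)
Your overall strategy matches the paper: reduce all four lower bounds to linear counting-width lower bounds, push them through Lemma~\ref{lem:secondhalf} (and Lemma~\ref{lem:circuitstolps} for circuits), and extract $2^{\Omega(n)}$ from the $k(n)=O(\log s/(\log n - \log\log s))$ relation. Your quantitative rearrangement ($f\ge\Omega(n\log(n/f))$ forcing $f=\Omega(n)$) is a valid, slightly different packaging of the paper's inequality~\eqref{eqn:boundd}; both are fine. For 3-XOR and 3-SAT you appeal to CFI-over-expanders, which is in spirit what Theorem~\ref{thm:atseriasdawar} provides, and your observation that the 3-XOR$\to$3-SAT and (NAE-)3-SAT$\to$3-colourability reductions are local, sparsity-preserving, quantifier-free interpretations of bounded complexity is exactly the argument the paper uses for Lemma~\ref{lem:threecoloring}.

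The genuine gap is Hamiltonicity, and you say so yourself. You correctly locate the difficulty --- the textbook reduction strings the variable gadgets together in a path, which implicitly needs a linear order on the variables and is therefore not a parameter-free quantifier-free interpretation --- but you leave it at ``this calls for either a CFI-tailored gadget or a detour through an intermediate problem,'' which is a plan, not a proof. The ingredient the paper actually supplies (Lemma~\ref{lem:hamiltonicity}) is a small and specific modification of the textbook reduction: keep all the clause/occurrence gadgets, but arrange the variable gadgets in a \emph{clique} rather than a path (Figure~\ref{fig:standardmodified}), so the Hamiltonian path can traverse the variable gadgets in any order. This removes the need for an ordering, keeps the vertex blow-up linear, and makes the whole reduction a uniform quantifier-free interpretation without parameters, after which Theorem~\ref{thm:atseriasdawar} applies directly. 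Without this (or some equivalent symmetric gadgetry), your proof does not establish the Hamiltonicity lower bound, which the paper singles out as the headline case.
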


Before we enter the proof let us note that these $2^{\Omega(n)}$ lower
bounds for 3-colouring, 3-XOR and 3-SAT are optimal up to the
multiplicative constant in the exponent. We discuss this later in this
section.  Now we turn to the proof of
Theorem~\ref{thm:threecoloring}. First we handle 3-colourability, then
Hamiltonicity. As intermediate steps towards both we do 3-XOR and
3-SAT.

By Lemma~\ref{lem:secondhalf}, for obtaining the lower bound for LP
lifts it suffices to show that any~$\CLogic^k$-sentence that defines
the class of $n$-vertex 3-colourable graphs has $k = \Omega(n)$:
indeed, whenever~$s \leq 2^{n/d}$, we have
\begin{equation}
\log(s)/(\log(n)-\log\log(s)) \leq n/(d\log(d)). \label{eqn:boundd}
\end{equation}
By Lemma~\ref{lem:circuitstolps}, the claim then follows for Boolean
threshold circuits.  A result from the literature that is quite close
to the $k = \Omega(n)$ that we need can be found in
Section~4.2 in~\cite{Dawar1998}, but the analysis in there gives~$k =
\Omega(\sqrt{n})$, and not $k = \Omega(n)$. While it should be
possible to modify the construction in~\cite{Dawar1998} to get what we
need, we refer to a more recent construction that achieves what we
want for the problems 3-XOR and 3-SAT, and then proceed by
reduction. These intermediate steps will also be useful when we
discuss Hamiltonicity.

\begin{theorem}[see Theorem~3.7 and 3.8 in \cite{AtseriasDawar2018}
  and Lemmas~22 and~23 in \cite{DawarWang2017}]
  \label{thm:atseriasdawar}
  There exist $c,d > 0$ such that, for every $k$ and every
  sufficiently large $n$, every $\CLogic^k$-sentence that separates
  the class of satisfiable 3-XOR (resp. 3-SAT) instances with~$n$
  variables and~$cn$ constraints from the class of unsatisfiable
  ones~has~$k \geq dn$.
\end{theorem}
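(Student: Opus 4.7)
The approach is through the standard game-theoretic characterisation: two structures agree on all $\CLogic^k$-sentences if and only if Duplicator wins the $k$-pebble bijective (counting) game between them. To show any separating sentence needs $k=\Omega(n)$ variables, I would exhibit, for each $k=dn$ with $d$ small, a satisfiable instance $I_0$ and an unsatisfiable instance $I_1$, each with $n$ variables and $cn$ constraints, on which Duplicator wins the $k$-pebble bijective game. The constructions rest on a random 3-uniform hypergraph $H$ on $n$ vertex-variables with $cn$ hyperedges, chosen so that, with high probability, $H$ enjoys a $(2+\delta)$-expansion property for suitable constants $\epsilon,\delta>0$: every set $S$ of at most $\epsilon n$ hyperedges touches at least $(2+\delta)|S|$ distinct variables.

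From $H$, form two 3-XOR systems sharing the same incidence structure but differing in the right-hand-side bits of the constraints: in $I_0$ take all right-hand sides zero (hence satisfiable), while in $I_1$ choose bits so that the overall system is unsatisfiable. As logical structures, $I_0$ and $I_1$ differ only in the monadic predicate recording the parity bit on each constraint. The key consequence of expansion is that every subsystem of $I_1$ on at most $\epsilon n$ constraints is satisfiable: a linear dependence among $|S|$ rows would force the variables touched by $S$ to number strictly fewer than $2|S|$, contradicting expansion. Duplicator's strategy in the bijective $k$-pebble game, with $k<\epsilon n$, is then to maintain at each position a partial assignment to the variables incident to the currently pebbled constraints, extended to a full satisfying assignment of the subsystem defined by those constraints, which exists in both $I_0$ and $I_1$. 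When Spoiler places a pebble on $I_0$, Duplicator supplies a bijection matching elements according to their local type (constraint, variable, or bit position) together with the current forced assignment; the multiset of local types agrees on both sides because the pebbled subsystems are isomorphic as incidence structures.

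The main technical hurdle is verifying that this bijection can be extended for $\Omega(n)$ rounds while preserving the exact multiplicities demanded by the counting quantifiers, not merely existential ones. This is where the Cai–F\"urer–Immerman-style structural analyses of \cite{AtseriasDawar2018, DawarWang2017} come in: they calibrate the expansion constants $c,\epsilon,\delta$ precisely so that, throughout the game, the local profile around any $k$ pebbled constraints is identical in $I_0$ and $I_1$. Finally, the 3-SAT lower bound is obtained by reduction: replace each 3-XOR constraint $x\oplus y\oplus z=b$ by the four equivalent 3-CNF clauses. This reduction is a fixed first-order interpretation, so it preserves counting width up to a constant factor and transfers the $\Omega(n)$ lower bound directly.
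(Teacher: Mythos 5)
Your high-level plan --- play the bijective $k$-pebble game over a random expander for 3-XOR and transfer to 3-SAT by a fixed first-order interpretation --- is the right one, and both the expansion/linear-dependence calculation and the final reduction paragraph are sound. The problem is the construction of the pair $(I_0, I_1)$: as you have set it up, Duplicator does not survive even one round. You take $I_0$ with all right-hand sides $0$ and $I_1$ with at least one right-hand side $1$; but the parity bit is a monadic colour on the constraint sort, so the sentence $\exists^{\geq 1} c\, P_1(c)$, already in $\CLogic^1$, separates them. Nor can this be fixed by merely matching Hamming weights. In the Tseitin instantiation (one constraint per vertex of a 3-regular graph), connectivity forces the single row dependency $\sum_v A_v = 0$, so satisfiability is determined by the parity of the number of $1$-coloured constraints; a satisfiable and an unsatisfiable instance over the same incidence structure then necessarily have different numbers of $1$-coloured constraints and are again separated with one variable. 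For a generic 3-uniform hypergraph one can avoid that parity obstruction, but then the ``maintain a locally satisfying assignment'' strategy only wins the \emph{existential} pebble game, not the bijective one: a random hypergraph is rigid, so once a constraint and its three variables are all pebbled Duplicator's bijection is forced to match that constraint to itself, exposing any bit mismatch, and the counting move demands a global bijection respecting the atomic type of each candidate extension, which your strategy does not provide.

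What actually closes the gap in \cite{AtseriasDawar2018} and \cite{DawarWang2017} is a Cai--F\"urer--Immerman gadget construction, which you gesture at in your penultimate paragraph but leave entirely to the references. There the expander's vertices are blown up into CFI-type gadgets with large internal automorphism groups; the satisfiable and unsatisfiable instances have \emph{identical} colour multisets by construction, and the twist is absorbed by local gadget automorphisms rather than appearing as a constraint-colour difference, which is exactly what lets a single global bijection be supplied at every round. Citing those papers merely for ``calibrating the constants'' understates the dependence: without the gadget there is no valid pair of structures for Duplicator to play on. (For the record, the paper you are reconstructing does not reprove this theorem; it cites both references, adding only that the linear bound $cn$ on the number of constraints follows from Lemma~3.3 of \cite{AtseriasDawar2018}, where that bound is explicit.)
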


Neither~\cite{AtseriasDawar2018} nor~\cite{DawarWang2017} state the
linear bound $cn$ on the number of constraints, but it easily follow
from both proofs. Concretely, it follows from Lemma~3.3
in~\cite{AtseriasDawar2018}, in which the bound \emph{is} stated. Now
we proceed by reduction in order to get the same result for
3-colouring:

\begin{lemma} \label{lem:threecoloring}
  There exist $c,d > 0$ such that, for every $k$ and every
  sufficiently large $n$, every~$\CLogic^k$-sentence that separates
  the class of~3-colourable graphs with $n$ vertices and $cn$ edges
  from the class of non-3-colourable ones~has~$k \geq dn$.
\end{lemma}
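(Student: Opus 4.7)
The approach will be to exhibit a first-order interpretation that realises the classical polynomial-time reduction from 3-SAT to 3-colouring, and then to pull back any hypothetical $\CLogic^k$-sentence through it. Theorem~\ref{thm:atseriasdawar} already provides a linear lower bound on counting-width for 3-SAT on instances with linearly many clauses, and FO-interpretations of bounded width using a bounded number of variables preserve counting-width up to a multiplicative constant; combining the two yields the lemma.

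First I would fix the standard reduction. Given a 3-CNF instance $\mathbb{A}$ with $n$ variables and $m$ clauses, the graph $G(\mathbb{A})$ contains three \emph{palette} vertices $T,F,B$ joined pairwise; for each variable $x$ two vertices $v_x,\bar v_x$ joined to each other and both joined to $B$ (so that in every proper 3-colouring $\{v_x,\bar v_x\}=\{T,F\}$); and for each clause $C$ a constant-size OR-gadget on $O(1)$ new vertices, with $O(1)$ edges linking the three literal vertices of $C$ to the palette in the standard way that forbids precisely the all-false colouring of the literals. This construction has $O(n+m)$ vertices and $O(n+m)$ edges, and $G(\mathbb{A})$ is 3-colourable iff $\mathbb{A}$ is satisfiable. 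Restricting to the inputs with $m \leq c_0 n$ furnished by Theorem~\ref{thm:atseriasdawar} gives a graph on $N = O(n)$ vertices and $O(N)$ edges, so an $\Omega(n)$ lower bound translates directly into an $\Omega(N)$ lower bound.

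Next I would present $G(\cdot)$ as an FO-interpretation $\Phi$ of width $2$ using a bounded number $k_0$ of variables. The vertex universe of $G(\mathbb{A})$ is encoded by pairs $(a,t)$ where $t$ ranges over a fixed finite set of sort-tags (palette, positive or negative copy of a variable, and one of $O(1)$ copies of a clause-gadget vertex), and $a$ is an element of the input structure, with a harmless convention for the three palette vertices whose representation does not depend on $a$. Every edge of $G(\mathbb{A})$ is either internal to a gadget, hence quantifier-free definable in two variables from the tags alone, or connects a clause-gadget vertex to one of its literal vertices, and this is quantifier-free definable from the clause--variable incidence and sign relations of the input vocabulary. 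In particular, $\Phi$ has constant width and constantly many variables.

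Finally I would invoke the standard transfer principle for counting logic: if $\theta$ is a $\CLogic^k$-sentence that separates 3-colourable from non-3-colourable graphs, then its pullback $\theta^{\Phi}$ is a $\CLogic^{O(k)+k_0}$-sentence that separates satisfiable from unsatisfiable 3-SAT instances with $n$ variables and $c_0 n$ clauses, since pulling back along a width-$2$ interpretation amounts to replacing each variable of $\theta$ by a pair of variables and relativising quantifiers (including counting quantifiers, which in $\CLogic$ can be expressed over pairs at the cost of a constant factor in variables) to the domain formula. Theorem~\ref{thm:atseriasdawar} then forces $O(k) \geq d_0 n$, hence $k \geq dn$ for a positive constant $d$ depending only on the gadget and on $d_0$. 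The one technical point requiring care is the pullback of counting quantifiers through a tagged, width-$2$ interpretation; this is routine but should be spelled out to confirm that the blow-up in variables really is a constant factor, so that the linear bound of Theorem~\ref{thm:atseriasdawar} is not dissolved.
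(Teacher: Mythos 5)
Your proposal is correct and follows essentially the same route as the paper: both reduce from 3-SAT via the standard textbook 3-SAT-to-3-colouring reduction, observe that it is a quantifier-free (bounded-width, parameter-free, order-free) interpretation producing graphs of linear size, and invoke closure of $\CLogic^k$ under such interpretations (the paper cites Lemma~2.1 of~\cite{AtseriasDawar2018} for exactly the transfer step you flag as needing care) to pull the linear counting-width lower bound of Theorem~\ref{thm:atseriasdawar} back to 3-colourability.
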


\begin{proof}
In the textbook reduction from 3-SAT to the problem of
deciding whether a graph is 3-colourable (see, e.g.,
\cite{PapadimitriouBook}), the output graph has one gadget with two
vertices for each variable in the input formula, one gadget with six
vertices for each constraint of the input formula (the reduction in
\cite{PapadimitriouBook} uses only three vertices for each constraint
because the reduction starts at NAE-SAT; starting at 3-SAT we need six
vertices), and one special vertex. The edges are local to each gadget,
plus two edges from each variable gadget to the special vertex, and a
constant number of edges for each variable occurrence in the input
formula between the constraint gadget where the variable appears, and
the corresponding variable gadget. There are no other vertices or
edges in the graph. It is clear from the construction that if the
input formula has $n$ vertices and $m$ constraints, then this graph
has $O(n)+O(m)$ vertices and $O(n)+O(m)+O(m)$ edges, since each of the
$m$ constraints contributes three occurrences. And it is not difficult
to see that any~$\CLogic^k$ formula that separates the 3-colourable
graphs that are output by the reduction from the non-3-colourable ones
can be converted into a $\CLogic^{O(k)}$ formula that separate the
satisfiable 3-SAT instances that are input to the reduction from the
unsatisfiable ones. Another way to see this is by noting that the
reduction is definable by a uniform quantifer-free interpretation
(without the need for any ordering, or parameters, on the input
structure), from which the claim on $\CLogic^{O(k)}$-definability
follows from the closure of the logic under quantifier-free
interpretations (see Lemma~2.1 in \cite{AtseriasDawar2018}). Now the
claim follows from Theorem~\ref{thm:atseriasdawar}.
\end{proof}

For Hamiltonicity we follow the same path. A well-known result of
Dahlhaus \cite{Dahlhaus1983} gives a first-order definable reduction
from 3-SAT to Hamiltonicity that does not require any linear order on
the input. However, that reduction is quadratic and would only achieve
a lower bound of the form $k = \Omega(\sqrt{n})$ for $n$-vertex
graphs. We work out a linear reduction:

\begin{lemma} \label{lem:hamiltonicity}
There exists $d > 0$ such that, for every $k$ and every sufficiently
large $n$, every $\CLogic^k$-formula that defines the class of
Hamiltonian graphs with $n$ vertices has $k \geq dn$.
\end{lemma}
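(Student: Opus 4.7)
The plan is to follow the same template as Lemma~\ref{lem:threecoloring}: produce a linear-size reduction from 3-SAT (or 3-XOR) to Hamiltonicity that is computed by a uniform quantifier-free interpretation, and then combine Theorem~\ref{thm:atseriasdawar} with the closure of counting logic under such interpretations (Lemma~2.1 in~\cite{AtseriasDawar2018}). The outcome will be that if Hamiltonicity with $N$ vertices were $\CLogic^k$-definable, then satisfiable 3-SAT instances with $n$ variables and $O(n)$ clauses would be $\CLogic^{O(k)}$-separable from the unsatisfiable ones, forcing $k = \Omega(N)$ by Theorem~\ref{thm:atseriasdawar} together with $N = O(n)$.

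First, I would preprocess 3-SAT to bounded occurrence. By a standard gadget (replace each variable $x$ that appears in $\ell$ clauses by $\ell$ fresh copies $x^{(1)},\ldots,x^{(\ell)}$ tied by a cycle of equality clauses $x^{(i)} \leftrightarrow x^{(i+1)}$ encoded as 3-CNF), one obtains an equisatisfiable 3-SAT instance in which each variable occurs in at most three clauses. This transformation is linear in size and is computed by a uniform quantifier-free interpretation on the incidence structure of the input, using a bounded number of ``copies'' of each domain element. Since the initial lower bound from Theorem~\ref{thm:atseriasdawar} already gives hard instances with $O(n)$ clauses, after preprocessing we still have $O(n)$ variables and $O(n)$ clauses, and the hard instances remain hard under $\CLogic^{O(k)}$-reductions.

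Next, I would apply a Garey--Johnson-style reduction from bounded-occurrence 3-SAT to Hamiltonicity. Concretely, for each variable there is a gadget of bounded size with two ``traversal orientations'' corresponding to the two truth values, for each clause there is a single vertex, and for each literal occurrence there is an edge (or a short bounded-size crossover gadget) connecting the appropriate slot of the variable's gadget to the clause vertex. Because occurrences per variable are bounded by three, and each clause has three literals ordered by the three binary incidence relations $L_1, L_2, L_3$ of the input vocabulary, every vertex and every edge of the output graph is determined by a bounded disjunction of quantifier-free conditions on constantly many elements of the preprocessed 3-SAT instance. The output has $O(n + m) = O(n)$ vertices, and the whole construction is a uniform quantifier-free interpretation.

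The main obstacle is to ensure that the Hamiltonicity gadget connections are truly quantifier-free definable; this is precisely why the bounded-occurrence preprocessing is needed, since a naive variable gadget would have size proportional to the number of clauses the variable appears in and would require an ``enumeration'' of those clauses that is not quantifier-free expressible. With the bounded-occurrence version, the variable gadget is of fixed size and the identification ``this edge connects the $i$-th occurrence-slot of variable $x$ to clause $c$ as its $j$-th literal'' becomes a quantifier-free condition over the constantly many options $(i,j) \in [3] \times [3]$. Composing the two interpretations and invoking Theorem~\ref{thm:atseriasdawar} yields the desired bound $k \geq dn$.
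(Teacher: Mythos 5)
You are right that the route is to build a linear-size reduction from 3-SAT that is a uniform quantifier-free interpretation and then invoke Theorem~\ref{thm:atseriasdawar} together with closure of $\CLogic^k$ under such interpretations; this is exactly the template the paper follows, as in Lemma~\ref{lem:threecoloring}. However, there are two genuine gaps, and the second one concerns the heart of the matter.

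First, the bounded-occurrence preprocessing is not, as stated, quantifier-free (or even first-order) definable. Tying the $\ell$ copies of a variable $x$ into a cycle $x^{(1)} \leftrightarrow x^{(2)} \leftrightarrow \cdots \leftrightarrow x^{(\ell)} \leftrightarrow x^{(1)}$ requires choosing a cyclic order on the unordered set of occurrences of $x$, and the incidence encoding of a 3-SAT instance supplies no such order; an interpretation without a linear order on the domain cannot break this symmetry. Replacing the cycle by a star from a root copy avoids the order but makes the root appear $\Omega(\ell)$ times, defeating the purpose, and a clique of pairwise equalities is order-invariant but costs $\Theta(\ell^2)$ constraints, which is not linear in the worst case. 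Second, and more importantly, you never say how the variable gadgets are linked to one another. In the Garey--Johnson and Papadimitriou constructions the variable gadgets are chained into a path so that a Hamilton cycle traverses each of them in one of two orientations, and it is precisely this chaining that requires a linear order on the variables; bounding the number of occurrences per variable does nothing about it. The paper's actual fix is to arrange the variable gadgets in an unordered clique with shared entry and exit vertices (Figure~\ref{fig:standardmodified}), which is order-invariant and adds no new vertices. Note also that the reduction the paper starts from uses constant-size variable gadgets together with separate constant-size per-occurrence gadgets, so bounded-occurrence preprocessing is not needed and the size concern you raise about a naive variable gadget does not arise.
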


\begin{proof}
In this case we need a minor modification of the textbook reduction
from~3-SAT to the problem of deciding whether a graph is Hamiltonian
in, e.g., \cite{PapadimitriouBook}. As in the~3-colouring case, the output
graph of the textbook reduction has one constant-size gadget for each
constraint, and also one constant-size gadget for each variable. There
is also one constant-size gadget for each variable-occurrence in the
input 3-CNF formula, and three additional special vertices. This
defines all the vertices of the graph. See Figure~9.7
of~\cite{PapadimitriouBook}. Since each clause contributes three variable
occurrences, the total number of vertices is $O(n)+O(m)+O(m)$, where
$n$ is the number of variables, and $m$ is the number of clauses of
the input formula. However, for defining the edges of this reduction,
one needs a linear order on the variables of the input formula for
creating the path of variable gadgets in Figure~9.7
of~\cite{PapadimitriouBook}. Such a linear order is not available in our
encoding (and cannot be available as otherwise the proof of
Lemma~\ref{lem:threecoloring} breaks down). Here is where we modify
the construction. Instead of aligning the variable gadgets in the form
of a path, we arrange them in the form of a big clique as in
Figure~\ref{fig:standardmodified}.

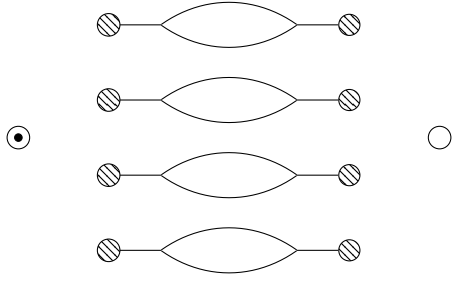
\begin{figure}
  \begin{center}
    \begin{tikzpicture}
      \usetikzlibrary{patterns}
      \draw (2.8cm, 2.5 cm) circle (0.15cm);
      \draw (-2.8cm, 2.5 cm) circle (0.15cm);
      \filldraw (-2.8 cm, 2.5 cm) circle (0.05cm);
      \foreach \y in {1,2,3,4} {
        \draw[pattern=north west lines] (-1.6cm, \y cm) circle (0.15cm);
        \draw (-1.45cm, \y cm) -- (-0.9cm, \y cm);
        \draw [domain=60:120] plot ({1.8*cos(\x)}, {\y-1.90+2.2*sin(\x)});
        \draw [domain=-60:-120] plot ({1.8*cos(\x)}, {\y+1.90+2.2*sin(\x)});
        \draw (1.45cm, \y cm) -- (+0.9cm, \y cm);
        \draw[pattern=north west lines] (1.6cm, \y cm) circle (4pt);
      % \draw (2.65cm, 2.5*0.02-\y*0.02+2.5) -- (2.45cm, 2.5*0.2-\y*0.2+2.5);
      % \draw (2.65cm, 2.5*0.01-\y*0.01+2.5) -- (2.45cm, 2.5*0.1-\y*0.1+2.5);
      % \draw (-2.65cm, 2.5*0.02-\y*0.02+2.5) -- (-2.45cm, 2.5*0.2-\y*0.2+2.5);
      % \draw (-2.65cm, 2.5*0.01-\y*0.01+2.5) -- (-2.45cm, 2.5*0.1-\y*0.1+2.5);
      }
    \end{tikzpicture}
  \end{center}
  \caption{The variable gadgets arranged in the form of a big
    (unordered) clique. All striped vertices are connected by a big
    clique. The rightmost and leftmost non-striped vertices denote the
    entry and exit points, respectively, and they are also connected
    to all the striped vertices. These gadgets are connected with the
    rest of the graph as in Figure~9.7 of~\cite{PapadimitriouBook}.}
  \label{fig:standardmodified}
\end{figure}

This modification does not introduce any new vertices and does not
depend on any given ordering of the variables in the input formula.
The analysis that proves that the reduction is correct is the same as
in~\cite{PapadimitriouBook}.  As in the proof of
Lemma~\ref{lem:threecoloring}, it is not hard to see that this
transformation is definable by a uniform quantifier-free
interpretation, still without parameters or linear orders, and the
result follows again from Theorem~\ref{thm:atseriasdawar} and the
closure of the logic under quantifier-free interpretations.
\end{proof}

\begin{proof}[Proof of Theorem~\ref{thm:threecoloring}]
  Apply Theorem~\ref{thm:atseriasdawar} and
  Lemmas~\ref{lem:threecoloring} and~\ref{lem:hamiltonicity} to
  Lemma~\ref{lem:secondhalf}, for LP lifts, and then to
  Lemma~\ref{lem:circuitstolps}, for circuits, and in both cases use
  equation~\eqref{eqn:boundd}.
\end{proof}

\paragraph{Lower Bound on the TSP Polytope}
As stated in the introduction, Yannakakis proved that the travelling
salesman polytope does not have subexponential symmetric LP
lifts. Here we show that this same lower bound follows from
Theorem~\ref{thm:threecoloring}. In the next section we will see that
the same type of argument cannot work for the matching polytope.

If $G = (V,E)$ is a graph with $V = [n]$, the incidence vector of $G$
is $\mathbf{x}^G = ( x^G_{ij} : i,j \in [n] )$, the vector in
$\reals^{[n]^2}$ defined by $x^G_{ij} = 1$ if $(i,j)$ is an edge of
$G$, and $x^G_{ij} = 0$ if $(i,j)$ is not an edge of $G$. Let
$\mathrm{TSP}_n$ denote the convex hull of all the vectors of the form
$\mathbf{x}^C$, where $C$ is a Hamilton cycle of the complete graph
$\mathrm{K}_n$ on $n$ vertices. We want to show that every
graph-symmetric~LP lift that has $\mathrm{TSP}_n$ as shadow must be of
exponential size as a consequence of Theorem~\ref{thm:threecoloring}.

\begin{theorem}[Theorem 2 in \cite{Yannakakis1991}] \label{thm:yannakakis}
  Every graph-symmetric LP lift that has $\mathrm{TSP}_n$ as shadow
  has size $2^{\Omega(n)}$.
\end{theorem}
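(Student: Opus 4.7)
The plan is to derive Theorem~\ref{thm:yannakakis} from Theorem~\ref{thm:threecoloring} by a simple symmetry-preserving reduction. Given a graph-symmetric LP lift $P \subseteq \reals^{[n]^2} \times \reals^W$ whose shadow is $\mathrm{TSP}_n$, I will construct, with only $O(n^2)$ extra constraints, a graph-symmetric LP lift $\widetilde{Q}$ that recognizes the class of Hamiltonian graphs on $n$ vertices; the $2^{\Omega(n)}$ lower bound supplied by Theorem~\ref{thm:threecoloring} then transfers back to $P$.

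The construction is the natural ``domination'' lift. Using fresh input variables $\xvector = (x_{ij})_{i,j \in [n]}$, I keep all constraints of $P$ in the variables $(\yvector,\zvector)$ and add the inequalities $y_{ij} \leq x_{ij}$ for every $i,j \in [n]$, together with the box constraints $0 \leq x_{ij} \leq 1$. The shadow of this system onto the $\xvector$-coordinates is
\[
Q := \{ \xvector \in [0,1]^{[n]^2} : \exists\, \yvector \in \mathrm{TSP}_n \text{ such that } \yvector \leq \xvector\}.
\]
I then need to check that $Q$ recognizes Hamiltonicity on $\{0,1\}^{[n]^2}$. The easy direction: if a graph $G$ contains a Hamilton cycle $C$, then $\xvector^C \in \mathrm{TSP}_n$ and $\xvector^C \leq \xvector^G$, so $\xvector^G \in Q$. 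The converse is the one step with genuine content: if $\xvector^G \in Q$, fix $\yvector \in \mathrm{TSP}_n$ with $\yvector \leq \xvector^G$ and write $\yvector = \sum_C \lambda_C \xvector^C$ as a convex combination of incidence vectors of Hamilton cycles in $K_n$. For any non-edge $(i,j)$ of $G$ we have $x^G_{ij} = 0$, which forces $y_{ij} = 0$ and therefore excludes $(i,j)$ from every cycle $C$ with $\lambda_C > 0$. Since the weights sum to~$1$, at least one such $C$ exists, and it uses only edges of $G$, witnessing Hamiltonicity.

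It remains to confirm that $\widetilde{Q}$ is graph-symmetric and of size $|P| + O(n^2)$. For each $\pi \in \Sym_n$, the graph-symmetry of $P$ provides a permutation of $W$ witnessing $P^\pi = P$; combining it with the natural componentwise action of $\pi$ on the fresh variables $\xvector$ and $\yvector$ sends each new constraint $y_{ij} \leq x_{ij}$ to $y_{\pi(i)\pi(j)} \leq x_{\pi(i)\pi(j)}$, which is present, and similarly for the box constraints. Thus if $P$ has size $s$, then $\widetilde{Q}$ is a graph-symmetric LP lift recognizing Hamiltonicity of size $O(s + n^2)$, and Theorem~\ref{thm:threecoloring} yields $O(s + n^2) = 2^{\Omega(n)}$, hence $s = 2^{\Omega(n)}$. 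The only thing that could conceivably go wrong is an unnoticed loss of symmetry, but since the new constraints are indexed uniformly by pairs $(i,j)$ in the same way the $\xvector$ and $\yvector$ variables are, symmetry is preserved exactly; I expect the plan to go through without complication.
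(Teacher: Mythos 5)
Your proof is correct and follows essentially the same route as the paper: the same ``domination'' lift adding $y_{ij}\leq x_{ij}$ to the assumed lift $P$ of $\mathrm{TSP}_n$, the same reduction to the Hamiltonicity lower bound of Theorem~\ref{thm:threecoloring}, and the same convex-combination argument to show the converse direction (a fractional point of $\mathrm{TSP}_n$ supported inside $G$ certifies a Hamilton cycle of $G$). The only cosmetic difference is that you add box constraints on $\xvector$, which the paper omits since they are irrelevant for separating integer points.
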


\begin{proof}
By Theorem~\ref{thm:threecoloring}, it suffices to show that if
$\mathrm{TSP}_n$ were the shadow of a subexponential-size
graph-symmetric~LP lift, then there would be a subexponential-size
graph-symmetric~LP lift that separates the $\xvector^G$ for which $G$ is
Hamiltonian from the $\xvector^G$ for which $G$ is not Hamiltonian. Assume
then that $P$ were symmetric LP lift of size $2^{o(n)}$ whose shadow
is $\mathrm{TSP}_n$; let us say that its principal variables are
$\mathbf{y} = (y_{ij} : i,j \in [n])$ and that its auxiliary variables
are $\mathbf{z} = (z_k : 1 \leq k \leq p(n))$. Consider the following
LP on the $x_{ij}$, $y_{ij}$ and $z_k$ variables:
\begin{equation}
\begin{array}{lll}
0 \leq y_{ij} \leq x_{ij} & & \text{ for each } i,j \in [n] \\
(\mathbf{y},\mathbf{z}) \in P. & & 
\end{array}
\end{equation}
Since by assumption $P$ is graph-symmetric, $Q$ is also
graph-symmetric.  Hence, it suffices to show that the projection of
$Q$ on the $x$-variables separates Hamiltonian graphs from
non-Hamiltonian graphs.

It is clear that if $G$ is a graph that contains a Hamilton cycle $C$,
then $\mathbf{x}^G$ is in the projection of~$Q$ on the $x$-variables:
choose $\mathbf{y} = \mathbf{x}^C$, and let $\mathbf{z}$ witness its
membership in~$P$. Conversely, assume that $G$ is a graph and
that~$\mathbf{x}^G$ is in the projection of $Q$ on
the~$x$-variables. Then there exists~$\mathbf{y}^* = (y^*_{ij})_{ij}$
that is in the TSP polytope and satisfies the inequalities~$0 \leq
y^*_{ij} \leq x^G_{ij}$ for each~$i,j \in [n]$. This means that the
support of~$\mathbf{y}^*$ defines a subgraph of~$G$, and at the same
time that~$\mathbf{y}^*$ is a convex combination of Hamilton cycles
of~$\mathrm{K}_n$. Let $\mathbf{y}^C$ be one of the vectors in this
convex combination, where $C$ is a Hamilton cycle of
$\mathrm{K}_n$. The support of $\mathbf{y}^C$ is of course included in
the support of $\mathbf{y}^*$, which means that $C$ is also a subgraph
of~$G$. So~$G$ contains a Hamilton cycle and is thus Hamiltonian.
\end{proof}

\paragraph{Upper bounds}

Let us start with the simple observation that the lower bounds of type
$2^{\Omega(n)}$ on 3-colouring, 3-SAT and 3-XOR are optimal: all three
cases can be solved by symmetric Boolean threshold circuits of size
$2^{O(n)}$, and hence by symmetric LP lifts of size~$2^{O(n)}$ by
Lemma~\ref{lem:circuitstolps}. Here $n$ is the number of vertices or
variables, respectively. This follows from the fact that all three
problems are definable in the existential fragment of monadic
second-order logic (i.e., monadic NP), which on structures of size
$n$, straightforwardly translate into symmetric Boolean circuits of
size~$2^{O(n)}$. For Hamiltonicity, the straightforward symmetric
upper bound is only $2^{O(n\log n)}$ on $n$-vertex graphs. It looks
plausible that Bellman \cite{Bellman1962} and Held-Karp
\cite{HeldKarp1961} dynamic programming $O(n^2 2^n)$ algorithm
\cite{Bellman1962,HeldKarp1961} could be implemented in a symmetric
Boolean circuit, but we are not aware of a reference where this has
been worked out. What is known is that Hamiltonicity is not definable
in (full) monadic second-order logic (see \cite{EbbinghausFlumBook}).

The list of problems for which we proved a lower bound in
Theorem~\ref{thm:threecoloring} includes one that is decidable in
polynomial time, i.e., 3-XOR. This means that any polynomial-size
family of LP lifts or threshold circuits that recognizes 3-XOR must
\emph{a fortiori} be asymmetric. On the other hand,
Theorem~\ref{thm:main} says that any problem that is definable in
$\FPC$ has polynomial-size symmetric LP-lifts.  This includes graph
planarity \cite{Grohe1998}, any polynomial-time decidable property of
graphs that exclude some minor~\cite{Grohe2011}, matrix singularity
over rationals \cite{BlassGurevichShelah1999}, solving systems of
linear equations over rationals \cite{Holm2011}, and many others. By
the results in~\cite{Anderson:2015}, it also includes the problem of
deciding whether a (general, not necessarily bipartite) graph contains
a perfect matching. The family can even be taken to be polynomial-time
uniform by applying the construction of Lemma~\ref{lem:circuitstolps}
to the ``easy half'' of the equivalence between FPC and
polynomial-size symmetric threshold circuits in
\cite{AndersonDawar2017}.

\begin{corollary} \label{cor:matching}
There is a (polynomial-time uniform) family of graph-symmetric~LP
lifts of polynomial size that recognizes the class of graphs that have
a perfect matching.
\end{corollary}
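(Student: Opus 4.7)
The plan is to assemble the corollary from three ingredients already available in the paper and its references. First, by the main theorem of \cite{Anderson:2015}, the existence of a perfect matching in a (not necessarily bipartite) graph is definable in $\FPC$; in fact the proof there gives an explicit $\FPC$-sentence. Second, the ``easy half'' of the equivalence in \cite{AndersonDawar2017} converts any $\FPC$-sentence $\phi$ into a polynomial-time uniform family $(C_n)_{n \in \naturals}$ of $L$-symmetric Boolean threshold circuits of size polynomial in $n$ that recognizes, on inputs of size $n$, exactly the class of $L$-structures defined by $\phi$. Specialized to $L = L_G$ and to the $\FPC$-sentence defining perfect matching, this yields a polynomial-time uniform family of graph-symmetric threshold circuits $(C_n)_{n \in \naturals}$ of size $n^{O(1)}$ recognizing the class of graphs with a perfect matching.

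Third, I would feed this family into the construction of Lemma~\ref{lem:circuitstolps}. That lemma already asserts that each $C_n$ is converted into a graph-symmetric LP lift $\mathrm{LP}(C_n)$ of size polynomial in $|C_n|$ that recognizes the same class, so size-wise we land at $n^{O(1)}$ as required. The remaining task is to argue that the conversion itself is polynomial-time uniform, i.e., that on input $1^n$ one can output $\mathrm{LP}(C_n)$ in time $n^{O(1)}$. This reduces to inspecting the construction in Section~\ref{sec:circuitstoLPs}: the replacement of each threshold gate $\mathrm{TH}_{m,k}$ by $\bigvee_{t=k}^{m}\mathrm{EX}_{m,t}$, the expansion of each $\mathrm{EX}_{m,t}$ gate into the LP described by (\ref{eqn:firstpart}) and (\ref{eqn:secondpart}) based on the truncated parity polytopes $\mathrm{PP}(n,q)$, and the gadgets for $\mathrm{AND}$, $\mathrm{OR}$, $\mathrm{NOT}$, are all purely local and syntactically determined by the corresponding gate in $C_n$, with coefficients that are small integers computable in polynomial time. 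Given that $(C_n)_{n \in \naturals}$ is polynomial-time uniform to begin with, composing the two procedures yields a polynomial-time algorithm producing $\mathrm{LP}(C_n)$ on input $1^n$.

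I expect no deep obstacle; the work is essentially bookkeeping. The one point that warrants care is that Lemma~\ref{lem:circuitstolps} is stated as a non-uniform existence result, so I would include one explicit sentence verifying that the proof of that lemma, and in particular the construction of the symmetric LPs for $\mathrm{EX}$, $\mathrm{AND}$, $\mathrm{OR}$, $\mathrm{NOT}$ gates together with the wiring into $\mathrm{LP}(C)$, is computable in time polynomial in $|C|$. With this remark in place, the chain $\FPC \Rightarrow$ uniform symmetric threshold circuits $\Rightarrow$ uniform symmetric LP lifts delivers the corollary.
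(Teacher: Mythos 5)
Your proposal matches the paper's own justification exactly: cite \cite{Anderson:2015} for $\FPC$-definability of perfect matching, invoke the ``easy half'' of \cite{AndersonDawar2017} for a polynomial-time uniform family of symmetric threshold circuits, and then apply the construction of Lemma~\ref{lem:circuitstolps}. Your extra remark verifying that the gadget-by-gadget translation in Section~\ref{sec:circuitstoLPs} is itself polynomial-time computable is a useful explicit check that the paper leaves implicit, but it is the same argument.
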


This should be contrasted with the fact, proved by Yannakakis, that
any symmetric LP lift of the perfect matching polytope $\mathrm{PM}_n$
has size $2^{\Omega(n)}$. Here, $\mathrm{PM}_n$ is defined as the
convex hull of all vectors of the form $\xvector^M$, where $M$ is the
edge set of a perfect matching of the complete graph $\mathrm{K}_{2n}$
on $2n$ vertices. Capturing $\mathrm{PM}_n$ by an LP lift or
recognizing the class of graphs that have a perfect matching by an LP
lift are different tasks. Both objects could be used for deciding
whether a given graph has a perfect matching, but capturing
$\mathrm{PM}_n$ has a demanding structural requirement that has no
analogue in the other task. The upper bound of
Corollary~\ref{cor:matching} also means that the argument that was
used for deriving lower bounds for $\mathrm{TSP}_n$ in the proof of
Theorem~\ref{thm:yannakakis} cannot be adapted to
$\mathrm{PM}_n$. Indeed, we do not know whether there is any route at
all for deriving lower bounds for $\mathrm{PM}_n$ via our results.

In view of Corollary~\ref{cor:matching} one may wonder whether the
convex hull of the incidence vectors of graphs that have a perfect
matching has a small symmetric LP lift. This, however, is easily seen
to not be the case: if it had, then its intersection with the
halfplane $\sum_{i,j \in [2n]: i\not=j} x_{ij} = 2n$ would be a small symmetric
LP lift for $\mathrm{PM}_n$.

\subsection{Problems on Erd\H{o}s-R\'enyi Random Graphs} \label{subsec:averagecase}

Let $\mathscr{G}(n,p)$ denote the Erd\H{o}s-R\'enyi distribution on
$n$-vertex labelled graphs with edge probability $p$. We write $G \sim
\mathscr{G}(n,p)$ to mean that $G$ is a random graph distributed as
in~$\mathscr{G}(n,p)$. In this section we argue that, for average-case
problems with respect to the uniform
distribution~$\mathscr{G}(n,1/2)$, as well as for the type of problems
that ask to distinguish $\mathscr{G}(n,1/2)$ from some other
distribution, polynomial-size symmetric LPs are as powerful as
arbitrary not necessarily symmetric Boolean circuits. For average-case
problems, this is indeed a direct consequence of our main result and
the following well-known fact in descriptive complexity theory:

\begin{theorem}[Corollary 4.8 in \cite{HellaKolaitisLuosto1996}] \label{HellaKolaitisLuosto}
For every polynomial-time decidable class of graphs~$\mathscr{C}$
there is an $\FPC$-definable class of graphs $\mathscr{C}'$ for which
the probability that a random graph~$G \sim \mathscr{G}(n,1/2)$ falls
in the symmetric difference $\mathscr{C} \symdiff \mathscr{C}'$ is
$o(1)$.
\end{theorem}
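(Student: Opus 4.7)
The plan is to reduce the statement to the classical fact that Fixed-Point Logic with Counting captures polynomial time on (almost all) random graphs through an almost-sure canonical ordering. The two ingredients are: (i) an $\FPC$-definable canonical labelling on a set of graphs of probability $1-o(1)$ under $\mathscr{G}(n,1/2)$, and (ii) the Immerman--Vardi theorem, which gives $\FPC = \mathrm{P}$ on linearly ordered structures.

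First I would exhibit an $\FPC$-formula $\phi_{\text{good}}(x,y)$ and an $\FPC$-sentence $\text{Good}$ with the property that whenever $\text{Good}$ holds on $G$, the formula $\phi_{\text{good}}$ defines a linear order on $V(G)$. The natural candidate is the stable colouring produced by (a bounded number of rounds of) the $2$-dimensional Weisfeiler--Leman refinement; the Babai--Erd\H{o}s--Selkow analysis of random graphs shows that under $\mathscr{G}(n,1/2)$ this refinement discretises the vertex set after $O(1)$ rounds with probability~$1-o(1)$. Since each round of WL refinement is straightforwardly expressible in $\FPC$ (indeed in $\FOC^3$ with counting terms), and discreteness of the stable colouring is itself $\FPC$-expressible, this yields both $\text{Good}$ and $\phi_{\text{good}}$.

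Next, fix a polynomial-time Turing machine $M$ deciding membership in $\mathscr{C}$. By the Immerman--Vardi theorem applied to the expansion of $G$ by the linear order $\phi_{\text{good}}$, there is an $\FPC$-formula $\psi_M$ that, interpreted in any graph $G$ satisfying $\text{Good}$, returns the verdict $M(G)$ computed with respect to the canonical vertex-labelling induced by $\phi_{\text{good}}$. Then define
\begin{equation*}
\mathscr{C}' := \{\, G : G \models \text{Good} \;\wedge\; G \models \psi_M \,\},
\end{equation*}
which is manifestly $\FPC$-definable. The symmetric difference $\mathscr{C} \symdiff \mathscr{C}'$ is contained in $\{G : G \not\models \text{Good}\}$, whose probability under $\mathscr{G}(n,1/2)$ is $o(1)$ by the choice of $\text{Good}$, yielding the claim.

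The main obstacle is producing $\text{Good}$ and $\phi_{\text{good}}$ with the required almost-sure behaviour while staying inside $\FPC$. Crude notions of canonical label (such as ordering by degree) do not work because in $\mathscr{G}(n,1/2)$ vertex degrees concentrate too tightly around $n/2$ to separate all vertices. The delicate part is therefore the probabilistic analysis showing that $\FPC$-expressible iterated refinement based on counting common neighbours with previously distinguished sets almost surely discretises; this is where one invokes Babai--Erd\H{o}s--Selkow or its refinements. Once the ordering is in place, the remainder is a routine combination of Immerman--Vardi with closure of $\FPC$ under Boolean combinations.
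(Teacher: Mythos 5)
Your proof is correct and follows exactly the route the paper indicates for this cited result: combine an $\FPC$-definable (in fact $\FOC$-definable) linear order that holds asymptotically almost surely on $\mathscr{G}(n,1/2)$ (the content of the paper's Theorem~\ref{HellaKolaitisLuosto2}, coming from Babai--Erd\H{o}s--Selkow) with the Immerman--Vardi theorem, and guard against the failure event with an $\FPC$-sentence that checks the order is actually a strict linear order. This is precisely the construction the paper carries out in the proof of Theorem~\ref{thm:plantecliquefpc}, so your approach is essentially identical to the paper's.
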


The point of Theorem~\ref{HellaKolaitisLuosto} is that the $\FPC$
formula that defines $\mathscr{C}'$ does not require any order on the
input graph, hence our Theorem~\ref{thm:main}
applies. Theorem~\ref{HellaKolaitisLuosto} is indeed a consequence of
the Immerman-Vardi Theorem \cite{Immerman1987,Vardi1982} and the fact
that a linear order is, asymptotically almost surely on
$\mathscr{G}(n,1/2)$, definable in FPC. We return to this later.  For
the rest of this section let us focus our discussion on the problem of
distinguishing $\mathscr{G}(n,1/2)$ from some other distribution of
random graphs, to which a direct application of
Theorem~\ref{HellaKolaitisLuosto} does not look possible.

We focus the discussion on the planted clique problem since it is one
of the best studied such problems, although it will be clear from the
discussion that the phenomenon is more general.
Let~$\mathscr{G}(n,p,k)$ denote the distribution that
results from drawing a random graph from~$\mathscr{G}(n,p)$ and then
\emph{planting} a random $k$-clique in it, i.e., adding the edges of a
$k$-clique on a uniformly chosen subset of $k$ vertices. Following
\cite{BarakSteurer2016Notes}, the planted clique problem, also known
as the \emph{hidden} clique problem, comes in three flavours.
Informally stated, these are:

\begin{itemize} \itemsep=0pt
\item search: given $G \sim \mathscr{G}(n,p,k)$,
find the planted clique,
\item refutation: given $G \sim \mathscr{G}(n,p)$,
certify that the clique number is less than $k$,
\item decision: given $G \sim \mathscr{G}(n,p)$ or $G \sim
  \mathscr{G}(n,p,k)$, determine  which is the case.
\end{itemize}

\noindent Formally, the decision version can be stated as follows. We
say that a class of graphs $\mathscr{C}$ solves the decision version
of the planted clique problem with parameters~$p = p(n)$ and~$k =
k(n)$ and advantage $\epsilon = \epsilon(n) > 0$ if for every large
enough~$n$ the following hold:
\begin{enumerate} \itemsep=0pt
\item if $G \sim \mathscr{G}(n,p)$, then $G$ is
in $\mathscr{C}$ with probability at least $1/2+\epsilon$,
\item if $G \sim
\mathscr{G}(n,p,k)$, then $G$ is in $\mathscr{C}$
with probability at most $1/2-\epsilon$. 
\end{enumerate}
Solvable in polynomial time (or in FPC, or by a family of LP-lifts,
etc.)  means that it is solvable by a class of graphs $\mathscr{C}$
that can be recognized in polynomial time (or in FPC, or by a family
of LP-lifts, etc.).  It should be clear that if the decision version
is hard, then the other two versions of the problem can only be
harder.

The planted clique problem is an easy-to-state average case problem
that has a long history. Its formulation is attributed to Jerrum
\cite{Jerrum1992} and Ku\v{c}era \cite{Kucera1995}, independently. In
the range~$k(n) = \omega(\sqrt{n\log n})$, a simple algorithm based on
degree sequences solves the search problem in polynomial time
\cite{Kucera1995}. In the range~$k(n) = \Omega(\sqrt{n})$, a spectral
based algorithm is known to solve the problem in polynomial time
\cite{AlonKrivelevichSudakov1998}, but spectral-free algorithms are
also known that run even in linear time
\cite{FeigeKrauthgamerpre2013}. For $k(n) = o(\sqrt{n})$ the status of
the problem is famously open, even for its decision variant. Some
lower bounds are known in restricted models, such as the original
lower bound for Markov chain methods in \cite{Jerrum1992}. Lower
bounds are also known for (symmetric) linear and semidefinite program
formulations of the problem. We discuss these next.

We formulate the clique problem on a graph $G = (V,E)$ as a non-linear
polynomial optimization problem: For each vertex~$v \in V$ introduce
one variable $y_v$ that stands for the indicator whether~$v$ belongs
to the clique. The clique number is the maximum of $\sum_{v \in V}
y_v$ subject to the constraints that $y_u y_v = 0$ for each non-edge
$(u,v) \not\in E$, and $y_v^2 - y_v = 0$ for each~$u \in V$. For the
refutation and the decision versions of the problem, it is more
natural to turn the objective function into a constraint $\sum_{v \in
  V} y_v \geq k$.  This gives a different quadratic program
feasibility problem for each graph $G$. Thinking of $G$ as given by
the $\{0,1\}$-vector~$(x_{u,v})_{u,v \in [n]}$ in the usual way, the
programs can be made uniform, i.e., a single quadratic program
feasibility problem serves all graphs with $V = [n]$:

\medskip
\begin{center}
\begin{tabular}{lll}
& $\sum_{v \in [n]} y_v \geq k$ & \\
& $y_uy_v \leq x_{u,v}$ & for $u,v \in [n]$ \\
& $y_v^2 - y_v = 0$ & for $v \in [n]$
\end{tabular}
\end{center}
\medskip

While this a hard-to-solve quadratic program, there are several
tractable relaxations that one can study. Two systematic methods for
generating such relaxations were introduced by Lov\'asz and Schrijver
in \cite{LovaszSchrijver1991}, and Sherali and Adams in
\cite{SheraliAdams1990}. Both cases start by
(naively)~\emph{linearizing} the quadratic program, i.e., replacing
each quadratic term~$y_u y_v$ in the constraints by a new variable
$y_{\{u,v\}}$. The result is a very weak symmetric LP relaxations with
auxiliary variables, but this is only the \emph{first level} of the~LS
and~SA hierarchies.  The successive levels of the hierarchies are
obtained by adding linear constraints that can be proved valid for the
convex hull of solutions of the quadratic program by iterated
applications of a simple rule of inference.

It can be seen that the levels of the hierarchies are always
graph-symmetric LPs, with the~$d$-th level having $n^{O(d)}$ auxiliary
variables and constraints, where~$n = |V|$. The strengths of the
relaxations converge to the quadratic program in the sense that the
polytopes they project to are tighter and tighter approximations of the
convex hull of solutions of the quadratic
program. Moreover, the exact convex hull is eventually reached no
later than the~$n$-th level.

In order to appreciate the symmetry of the LPs that define the
sucessive levels of the hierarchy, it useful to enter the details for
the definition of the SA hierarchy. The first step in producing
the~$d$-th level of the SA hierarchy is to formally multiply each
constraint of the quadratic program by an extended monomial of the
form $\prod_{w \in I} y_w \prod_{w \in J} (1-y_w)$ for~$I,J \subseteq
V$ with $I \cap J = \emptyset$ and $|I \cup J| \leq d-1$. The second
step is to expand out the formal expressions into sums of degree $d+1$
monomials, and multilinearize every monomial on the $y$-variables;
note that this is a valid step only under the constraint $y_v^2 - y_v
= 0$, hence for $\{0,1\}$-assignments. In the third step we introduce
a new variable $y_I$ for each $I \subseteq V$ with $|I| \leq d+1$, and
a new variable $y_{u,v,I}$ for each $u,v \in V$ and each $I \subseteq
V$ with $|I| \leq d-1$. The fourth and final step is to linearize:
replace each monomial $\prod_{w \in I} y_w$ by the new variable $y_I$,
and each monomial~$x_{u,v} \prod_{w \in I} y_w$ by the new variable
$y_{u,v,I}$. The link between the two types of variables is
established by setting $x_{u,v} = y_{u,v,\emptyset}$ and
$y_{\emptyset} = 1$. The symmetry of the resulting~LP is obvious since
the image~$I^{(\pi)}$ of a set $I \subseteq V$ by a permutation $\pi
\in \Sym_V$ has the same cardinality as~$I$.  The result is a
graph-symmetric LP with~$n^{O(d)}$ auxiliary variables and
constraints, whose shadow eventually captures the convex hull of
solutions of the quadratic program. Any level that approximates this
hull well enough, even on average on $\mathscr{G}(n,1/2)$, can be used
for solving the planted clique problem.

The strengths and limitations of the LS and SA hierarchies (and
beyond) for the planted clique problem have been the object of
intensive study. Starting with the work of Feige and Krauthgamer
\cite{FeigeKrauthgamer2003}, it is known that for each constant $d$,
the $d$-th level of the LS hierarchy cannot distinguish
between $\mathscr{G}(n,1/2)$ and
$\mathscr{G}(n,1/2,k(n))$ when $k(n) = o(\sqrt{n})$
with any significant advantage. Their lower bound is much stronger
than we stated it since it applies even to LS$^+$, the semi-definite
programming variant of the LS-hierarchy. For the~SA hierarchy, the
same result is attributed to the folklore, and more recent works
\cite{Baraketal2016,Hopkinsetal2018} have obtained analogous results
for the even stronger Lasserre/Sums-of-Squares~(SOS) hierarchies, also
proving that the problem stays hard for their constant $d$ level when
$k(n) = o(\sqrt{n})$.  Further, in certain contexts, it is possible to
prove that the Sherali-Adams hierarchy is \emph{optimal} among
symmetric~LP lifts of comparable size. This includes a model of LP
lifts for solving Boolean constraint satisfaction problems (see
Theorem~4.1 in \cite{ChanLeeRaghavendraSteurer2016}). This means that,
in such restricted contexts, proving lower bounds on the levels
of Sherali-Adams is enough for getting size lower bounds for \emph{any}
symmetric LP lift.

In view of such success in proving lower bounds on the size of 
symmetric LP lifts, starting with Yannakakis, and including the
discussion above on hierarchies for the planted clique problem, and
also given our own lower bounds from Section~\ref{sec:upperandlower}, the
following consequence of Theorem~\ref{thm:main} may come as a
surprise:

\begin{corollary} \label{cor:plantedclique}
If the planted clique problem with parameters $p = 1/2$ and $k = k(n)$
is solvable in polynomial time with advantage $\epsilon > 0$, then it
is also solvable by a (polynomial-time uniform) family of
polynomial-size graph-symmetric LP lifts with advantage $\epsilon -
o(1)$.
\end{corollary}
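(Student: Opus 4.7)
The plan is to derive Corollary~\ref{cor:plantedclique} by combining Theorem~\ref{thm:main} with the Hella-Kolaitis-Luosto Theorem~\ref{HellaKolaitisLuosto}. Let $A$ be a polynomial-time algorithm solving the planted clique problem with advantage~$\epsilon$, and let $\mathscr{C}$ be the class of graphs it accepts. Since both $\mathscr{G}(n,1/2)$ and $\mathscr{G}(n,1/2,k)$ are invariant under vertex relabellings, and since almost every graph drawn from $\mathscr{G}(n,1/2)$ is rigid, we may assume without loss of generality that $\mathscr{C}$ is isomorphism-closed. Applying Theorem~\ref{HellaKolaitisLuosto} to $\mathscr{C}$ then produces an $\FPC$-definable class $\mathscr{C}'$ with $\Pr_{G \sim \mathscr{G}(n,1/2)}\bigl[\,G \in \mathscr{C} \symdiff \mathscr{C}'\,\bigr] = o(1)$.

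The crux of the proof is to promote this bound to the planted distribution. Put $E := \mathscr{C} \symdiff \mathscr{C}'$. Since $E$ is isomorphism-closed and $\mathscr{G}(n,1/2,k)$ is obtained from $\mathscr{G}(n,1/2)$ by conditioning on a fixed $k$-subset $S_0 \subseteq [n]$ forming a clique, an event of probability $2^{-\binom{k}{2}}$, a direct change-of-measure yields
\begin{equation*}
\Pr_{G \sim \mathscr{G}(n,1/2,k)}[\,G \in E\,] \;\leq\; 2^{\binom{k}{2}}\cdot \Pr_{G \sim \mathscr{G}(n,1/2)}[\,G \in E\,].
\end{equation*}
To force the right-hand side to $o(1)$ I would appeal to the quantitatively strengthened form of Theorem~\ref{HellaKolaitisLuosto}, in which the symmetric difference has probability $2^{-\Omega(n)}$ rather than merely $o(1)$: this is implicit in the HKL construction, whose failure set is governed by $\FPC$-definable extension properties of the random graph that hold with exponentially high probability. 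For $k$ in the interesting regime---$k = o(\sqrt{n})$, which is forced by the very existence of a polynomial-time algorithm with non-trivial advantage (otherwise the two distributions are already distinguishable by simple statistics and a direct $\FPC$ capture replaces the HKL step)---one has $\binom{k}{2} = o(n)$, and the product $2^{\binom{k}{2}}\cdot 2^{-\Omega(n)}$ is indeed $o(1)$.

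It follows that $\mathscr{C}'$ solves the planted clique problem with advantage $\epsilon - o(1)$. Since $\mathscr{C}'$ is $\FPC$-definable, by the easy half of the equivalence in~\cite{AndersonDawar2017} it is recognized by a polynomial-time uniform family of polynomial-size graph-symmetric Boolean threshold circuits, and hence, by Lemma~\ref{lem:circuitstolps}, by a polynomial-time uniform family of polynomial-size graph-symmetric LP lifts, as the corollary claims. The main obstacle is the middle step: the quantitative transfer of the HKL approximation from $\mathscr{G}(n,1/2)$ to $\mathscr{G}(n,1/2,k)$, for which the nominal $o(1)$ of Theorem~\ref{HellaKolaitisLuosto} is too weak and one must exploit the exponentially small failure probability hidden in its proof.
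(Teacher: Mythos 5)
Your argument has a genuine gap at the change-of-measure step. To absorb the factor $2^{{k \choose 2}}$ you posit a strengthening of Theorem~\ref{HellaKolaitisLuosto} in which the failure set has probability $2^{-\Omega(n)}$. That bound is not established, and is almost certainly false as stated. The proof of Theorem~\ref{HellaKolaitisLuosto} runs through Theorem~\ref{HellaKolaitisLuosto2}, whose $\FOC$ order-formula $\phi$ comes from a Babai--Erd\H{o}s--Selkow-style canonical labelling: isolate the $O(\log n)$ highest-degree vertices and colour the remaining ones by their adjacency pattern to that set. The failure events---a tie among the top degrees, or two vertices sharing a pattern---carry only polynomially small probability (birthday-type union bounds of the form $n^2/\mathrm{poly}(n)$), not $2^{-\Omega(n)}$. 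Your appeal to ``$\FPC$-definable extension properties that hold with exponentially high probability'' conflates Gaifman's extension axioms (which do fail exponentially rarely) with the failure set of the canonization, which is governed by degree ties and pattern collisions. With only a polynomial failure rate, the multiplier $2^{{k \choose 2}}$ already swamps the bound once $k = \omega(\sqrt{\log n})$; the fallback to ``simple statistics'' for $k = \Omega(\sqrt{n})$ leaves a wide range uncovered, and the spectral test at $k = \Theta(\sqrt{n})$ is not known to be $\FPC$-definable in any case.

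The paper avoids this obstacle entirely, and in fact explicitly remarks that it ``do[es] not see a direct way of getting [the result] from Theorem~\ref{HellaKolaitisLuosto}''---which is precisely the route you take. What it does instead (Theorem~\ref{thm:plantecliquefpc}) is to work at the level of the order-formula $\phi$ of Theorem~\ref{HellaKolaitisLuosto2} together with the Immerman--Vardi formula $\psi$ for $\mathscr{C}$ on ordered graphs, and to form the conjunction $\theta$ of ``$\phi$ defines a strict linear order'' with $\psi[R/\phi]$. The crucial observation is that this conjunction is one-sided: whenever $\theta$ holds, $\phi$ \emph{is} an order, so $\psi[R/\phi]$ genuinely certifies $G \in \mathscr{C}$, with no a.a.s.\ caveat. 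Hence on $\mathscr{G}(n,1/2,k)$ one gets $\Pr[\theta] \leq \Pr[G \in \mathscr{C}] \leq 1/2 - \epsilon$ without needing \emph{any} control over where $\phi$ fails on the planted distribution; the $o(1)$ loss from the canonization is paid only on $\mathscr{G}(n,1/2)$, exactly where the nominal bound of Theorem~\ref{HellaKolaitisLuosto2} suffices. This works for every $k(n)$ with no case split. Your final step---passing from $\FPC$ to polynomial-size graph-symmetric LP lifts via \cite{AndersonDawar2017} and Lemma~\ref{lem:circuitstolps}---agrees with the paper.
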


In the rest of this section we show how to derive the descriptive
complexity version of Corollary~\ref{cor:plantedclique}, from which
Corollary~\ref{cor:plantedclique} follows at once from
Theorem~\ref{thm:main}. The descriptive complexity variant relies on
the following well-known fact, which builds on the almost sure graph
canonization methods of \cite{BabaiErdosSelkow1980}:

\begin{theorem}[Theorem 4.6 in \cite{HellaKolaitisLuosto1996}] 
\label{HellaKolaitisLuosto2}
There is an $\FOC$-formula $\phi(x,y)$ such that, for all $\epsilon >
0$ and all sufficiently large $n$, if $G \sim \mathscr{G}(n,1/2)$, then
$\phi(x,y)$ defines a strict linear order on the vertices of $G$ with
probability at least $1-\epsilon$.
\end{theorem}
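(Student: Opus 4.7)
The plan is to follow the almost-sure canonization paradigm of Babai-Erdős-Selkow, adapted to the logical setting of $\FOC$. The idea is to identify, in an $\FOC$-definable way, a small set of pivot vertices $v_1, \ldots, v_r$ with $r = \Theta(\log n)$, and then let $\phi(x,y)$ assert that the binary adjacency fingerprint of $x$ with respect to these pivots precedes the fingerprint of $y$ in the lexicographic order.

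First I would select the pivots as the vertices of \emph{singleton degree}. Using the counting terms of $\FOC$, the set $S$ of vertices $v$ whose degree $\# u\, E(u,v)$ is uniquely realized (i.e., $\# u\,(\# w\, E(w,u) = \# w\, E(w,v)) = 1$) is directly definable. Since degrees live in the naturally ordered number sort of $\FOC$, the elements of $S$ inherit a strict linear order $\prec_S$. I would then take $v_1 \prec_S \cdots \prec_S v_r$ to be the first $r = \lceil 3 \log_2 n \rceil$ elements of $S$; the value $r$ is $\FOC$-definable as a term on the number sort. The formula $\phi(x,y)$ then says: there exist $v_1, \ldots, v_r$ realizing this initial segment of $S$ in order, and reading the bits $E(x,v_i)$ and $E(y,v_i)$ for $i = 1, \ldots, r$, the first index at which $x$ and $y$ differ carries $0$ for $x$ and $1$ for $y$. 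The bounded quantification over $i \in \{1, \ldots, r\}$ takes place in the number sort and is directly expressible in $\FOC$.

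Second, I would verify the probability bound in two independent pieces. Conditioned on the choice of pivots, the $r$ adjacency bits of any non-pivot vertex are independent Bernoulli$(1/2)$ random variables, so two fixed non-pivot vertices collide in fingerprint with probability at most $2^{-r} \leq n^{-3}$. A union bound over the at most $n^2$ ordered pairs gives a collision probability of at most $n^{-1} = o(1)$. Whenever no collision occurs, $\phi$ defines a strict linear order on $V(G)$, since the lexicographic order on distinct fingerprints is a strict linear order, and the pivot vertices themselves receive pairwise distinct fingerprints because $v_i \sim v_j$ depends only on the graph.

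The hard part will be the first piece: showing that, with probability $1-o(1)$ over $G \sim \mathscr{G}(n,1/2)$, the set $S$ of singleton-degree vertices has cardinality at least $r$. This is delicate because degrees in $\mathscr{G}(n,1/2)$ concentrate around $n/2$, so one must exploit the tails of the degree distribution, where multiplicities are small and singletons abound. The required estimate, that the $\Theta(\log n)$ extreme degrees are almost surely realized by a single vertex each, is a classical consequence of the Poisson-type approximation for the multiplicity statistics of the degree sequence of $\mathscr{G}(n,1/2)$ due to Bollobás; I would cite this rather than reprove it. Combining the two pieces, the total failure probability is $o(1)$, which falls below any prescribed $\epsilon > 0$ for all sufficiently large $n$.
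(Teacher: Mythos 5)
Note first that the paper itself does not supply a proof of this statement: it is imported verbatim as Theorem~4.6 from Hella, Kolaitis and Luosto, and the surrounding text explicitly says that their result ``builds on the almost sure graph canonization methods of \cite{BabaiErdosSelkow1980}''. So your proposal is aiming at exactly the right target, and your high-level strategy (a Babai--Erd\H{o}s--Selkow-style canonization implemented as a fixed $\FOC$ formula) is indeed the one the cited source uses. That said, there are two genuine gaps in the probabilistic analysis and one structural issue with the formula.

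\textbf{The independence claim is false as stated.} You assert that ``conditioned on the choice of pivots, the $r$ adjacency bits of any non-pivot vertex are independent $\mathrm{Bernoulli}(1/2)$ random variables.'' This is not correct: the identity of the pivots is a function of the degree sequence, which depends on every edge of the graph, including the edges from a non-pivot $u$ to the pivots. Conditioning on a vertex being selected as a low-degree pivot, for instance, biases all of its incident-edge indicators downward, and the bits $E(u,v_1),\ldots,E(u,v_r)$ are neither exactly $\mathrm{Bernoulli}(1/2)$ nor mutually independent under this conditioning. The standard repair, following Babai--Erd\H{o}s--Selkow, is to fix the pair $u,w$ being compared, condition on the induced subgraph $G[V\setminus\{u,w\}]$ (so that the tentative pivots are determined and the edges from $\{u,w\}$ to the rest are fresh fair coins), and then separately argue that, with high probability, the gaps between the top (or bottom) $r+1$ degrees are large enough ($\geq 5$, say) that adding back $u$ and $w$ cannot alter which vertices are the pivots. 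That second degree-gap estimate is a strengthening of the bare claim that the top $r$ degrees are distinct, and it is precisely what makes the naive $2^{-r}$ collision bound legitimate. Your current write-up skips this decoupling entirely.

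\textbf{Pivots are not distinguished by their fingerprints.} You claim the pivot vertices receive pairwise distinct fingerprints ``because $v_i \sim v_j$ depends only on the graph,'' but this does not follow: the fingerprint of $v_i$ at position $i$ is $E(v_i,v_i)=0$, while the fingerprint of $v_j$ at position $i$ is the coin flip $E(v_j,v_i)$, and nothing forces these to differ across all positions. The clean fix, again as in BES, is to order first by degree and only secondarily by the adjacency fingerprint. Since your pivots have, by construction, uniquely realized degrees, the degree comparison already totally orders the pivots, and it separates pivots from non-pivots as well; the fingerprint comparison is then only needed within degree classes of non-pivots. With this two-level lexicographic comparison the linear-order property does hold whenever (i) $|S|\geq r$, (ii) the relevant degree gaps are large, and (iii) no fingerprint collision occurs among same-degree non-pivots.

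\textbf{The formula must have a fixed number of variables.} The phrasing ``there exist $v_1,\ldots,v_r$ realizing this initial segment'' reads as quantifying $r=\Theta(\log n)$ element variables, which is not an $\FOC$ formula. You do gesture at the correct fix --- the position index $i$ lives in the number sort --- but make it explicit: the $i$-th pivot $v_i$ is the unique singleton-degree vertex with exactly $i-1$ singleton-degree vertices of smaller degree, which is definable with $O(1)$ element variables and $O(1)$ number variables, and the lexicographic comparison needs only one number variable for the first disagreement position together with a constant number of element variables to name the pivot at that position. Also worth stating: $r=\lceil 3\log_2 n\rceil$ is definable in the number sort of $\FOC$ because $+,\times$ give you BIT, hence the bit-length function.

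The remaining ingredient --- that $|S|\geq 3\log_2 n$ asymptotically almost surely, coming from the extreme degrees being uniquely realized with comfortable gaps --- is correctly identified as a classical estimate from the BES/Bollob\'as literature and is fine to cite. With the independence argument replaced by the conditioning-on-$G[V\setminus\{u,w\}]$ device, the degree-gap strengthening added, and the degree-first fingerprint-second ordering used, your proposal becomes a correct proof in the same spirit as the source.
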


We note that Theorem~\ref{HellaKolaitisLuosto2} is one of the two
ingredients in the proof of Theorem~\ref{HellaKolaitisLuosto}; the
second one is the Immerman-Vardi Theorem. While our proof uses only
these two ingredients too, we do not see a direct way of getting it
from Theorem~\ref{HellaKolaitisLuosto}.

\begin{theorem} \label{thm:plantecliquefpc}
If the planted clique problem with parameters $p = 1/2$ and $k = k(n)$
is solvable in polynomial time with advantage $\epsilon > 0$, then it
is solvable in $\FPC$ with advantage~$\epsilon - o(1)$.
\end{theorem}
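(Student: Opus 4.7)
The plan is to transplant the polynomial-time algorithm for the problem into $\FPC$, using the $\FOC$-definable almost-sure order supplied by Theorem~\ref{HellaKolaitisLuosto2} to linearize the vertex set, and then appealing to the Immerman--Vardi theorem; on graphs where the putative order fails, I default to rejection. The key observation is that this conservative default preserves the upper bound required on the planted side \emph{for free}, while Theorem~\ref{HellaKolaitisLuosto2} ensures the default is invoked with probability $o(1)$ under the uniform distribution, so the lower bound there survives with only an $o(1)$ loss. I therefore expect no genuine obstacle beyond a routine closure-under-substitution check.

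Concretely, let $\mathscr{C}$ be a polynomial-time decidable class solving the problem with advantage $\epsilon > 0$. By the Immerman--Vardi theorem, there is an $\FP$-sentence (hence an $\FPC$-sentence) $\psi$ over $L_G \cup \{<\}$ such that for every finite graph $G$ and every strict linear order $<$ on $V(G)$, $(G,<) \models \psi$ if and only if $G \in \mathscr{C}$. Let $\phi(x,y)$ be the $\FOC$-formula from Theorem~\ref{HellaKolaitisLuosto2}, and let $\mathrm{LO}(\phi)$ be the first-order sentence asserting that $\phi$ defines a strict linear order on the vertex set (irreflexivity, transitivity, totality). Let $\psi'$ be obtained from $\psi$ by replacing every atomic subformula of the form $<(s,t)$ by $\phi(s,t)$; closure of $\FPC$ under this substitution is standard since $\FOC \subseteq \FPC$. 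Define $\mathscr{C}'$ to be the class of graphs satisfying the $\FPC$-sentence
\begin{equation*}
\theta \;:=\; \mathrm{LO}(\phi) \,\wedge\, \psi'.
\end{equation*}

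By construction, $\mathscr{C}' \subseteq \mathscr{C}$ as classes of graphs: if $G \models \theta$, then $\phi^G$ is a strict linear order on $V(G)$ and $(G, \phi^G) \models \psi$, which by the Immerman--Vardi property forces $G \in \mathscr{C}$. Under $G \sim \mathscr{G}(n, 1/2, k)$ this set-inclusion immediately gives that the probability that $G \in \mathscr{C}'$ is at most the probability that $G \in \mathscr{C}$, which by hypothesis is at most $1/2 - \epsilon$. Under $G \sim \mathscr{G}(n, 1/2)$, Theorem~\ref{HellaKolaitisLuosto2} yields that $\phi^G$ fails to be a strict linear order with probability $o(1)$, so the probability that $G \in \mathscr{C}'$ is at least the probability that $G \in \mathscr{C}$ minus $o(1)$, hence at least $1/2 + \epsilon - o(1)$. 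Together these establish advantage $\epsilon - o(1)$. The only fiddly technical point is the verification that the formal substitution of $\phi$ for $<$ preserves $\FPC$-definability, which is immediate from the fact that $\phi$ is $\FOC$ and that $\FPC$ is closed under taking $\FPC$-interpretations.
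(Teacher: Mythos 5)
Your proof is correct and takes essentially the same route as the paper's: you use the almost-sure $\FOC$-definable order from Theorem~\ref{HellaKolaitisLuosto2} to substitute for the order in the Immerman--Vardi $\FP$-sentence, guard with the assertion that the putative order is in fact a strict linear order, and then split the analysis by distribution. Your explicit observation that the resulting class is a \emph{subset} of $\mathscr{C}$ (so the planted-side bound is for free, while the uniform side loses only $o(1)$) is a slightly crisper way to organize the two probability estimates than the paper's phrasing, but it is the same argument.
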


\begin{proof}
Suppose that $\mathscr{C}$ is a polynomial time decidable class of
graphs that solves the problem with advantage $\epsilon > 0$. Let
$\mathscr{C}'$ denote the class of all ordered expansions of graphs in
$\mathscr{C}$, i.e., the structures in $\mathscr{C}'$ are finite
structures over the vocabulary $L = \{ E, R \}$, where $E$ and $R$ are
binary relations symbols, and $E$ is interpreted as the edge relation
of some graph in $\mathscr{C}$, and $R$ is interpreted as a strict
linear order on its set of vertices. By the Immerman-Vardi Theorem,
there is an FP formula $\psi$ that defines $\mathscr{C'}$ on the class
of all ordered graphs. Let~$\phi(x,y)$ be the~$\FOC$ formula from
Theorem~\ref{HellaKolaitisLuosto2}, and let $\theta$ be the
conjunction of the following sentences:
\begin{align*}
& \forall x (\neg\phi(x,x)) \\
& \forall x \forall y \forall z (\neg\phi(x,y) \vee \neg\phi(y,z) \vee \phi(x,z)) \\
& \forall x \forall y (x=y \vee \phi(x,y) \vee \phi(y,x)) \\
& \psi[R/\phi].
\end{align*}
This sentence says that $\phi$ defines a strict linear order and
$\psi$ holds when each occurrence of~$R$ is replaced by $\phi$. This
is an FPC formula over the vocabulary of unordered graphs that defines
a class $\mathscr{D}$ of graphs. We claim that $\mathscr{D}$ solves
the problem with advantage $\epsilon-o(1)$.
 
By assumption and the fact that $\mathscr{C}'$ contains all ordered
expansions of graphs in $\mathscr{C}$ we have that the probability
that some and hence every ordered expansion of $G$ satisfies $\psi$ is
at least~$1/2+\epsilon$ when $G \sim \mathscr{G}(n,1/2)$, and at most
$1/2-\epsilon$ when $G \sim \mathscr{G}(n,1/2,k)$.
Now, if~$G \sim \mathscr{G}(n,1/2)$, then the probability that $\phi$
does not define a linear order is $o(1)$, and the probability that
some and hence every ordered expansion of $G$ satisfies $\psi$ is at
least~$1/2+\epsilon$, so the probability that $G$ satisfies $\theta$
is at least~$1/2+\epsilon-o(1)$. On the other hand, if $G \sim
\mathscr{G}(n,1/2,k)$, then the probability that some
and hence every ordered expansion of $G$ satisfies $\psi$ is at most
$1/2-\epsilon$, so the probability that $G$ satisfies~$\theta$ is even
smaller, and $1/2-\epsilon \leq 1/2-\epsilon + o(1)$. It follows that
$\mathscr{D}$ solves the problem with advantange~$\epsilon-o(1)$.
\end{proof}

\section{Concluding Remarks} \label{sec:concludingremarks}

Our main result Theorem~\ref{thm:main} establishes a tight three-way
correspondence between symmetric Boolean threshold circuits, symmetric
LP lifts, and bounded-variable formulas of counting logic. We used
this to derive upper and lower bounds on the size of symmetric LPs and
symmetric Boolean threshold circuits. We also used it to bound the
asymmetric circuit complexity of the planted-clique problem by its
symmetric LP lift complexity, up to a polynomial factor.  There are several
directions for further investigation that are suggested by this work.

The first one concerns the problem of \emph{circuits} vs
\emph{formulas}.  Composing the first inequality in the first item of
Theorem~\ref{thm:main} with the second inequality in the second item
of Theorem~\ref{thm:main}, we get the result that, for every constant
$k$, every symmetric Boolean threshold circuit of size at most $n^k$
translates into an equivalent $\CLogic^{O(k)}$-formula of size
$n^{O(k)}$. This is a size-efficient translation from circuits into
formulas, albeit of different types: the source is a Boolean threshold
circuit and the target is a counting logic formula.  An explicit and
uniform such translation is given in~\cite{AndersonDawar2017}, and
similarly, AND-OR-NOT circuits can translate into families of $\mathrm{L}^k$-formulas, where $\mathrm{L}^k$ stands for the
$k$-variable fragment of first-order logic, without counting.

At first sight, the translation from circuits to formulas is
unexpected as the circuit value problem is P-complete, while the
formula value problem is in NC$^1$.  However, as we noted, these are
not Boolean formulas and the natural translation of formulas of
$\CLogic^k$ or $\mathrm{L}^k$ into circuits necessarily yields
circuits with high fan-out.  This also accords with the well-known
fact that the combined complexity of the logic $\mathrm{L}^k$ is
P-complete for each $k \geq 3$ \cite{Vardi1995}.  An intriguing
question at this point is: What happens when we start the translation
not from a circuit but from a symmetric Boolean (threshold) formula?
Does this correspond to a natural syntactic fragment of the logic
$\mathrm{L}^k$ (or $\CLogic^k$)?  Could such a translation shed light
on the descriptive complexity of NC$^1$? On the NC$^1$ vs P question?
Similar questions can be asked for symmetric bounded-depth threshold
circuits, and other classes of circuits. It is worth noting that both
Theorem~\ref{HellaKolaitisLuosto} and
Theorem~\ref{thm:plantecliquefpc} on the planted-clique problem scale
all the way down to uniform TC$^0$ and $\FOC$. This is so because
Theorem~\ref{HellaKolaitisLuosto2} gives a $\FOC$-formula and, in the
presence of a linear order on the input, $\FOC$ captures uniform
TC$^0$ (see Proposition~12.6 in \cite{ImmermanBook}). We view all this
as motivation for finding a symmetric LP model of symmetric
bounded-depth threshold circuits.

A different line of research that is suggested by our work relates to
the computational complexity of the linear programming feasibility
problem. It is well-known that this problem is P-complete under
logspace reductions and a question posed in~\cite{Anderson:2015},
where it was shown that this problem is in FPC, is whether it is
complete for FPC under (say) FOC reductions.  Our proof of
Lemma~\ref{lem:circuitstolps} gives one route towards answering this
question.  The construction in the proof of the lemma should yield a
first-order interpretation from the threshold circuit value problem to
the LP feasibility problem.  Since the known translations from FPC
into symmetric threshold circuits are first-order interpretations
themselves, the result should follow by composition. The
FPC-completeness of the~LP feasibility problem was also proved by
Pakusa (unpublished manuscript), independently of our work. Pakusa's
construction is very different from ours and raises the question
whether the construction in his proof could yield a different proof of
Lemma~\ref{lem:circuitstolps}. His construction is inspired by the
Sherali-Adams hierarchy of LP relaxations \cite{SheraliAdams1990}
(discussed in Section~\ref{sec:resultsandapps}) and could well provide
a more principled method than ours for constructing LP lifts with
useful properties.

\bigskip
\bigskip
\noindent\textbf{Acknowledgments} First author partially funded by
European Research Council (ERC) under the European Union's Horizon
2020 research and innovation programme, grant agreement ERC-2014-CoG
648276 (AUTAR) and MICCIN grant TIN2016-76573-C2-1P (TASSAT3).  The
second author was partially supported by a Fellowship of the Alan
Turing Institute under the EPSRC grant EP/N510129/1.  
The third author funded by the 
European Union's Horizon 2020 research and 
innovation programme under the 
Marie Sk{\l}odowska-Curie grant agreement No 795936.
Some of the work
reported here was initiated at the Simons Institute for the Theory of
Computing during the programme on Logical Structures in Computation in
autumn 2016.  We are grateful to Matthew Anderson for some very
stimulating discussions on this topic.  In particular, he suggested
using Lemma \ref{lem:alt} to prove the existence of supports. We also
thank Wied Pakusa for sharing his manuscript on the FPC-completeness
of linear programming with us.

\bibliographystyle{plain}
\bibliography{biblio.bib}

\end{document}